\renewcommand*\env@matrix[1][\arraystretch]{%
  \edef\arraystretch{#1}%
  \hskip -\arraycolsep
  \let\@ifnextchar\new@ifnextchar
  \array{*\c@MaxMatrixCols c}}
\newcommand{\setword}[2]{%
  \phantomsection
  #1\def\@currentlabel{\unexpanded{#1}}\label{#2}%
}
\newtheorem*{rep@theorem}{\rep@title}
\newcommand{\newreptheorem}[2]{%
\newenvironment{rep#1}[1]{%
 \def\rep@title{#2 \ref{##1}}%
 \begin{rep@theorem}}%
 {\end{rep@theorem}}}
\def\Z{{\mathbb Z}}
\def\B{{\mathbb B}}
\def\CC{{\mathbb C}}
\def\JJ{{\mathbb J}}
\def\NN{{\mathbb N}}
\def\PP{{\mathbb P}}
\def\UU{{\mathbb U}}
\def\cal{\mathcal}
\def\cB{{\cal B}}
\def\cC{{\cal C}}
\def\cD{{\cal D}}
\def\cE{{\cal E}}
\def\cH{{\cal H}}
\def\cL{{\cal L}}
\def\cO{{\cal O}}
\def\spa{\operatorname{span}}
\def\0ol{{\bar 0}}
\def\1ol{{\bar 1}}
\def\2ol{{\bar 2}}
\def\ol2{{\bar 2}}
\def\3ol{{\bar 3}}
\def\4ol{{\bar 4}}
\def\5ol{{\bar 5}}
\def\6ol{{\bar 6}}
\def\7ol{{\bar 7}}
\def\8ol{{\bar 8}}
\def\9ol{{\bar 9}}
\def\P2Skly{\PP^2_{Skly}}
\def\mult{\operatorname{mult}} 
\def\dom{\operatorname{dom}} 
\def\ker{\operatorname {ker}}
\def\pd{{\operatorname {\partial}}}
\def\Tr{\operatorname {Tr}}
\def\det{\operatorname{det}}
\def\dim{\operatorname{dim}}
\def\Im{\operatorname{Im}}
\def\min{\operatorname{min}}
\def\pd{{\partial}}
\def\rank{\operatorname{rank}}
\def\sup{\operatorname{sup}}
\def\ul1{\operatorname{\underline{1}}}
\def\dirlim{\mathop{\vtop{\baselineskip -100pt\lineskip -1pt\lineskiplimit 0pt
\setbox0\hbox{lim}\copy0\hbox to \wd0{\rightarrowfill}}}\limits}
\def\invlim{\mathop{\vtop{\baselineskip -100pt\lineskip -1pt\lineskiplimit 0pt
\setbox0\hbox{lim}\copy0\hbox to \wd0{\leftarrowfill}}}\limits}
\def\I11{{1 \kern -0.8pt \! \mbox{l}}}
\def\mumu{{\mu\kern-4.2pt\mu}}
\def\bfmu{{\mu\kern-4.2pt\mu}}
\def\2slash{\backslash \! \backslash}
\def\boxtimes{\setbox0\hbox{$\Box$}\copy0\kern-\wd0\hbox{$\times$}}
\theoremstyle{plain}
\newtheorem{theorem}{Theorem}
\newtheorem{theorem*}{Theorem}
\newtheorem{definition}{Definition}
\newtheorem{corollary}{Corollary}
\newtheorem{proposition}{Proposition}
\newtheorem{lemma}{Lemma}
\theoremstyle{remark}
\newcommand{\C}{\mathbb{C}}
\newcommand{\N}{\mathbb{N}}
\numberwithin{equation}{section}
\numberwithin{lemma}{section}
\numberwithin{proposition}{section}
\numberwithin{corollary}{section}
\begin{document}

\title{Band Spectrum Singularities for Schr\"odinger Operators}
\author{Alexis Drouot, Curtiss Lyman}
\date{\today}

\begin{abstract}
      In this paper, we develop a systematic framework to study the dispersion surfaces of Schr{\"o}dinger operators $ -\Delta + V$, where the potential $V \in C^\infty(\mathbb{R}^n,\mathbb{R})$ is periodic with respect to a lattice $\Lambda \subset \mathbb{R}^n$ and respects the symmetries of $\Lambda$. Our analysis combines the theory of holomorphic families of operators of type (A) with the seminal work of Fefferman--Weinstein \cite{feffer12}. It allows us to extend results on the existence of spectral degeneracies past a perturbative regime. As an application, we describe the generic structure of some singularities in the band spectrum of Schr\"odinger operators invariant under the three-dimensional simple, body-centered and face-centered cubic lattices. 
 \end{abstract}

 \maketitle

\section{Introduction}\label{chp:1}

Analyzing the behavior of waves in periodic structures is a central theme in condensed matter physics, electromagnetism and photonics. This includes for instance  electronic conduction: the flow of electrons through a crystal. In the framework of quantum mechanics, these waves solve the time-dependent Schr{\" o}dinger equation
\begin{equation}\label{eq:1}
    i\pd_t \psi = (-\Delta + V)\psi, \qquad \text{where:}
\end{equation}
\begin{itemize}
    \item the potential $V$ is periodic with respect to a lattice $\Lambda$;
    \item the function $\psi$ is the wavefunction of the electron, i.e. $|\psi(t,x)|^2$ is the density of probability of finding the electron at position $x$, at time $t$.
\end{itemize}
Solutions of \eqref{eq:1} can be written as superpositions of time-harmonic waves: functions of the form $e^{-i\mu t}\phi(x)$, where $\phi$ and $\mu$ solve the \textit{eigenvalue problem}
\begin{equation}\label{eq:9a}
    \mu\phi = (-\Delta + V)\phi.
\end{equation}
The equation \eqref{eq:9a} will be the main focus of this work.

Because the potential $V$ is periodic, the operator $-\Delta+V$ has absolutely continuous spectrum on $L^2$, see \cite[Theorem XIII.100]{reed}. The corresponding \textit{generalized eigenstates} are superpositions over $k \in \mathbb{R}^n$ of Floquet--Bloch modes: solutions $\phi(x;k)$ to
\begin{equation}\label{eq:9b}
  \begin{gathered}
  (-\Delta +V)\phi(x;k)  = \mu(k)\phi(x;k), \quad x \in \mathbb{R}^n,\\
    \phi(x+v;k) = e^{ik\cdot v}\phi(x;k), \quad v \in \Lambda.
    \end{gathered}
\end{equation}
For each $k \in \mathbb{R}^n$, the problem \eqref{eq:9b} has a discrete set of solutions $\mu(k)$, which corresponds to the spectrum of $-\Delta+V$ on the space of quasiperiodic functions
\begin{equation}\label{eq:l2k}
    L^2_k = \big\{f \in L^2_\mathrm{loc}(\mathbb{R}^n,\C):\;f(x+v) = e^{ik\cdot v}f(x), \ v \in \Lambda\big\}.
    \end{equation}
The maps $k \mapsto \mu(k)$ are called \textit{dispersion surfaces;} brought together they form the \textit{band spectrum} of $-\Delta+V$. The local properties of these maps control the effective dynamics of wavepackets \cite{allaire}, and singularities in the band spectrum trigger unusual behavior of waves. For instance, Dirac cones -- conical intersection of dispersion surfaces in 2D -- give rise to Dirac-like propagation of wavepackets: this explains the relativistic behavior of electrons observed in graphene  \cite{feffer12,feffer14}.

The mathematical analysis of band spectrum singularities started with the seminal work of Fefferman--Weinstein \cite{feffer12}, who proved genericity of Dirac points in honeycomb lattices. This has sparked various mathematical investigations of spectral degeneracies in other two-dimensional lattices \cite{finite, lieb, photonic, cassier, chaban}; the only three-dimensional work so far is \cite{guo}. These works share a common strategy, split in two parts: 
\begin{itemize}
    \item[(a)] proving results for small potentials via perturbation theory and symmetry arguments; 
    \item[(b)] extending them to generic potentials using an analyticity argument due to \cite{feffer12}.
\end{itemize}

To prove (b), the above works have referred to  \cite{feffer12} for details. This motivates the development of a general formalism under which  one can apply the Fefferman--Weinstein theory. In particular, \cite{guo} proved that the band spectrum of Schr\"odinger operator with \textit{small} potentials, periodic with respect to the body-centered cubic lattice and invariant under the octahedral group, presents a three-fold Weyl point -- see Definition \ref{def:1} and Figure \ref{fig:1}. In \cite[\S 5.2]{guo}, they \textit{conjectured} that this extends to \textit{large} potentials; we prove this statement here using the theory of holomorphic families of operators of type (A) \cite{kato, rellich}. In addition, we provide applications of our approach to the generic band spectrum singularities of 3D Schr\"odinger operators invariant under the simple, face-centered and body-centered cubic lattices.

\subsection{Main results} We formulate here our two main results. The first one, together with the results from Section \ref{sec:3.4}, gives us a systematic framework for the generic analysis of dispersion surfaces of Schr\"odinger operators of the form 
\begin{equation}\label{eq:Hz}
    H_z = -\Delta + zV,
\end{equation}
where the potential $V$ is assumed to be periodic with respect to a lattice $\Lambda$. Our second main result then applies this framework to study the band spectrum singularities of Schr\"odinger operators invariant under cubic lattices.

\begin{theorem}\label{thm:1new}
Let $z \in \mathbb{R}$, $\mu(z)$ an $L^2_K$-eigenvalue of $H_z$, $\pi(z): L^2_K \rightarrow L^2_K$ be the corresponding eigenprojector, and $\cE(z)$ be its range. There exist $\varepsilon,\delta,C > 0$ such that for $\|\kappa\|< \varepsilon$, the $L^2_{K+\kappa}$ eigenvalues of $H_z$ in $\B_\delta(\mu(z))$ satisfy
\begin{equation}\label{eq:det}
    \det\big((u(z)-\mu) + M(z,\kappa) + R(\mu,\kappa)\big)|_{\cE(z)}=0, \qquad \qquad \text{where:}
\end{equation}
\begin{itemize}
    \item $M(z,\kappa)$ is the operator $=-\pi(z) (2i\kappa\cdot\nabla)\pi(z)$ on $\cE(z)$; and
    \item $R(\mu,\kappa)$ is an operator on $\cE(z)$ that satisfies $\|R(\mu,\kappa)\| \leq C\|\kappa\|^2$
\end{itemize}

Furthermore, if $\mu(z)$ depends analytically on $z$, then the characteristic polynomial of $M(z,\kappa)$ 
also depends analytically on $z$.
\end{theorem}

In particular, Theorem \ref{thm:1new} reduces the Floquet-Bloch problem \eqref{eq:floquet} to the finite-dimensional characteristic value problem \eqref{eq:det}. In addition, the theory of holomorphic families of operators of type (A) (which we review in \S\ref{chp:2}) ensures that we can represent all the eigenvalues of $H_z$ by analytic functions on $\mathbb{R}$. This means that if $\mu_0$ is an eigenvalue of $H_{z_0}$ for some $z_0 \in \mathbb{R}$, then there exists a function $\mu(z)$, analytic on $\mathbb{R}$, such that $\mu(z)$ is an eigenvalue of $H_z$ for all $z \in \mathbb{R}$ and $\mu(z_0) = \mu_0$. 
Later, we will classify the types of band-spectrum degeneracies depending on which coefficients of the characteristic polynomial of $M(z,\kappa)$ vanish. Because analytic functions either vanish identically or only on a discrete set, the second part of Theorem \ref{thm:1new} will guarantee that the spectral degeneracies identified for small values of $z$ must persist for generic values of $z$. 

We apply Theorem \ref{thm:1new} to Schr{\" o}dinger operators $H_z$ with a potential $V$ invariant under the symmetries of the lattice: for instance, $2\pi/3$-invariant and even in the case of the honeycomb lattice; or invariant under the octahedral group for cubic lattices. This is because additional symmetries come with higher multiplicities of eigenvalues, which in turn translate to singularities in the band spectrum. At the dynamical levels, waves initially localized in frequency near these high-degeneracy momenta exhibit anomalous propagation. For instance, for Schr\"odinger operators invariant under the honeycomb lattice, they are two-scale functions whose envelop effectively solve a Dirac equations \cite{feffer14}. As an explicit example, we study the band spectrum singularities of Schr\"odinger operators invariant under the three 3D cubic lattices: the simple, body-centered and face-centered cubic lattices -- see \S\ref{chp:3}-\ref{chp:4} for precise definitions and Figure \ref{fig:2} for visual representations of their Brillouin zone and of the results of Theorem \ref{thm:2}.

\begin{figure}[!b]
    \centering
    \begin{tabular}{|c|c|c|c|c|}
        \hline
        &  & \multicolumn{2}{|c|}{} & \\[-1em]
        Lattice & Simple & \multicolumn{2}{|c|}{Body-centered} & Face-centered \\
        \hline
        & & \multicolumn{2}{|c|}{} & \\
        Brillouin zone & \includegraphics[width=0.17\textwidth]{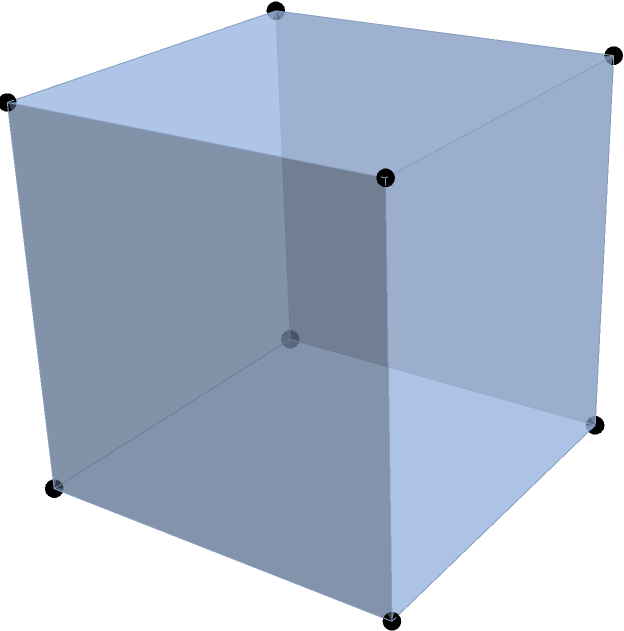} &\multicolumn{2}{|c|}{\includegraphics[width=0.17\textwidth]{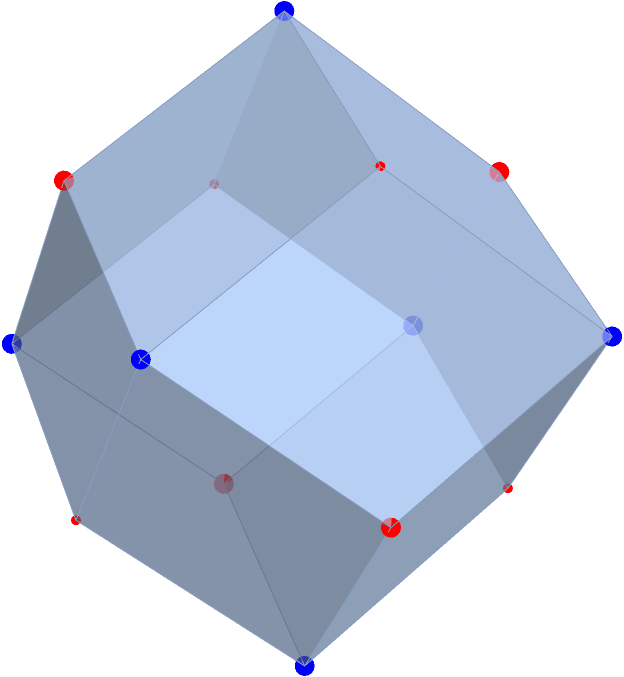}} & \includegraphics[width=0.17\textwidth]{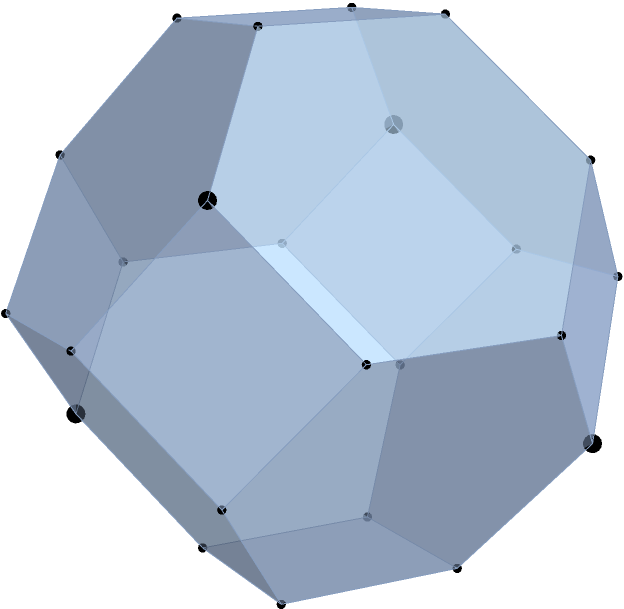} \\
        & & \multicolumn{2}{|c|}{} & \\
        \hline
        Degeneracy type &  Quadratic & \multicolumn{2}{|c|}{3-fold Weyl, quadratic}  & Basin   \\
        \hline
    \end{tabular}
    \caption{Spectral degeneracies in cubic lattices.}\label{fig:2}
\end{figure}

\begin{definition}\label{def:1} Let $E \in \mathbb{R}, K \in \mathbb{R}^3$. We say that a Schr\"odinger operator $H$ has:
\begin{itemize}
    \item An $m$-fold quadratic point at $(K,E)$ if $E$ is a $L^2_K$-eigenvalue of $H$ of multiplicity $m > 1$ and the Floquet-Bloch problem \eqref{eq:floquet} has $m$ solutions $\mu_1(k), \dots,  \mu_m(k)$:
    \begin{equation}
        \mu_j(K+\kappa) = E + \cO(\|\kappa\|^2), \quad j =1, \dots, m, \quad \kappa \rightarrow 0.
    \end{equation}
    
    \item A (two-fold) basin point at $(K,E)$ if $E$ is a double $L^2_K$-eigenvalue of $H$, and there exists some $v \neq 0 \in \mathbb{R}^3$ such that for $\kappa$ satisfying $v\cdot \kappa \not = 0$, the Floquet-Bloch problem \eqref{eq:floquet} has 2 solutions $\mu_+(k), \mu_-(k)$:
    \begin{equation}
        \mu_\pm(K+\kappa) = E \pm |v \cdot \kappa| + \cO(\|\kappa\|^2), \quad \kappa \rightarrow 0.
    \end{equation}

    \item A Weyl point at $(K,E)$ if $E$ is a double $L^2_K$-eigenvalue of $H$, and there exists some $\alpha \neq 0 \in \mathbb{R}$ such that the Floquet-Bloch problem \eqref{eq:floquet} has 2 solutions $\mu_+(k), \mu_-(k)$:
    \begin{equation}\label{eq:9c}
        \mu_\pm(K+\kappa) = E \pm \alpha \|\kappa\| + \cO(\|\kappa\|^2), \quad \kappa \rightarrow 0.
    \end{equation}
    
    \item A three-fold Weyl point at $(K,E)$ if $E$ is a triple $L^2_K$-eigenvalue of $H$, and there exists some $\alpha \neq 0 \in \CC$ such that the Floquet-Bloch problem \eqref{eq:floquet} has 3 solutions $\mu_1(k), \mu_2(k), \mu_3(k)$:
    \begin{equation}\label{eq:9d}
        \mu_j(K+\kappa) = E + \lambda_{\alpha,j}(\kappa) + \cO(\|\kappa\|^2), \quad j=1,2,3, \quad \kappa \rightarrow 0,
    \end{equation}
    where $\lambda_{\alpha,j}(\kappa)$ are the three roots of the polynomial $\lambda^3-4 |\alpha|^2 \|\kappa\|^2 \lambda + 16 \Im(\alpha^3)\kappa_1\kappa_2\kappa_3$.     
\end{itemize}
\end{definition}

\begin{figure}[b]
    \centering
    \includegraphics[width=0.7\textwidth]{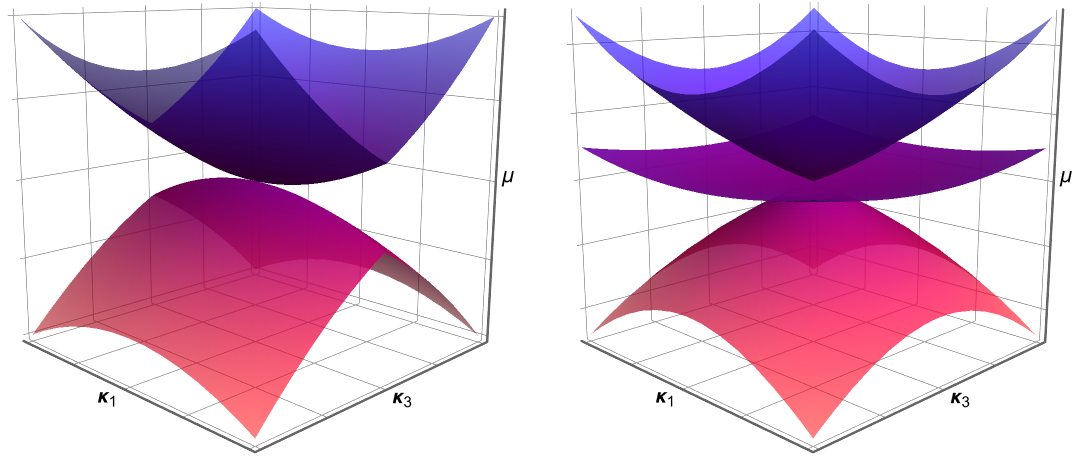}
    \caption{\label{fig:1} Possible cross sections of a basin point (left) and a three-fold Weyl point (right).}
    \label{fig:CubicDispersionSurfaces}
\end{figure}

Our second main result shows that Schr\"odinger operators periodic with respect to the cubic lattices and invariant under the octahedral group admit such degenerate points in their band spectrum.

\begin{theorem}\label{thm:2} For generic potentials $V \in C^\infty(\mathbb{R}^3,\mathbb{R})$ invariant under the octahedral group and periodic with respect to a cubic lattice $\Lambda \subset \mathbb{R}^3$, and generic values of $z \in \mathbb{R}$, the band spectrum of $H_z = -\Delta+z V$ has at least:
    \begin{enumerate}[label=(\arabic*)]
        \item[(i)] Two three-fold quadratic points if $\Lambda$ is the simple cubic lattice;
        \item[(ii)] One three-fold Weyl point as well as one two-fold and one three-fold quadratic point if $\Lambda$ is the body-centered cubic lattice;
        \item[(iii)] One basin point if $\Lambda$ is the face-centered cubic lattice.
    \end{enumerate}
\end{theorem}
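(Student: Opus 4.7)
The plan is to follow the two-step pattern described in the introduction: first prove the three statements perturbatively for small $\|V\|$ using representation theory of the octahedral group on Floquet--Bloch eigenspaces at high-symmetry momenta, then use Theorem \ref{thm:1} to bootstrap to generic $z \in \R$. For each of the three lattices $\Lambda$ I fix a symmetry point $K$ in the Brillouin zone whose little group $G_K$ is the full octahedral group (after translation by a reciprocal lattice vector when necessary). The operator $H_z$ restricted to $L^2_K$ is a closed, self-adjoint family of operators on $\cH=L^2_K$ depending analytically on $z$, with domain $\cD = H^2_K$, and $H_z' = V$ is bounded; hence Assumption \ref{assump:1} is satisfied with $T(z)=H_z|_{L^2_K}$.

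\textbf{Perturbative step.} At $z=0$, the eigenvalues of $-\Delta$ on $L^2_K$ are the values $|K+G|^2$ as $G$ ranges over the reciprocal lattice $\Lambda^*$, with multiplicity equal to the size of the orbit $G_K\cdot G$. I will decompose each such eigenspace into isotypic components under $G_K$. Restricting the octahedral group to the little group structure appropriate for each $K$, one gets irreducible components of dimensions $1$, $2$ (the representation $E$) and $3$ (the representations $T_1$, $T_2$). For the simple cubic lattice I exhibit a symmetry point at which two distinct $3$-dimensional isotypic components occur; for the body-centered cubic lattice I recover the symmetry configuration used in \cite{guo} giving simultaneously a $T$-type and an $E$-type component at the same energy, plus a separate $3$-dimensional isotypic giving the quadratic Weyl-free degeneracy; for the face-centered cubic lattice I pick a point whose little group contains a $2$-dimensional irrep. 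Computing to first order in $zV$ via degenerate perturbation theory on each isotypic component, the normal form of $H_z-E$ restricted to the eigenspace is determined by selection rules: octahedral invariance forces the $T$-component to give either a three-fold quadratic point (when linear-in-$\kappa$ couplings vanish) or a three-fold Weyl point of the form in \eqref{eq:9c}; the $E$-component gives a valley point; and one computes explicitly that the coefficient $\alpha = \alpha(V) \neq 0$ (respectively $v=v(V) \neq 0$) is a nontrivial Fourier coefficient of $V$, so non-vanishing for generic $V$.

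\textbf{Analytic continuation.} For each fixed $V$ in the open dense set where the perturbative non-degeneracy conditions hold, the multiplicity $m$ of the relevant eigenvalue $\mu(z)$ of $T(z)=H_z|_{L^2_K}$ is constant on some interval $(0,\varepsilon)$. Theorem \ref{thm:1} then extends $\mu(z)$ and its eigenprojector $\pi(z)$ analytically to a complex neighborhood of $\R\setminus D$ for a discrete exceptional set $D\subset \R$, keeping the multiplicity $m$. In particular the degeneracy persists for all real $z\notin D$. The normal form coefficients $\alpha(z)$, $v(z)$ in \eqref{eq:9c} and in the valley point definition are matrix elements of $\pi(z)\circ \partial_\kappa H_z(K)\circ\pi(z)$; by the analyticity of $\pi(z)$, these are real-analytic in $z$ on $\R \setminus D$ and non-zero at $z=0^+$, so their zero set $Z\subset\R\setminus D$ is discrete. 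For $z\in\R\setminus(D\cup Z)$ the required quadratic/linear dispersion persists, proving the three claims.

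\textbf{Main obstacle.} The bulk of the work is representation-theoretic: for each of the three lattices I must identify a symmetry point $K$ and an energy level of $-\Delta$ at which the isotypic decomposition under the little group produces the correct combination of irreducibles ($T\oplus T$ for case (i), $T$ together with a $T\oplus E$ energy for case (ii), $E$ for case (iii)), and then verify that the first-order Fourier matrix elements of $V$ have the precise algebraic structure leading to the normal forms of Definition \ref{def:1}. Concretely, one has to show that invariance of $V$ under the octahedral group kills exactly the matrix elements that would destroy the quadratic/linear structure, while the surviving matrix elements are generically nonzero. Case (ii) is the most delicate, since one must reproduce the specific polynomial $\lambda^3-4|\alpha|^3\kappa^2\lambda+16\operatorname{Im}(\alpha^3)\kappa_1\kappa_2\kappa_3$; this is precisely where the computation of \cite[\S 5.2]{guo} is invoked, and extending it to large $z$ is exactly what Theorem \ref{thm:1} provides once $\alpha(z)\neq 0$ is checked along an analytic branch.
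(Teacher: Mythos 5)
Your high-level architecture matches the paper's: perturbation theory at a high-symmetry quasi-momentum $K$ for small $z$, Theorem \ref{thm:1} to propagate the multiplicity to generic $z$, and analyticity of the normal-form coefficients (via the eigenprojector) to propagate their non-vanishing. However, the proposal has two genuine problems. First, the entire representation-theoretic core --- identifying the points $K$, decomposing the eigenspace of $-\Delta$, and verifying the normal forms --- is announced (``I will decompose'', ``I exhibit'', ``one computes explicitly'') but never carried out; you yourself flag it as the ``main obstacle''. Since this is where all three conclusions are actually established, the proposal is a strategy outline rather than a proof.

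Second, and more seriously, the starting premise is wrong in a way that would defeat cases (ii) and (iii). You assert that each chosen $K$ has little group equal to the full octahedral group. That fails at exactly the points where the interesting singularities live: for the body-centered cubic lattice the Weyl point sits at $K=(\pi,\pi,\pi)$, whose $K$-invariant subgroup does not contain $-I$ (one checks $-2K\notin\Lambda^*$), and for the face-centered cubic lattice the valley point sits at a $W$-type vertex with a small, non-inversion-symmetric stabilizer. This is not incidental: if $-I$ were $K$-invariant, Lemma \ref{lem:mkentries} (or a direct parity argument) would force every matrix element $\langle\phi,\nabla\psi\rangle$ on the eigenspace to vanish, so $M(z,\kappa)\equiv 0$ and one could only ever obtain quadratic points --- the linear term $v\cdot\kappa$ of a valley point and the chirality term $\Im(\alpha^3)\kappa_1\kappa_2\kappa_3$ of a Weyl point would be impossible. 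The paper avoids this by not using the full little group at all: it works with an abelian subgroup $G_0$ acting simply transitively on $[K]$ (Assumption \ref{assump:3}), which splits the free eigenspace into one-dimensional character spaces $L^2_{K,\omega}$ with explicit plane-wave bases, uses extra (non-abelian or antiunitary) symmetries only to force the needed coincidences of the $\mu_\omega(z)$, and then computes $M(z,\kappa)$ entry by entry. Relatedly, your genericity in $V$ is misplaced: the hyperplane conditions on $V$ come from requiring the first-order splittings $\mu_\omega'(0)=\sum_j\omega^j V_{m(j)}$ to separate inequivalent characters, whereas the coefficient $\alpha$ at $z=0$ is a purely geometric quantity ($\pi i+\cO(z^2)$), not a Fourier coefficient of $V$.
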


In Theorem \ref{thm:2}, ``generic potentials $V \in C^\infty(\mathbb{R}^3,\mathbb{R})$" means all of $C^\infty(\mathbb{R}^3,\mathbb{R})$ but a finite union of hyperplanes, and ``generic values of $z \in \mathbb{R}$" means all of $\mathbb{R}$ away from a discrete set. In particular, Theorem \ref{thm:2} proves the conjecture formulated in \cite[\S5.2]{guo}.

\subsection{Proofs}\label{sec:1.2} 

The proof of Theorem \ref{thm:1new}, which we cover in \S \ref{chp:3}, is directly inspired by \cite{feffer12}, but the technical core of its proof relies on
different arguments. To study the dispersion surfaces near some quasi-momentum $K \in \mathbb{R}^n$, i.e. the $L^2_{K+\kappa}$-eigenvalues of $H_z$ for $\kappa$ small, both approaches instead analyze the $L^2_K$-eigenvalues of the unitarily equivalent operator $H_{z,\kappa} = -(\nabla + i\kappa)^2+zV$. Using local Lyapounov–Schmidt procedures, which we reformulate as a Schur complement argument, we reduce the problem to the finite-dimensional case described by the effective equation \eqref{eq:det}. 

The challenge is to show that the band spectrum singularities that emerge for small $z$ actually persists for all but discrete values of $z$. Fefferman and Weinstein constructed a vector-valued analytic function $F$, specific to the honeycomb setup, whose zeroes characterize where the multiplicity of an eigenvalue $\mu(z)$ changes. Because the zeroes of $F$ are discrete, this allowed them to show that the multiplicity of $\mu$ is constant away from a discrete set. We instead rely on the theory of holomorphic families of operators of type (A), and specifically a theorem due to \cite{rellich} (Theorem \ref{thm:typeA}). It simplifies the Fefferman--Weinstein procedure by guaranteeing that the eigenvalues of $H_z$ can be represented altogether by analytic functions. Combined this with a Schur complement argument, this implies that the multiplicity of $\mu$ is constant away from a discrete set. 
Furthermore, the corresponding eigenprojector can be extended to an analytic function on $\mathbb{R}$ as well, from which the analyticity of the characteristic polynomial of $M(z,\kappa)$ follows (see Proposition \ref{prop:stronger}).

The proof of Theorem \ref{thm:2}, which we cover in \S \ref{chp:4}, consists of four main steps, again rooted in the work of Fefferman--Weinstein \cite{feffer12}. Given the family of operators $H_z$, where $V$ is periodic and symmetric with respect to a lattice $\Lambda \subset \mathbb{R}^n$ and $K \in \mathbb{R}^n$:
\begin{itemize}
    \item[\textbf{1.}] We first describe the multiplicities of $L^2_K$-eigenvalues of $-\Delta+z V$ for small values of $z$. This combines a perturbation scheme starting from the explicity diagonalizable operator $H_0 = -\Delta$, with a representation-theoretic argument that relies on the specific symmetries of $V$. Multiplicities higher than one corresponds to intersections of dispersion surfaces at $K$ hence to band spectrum singularities near $K$ (see Lemmas \ref{lem:evectors} and \ref{lem:perturbz}). 
    \item[\textbf{2.}] The theory of holomorphic families of operators of type (A), and specifically Proposition \ref{prop:stronger}, ensure that the multiplicities of $L^2_K$-eigenvalues of $H_z$ are actually constant for generic values of $z$ -- in particular, they coincide with those found in Step 1 for small values of $z$. This extends the results of Step 1 to generic values of $z$. 
    \item[\textbf{3.}] We then apply Theorem \ref{thm:1new} to perturb the problem again, this time with respect to $\kappa$, to produce effective equations for the $L^2_{K+\kappa}$-eigenvalues of $H_z$ for $\kappa$ small -- see e.g. \eqref{eq:9d} in the case of three-fold Weyl points.     
    \item[\textbf{4.}] Lastly, we derive expressions for the coefficients of the effective equation in terms of the eigenprojector $\pi(z)$ associated to $\mu(z)$. This allows us to show that these coefficients are non-zero for generic values of $z$, and to therefore describe qualitatively the band spectrum singularities (see Lemma \ref{lem:mkentries}).
\end{itemize}

\subsection{Relation to existing work} 

The goal of this paper is to develop a unified framework for the generic analysis of band spectrum singularities in periodic Schr\"odinger operators. Fefferman--Weinstein \cite{feffer12} showed that honeycomb Schr\"odinger operators generically have Dirac cones in their band spectrum, i.e. conical singularities; multiple related analyses for other models followed. For instance, in two dimensions, \cite{photonic,cassier} generalized the result of \cite{feffer12} to photonic operators; \cite{lieb} showed that Schr\"odinger operators invariant under the Lieb lattice have quadratic degeneracies; \cite{chaban} studied the stability of these degeneracies and showed they split to tilted Dirac cones under parity-breaking perturbations. In three dimensions, \cite{guo} showed that Schr\"odinger operators invariant under the body-centered cubic lattice admit three-fold Weyl points in their band spectrum.

These papers provide a fully detailed analysis for small values of $z$ and $\kappa$ (see Steps 1 and 3 in the proof of Theorem \ref{thm:1new} outlined in Section \ref{sec:1.2}), but later refer to \cite{feffer12} for details about extending their results to generic values of $z$. However, their setup is technically different: for instance, \cite{lieb,chaban} identifies quadratic (instead of linear) singularities and \cite{guo} works with triply (instead of doubly) degenerate eigenvalues. Our paper aims to exempt the above works from providing further details, by proving a general statement about the behavior of eigenvalues and eigenprojectors of lattice-invariant Schr\"odinger operators: Theorem \ref{thm:1new}. 

\subsection{Future projects} The investigation of Dirac cones in honeycomb structures \cite{feffer12} sparked a multitude of mathematical works beside band spectrum singularities: behavior of wavepackets \cite{feffer14}, tight-binding analysis \cite{FeffermanLeeWeinsteinTB}, emergence of edge states \cite{FeffermanLeeWeinstein,DrouotEdgestates,DrouotWeinstein}, propagation of edge states in Dirac systems \cite{BalBeckerDrouot1,BalBeckerDrouot2,DrouotDirac,BalDirac,HXZ}, computation of topological invariants \cite{DrouotBEChoney,Ammari2} and Dirac cones in other setups \cite{berk,Ammari1,WeiJunshanHai}. This showcases the importance of Dirac cones in mathematical physics.

The three-dimensional analogue of Dirac cones are Weyl points (see Definition \ref{def:1}). We believe that they are the only stable type of spectral degeneracies in three dimensions -- see \cite{DrouotWeyl} for an analysis on discrete models. But to the best of our knowledge, one has yet to produce a continuum Schr\"odinger operator with Weyl points. We plan to use the current paper as a stepping stone. Since band spectrum singularities other than Weyl points are believed to be unstable, they should generically split to Weyl points under perturbations. So adding e.g. a parity-breaking term to the Schr\"odinger operators discussed in Theorem \ref{thm:2} should produce Weyl points. This belief is reinforced by a two-dimensional analysis of Chaban--Weinstein \cite{chaban}, who demonstrated that the quadratic degeneracies of Schrodinger operators invariant under square lattices become Dirac cones after adding an odd potential. Constructing Schr\"odinger operators with Weyl points has the potential to spark a number of mathematical investigations, such as wavepacket analysis \cite{feffer14}, study of surface states (the 3D analogues of edge states), and computation of topological invariants \cite{Monaco}.

In \cite{FeffermanLeeWeinsteinTB}, the authors show that high-contrast (large $z$) honeycomb Schr\"odinger operators converge, in an appropriate sense, to their tight binding limit: the Wallace model. As an application, they obtain that the set of values of $z$ so that $H_z$ does not have a Dirac point, is at worst finite. It would be enlightening to perform a similar tight-binding analysis in the case of the three cubic lattices mentioned above, with a tight-binding limit given by the graph Laplacian.

\subsection{Acknowledgement} We gratefully acknowledge support from the National Science Foundation DMS-2054589 and DMS-2439949.

\section{Spectra of analytic families of operators}\label{chp:2}

\subsection{Holomorphic families of type (A)}\label{sec:2.1} 
In order to find interesting dispersion surfaces of the operator $H$ in \eqref{eq:floquet}, we look at 
\begin{equation}
    H_z = -\Delta + z V
\end{equation}
on $L^2_k$  for varying $k$ and generic values of $z \in \mathbb{R}$. For $z$ close to $0$, we can understand the spectrum of $H_z$ using perturbation theory. We rely on analyticity to analyze $H_z$ for $z$ far from $0$, and in particular the theory of {\it holomorphic families of operators of type (A)}, which we briefly review following the work of \cite{kato}.

\begin{definition}\label{def:typeA}
Let $X,Y$ be Banach spaces and let $U \subset \CC$ be an open set. A family of closed operators $T(z):X \rightarrow Y$ for $z \in U$ is said to be {\it holomorphic of type (A)} if
\begin{enumerate}[label = (\arabic*)]
    \item $\dom (T(z)) := \cD$ is independent of $z$;
    \item For all $\phi \in \cD$, $T(z)\phi$ is holomorphic on $U$.
\end{enumerate}
Furthermore, if $\cH = X = Y$ is a Hilbert space and $U \subset \CC$ is symmetric with respect to the real axis, we say the family $T(z)$ is self-adjoint if for all $z \in U$, $T(z)$ is densely defined and $T(z)^* = T(\overline{z})$. 
\end{definition}

An immediate consequence of this definition is that for any $z_0 \in U$ and $\phi \in \cD$, $T(z)\phi$ has a Taylor expansion near $z_0$:
\begin{equation}\label{eq:taylor}
    T(z)\phi = T(z_0)\phi + (z-z_0)T'(z_0)\phi + \frac{1}{2}(z-z_0)^2 T''(z_0)\phi+\cdots.
\end{equation}
Moreover,
\begin{itemize}
    \item This expansion converges in a disk for all $|z-z_0| < r$, independent of $\phi$;
    \item The operators $T^{(n)}(z)$ defined by $T^{(n)}(z)\phi = \frac{d^n (T(z)\phi)}{dz^n}$ are linear.
\end{itemize}

Another important consequence of Definition \ref{def:typeA} is for $z_0,z_1, z_2 \in U$, $|z_1 - z_2|$ sufficiently small, the operator $T(z_1) - T(z_2)$ is {\it relatively bounded} with respect to the resolvent $(T(z_0) - \mu)^{-1}$:

\begin{lemma}\label{lem:relativebound}
Let $T(z):X \rightarrow Y$ be a holomorphic family of type (A) for $z \in U$ with domain $\cD$. Then for any $z_0 \in U$ and $\varepsilon > 0$, there exists $\delta > 0$ such that all $|z_1-z_2|< \delta$, $ \phi \in \cD$, and $\mu \in \rho(T(z_0))$:
\begin{enumerate}
    \item $\|T(z_1)\phi-T(z_2)\phi\| \leq \varepsilon\left (\|\phi\|+\|T(z_0)\phi\|\right)$;
    \item The operator
    \begin{equation}
        (T(z_0) - \mu)^{-1}(T(z_1) - T(z_2))
    \end{equation}
    is bounded as an operator on $\cD$.
\end{enumerate}
\end{lemma}
\begin{proof}
(1) We again turn $\cD$ into a Banach space by introducing the graph norm $\|\phi\|_{T(z_0)} = \|\phi\|+\|T(z_0)\phi\|$, (completeness of $\cD$ follows from the fact that $T(z_0)$ is a closed operator). Then $T(z)$ is closed on $\cD$ with respect to this new norm for all $z \in U$, and so by the closed graph theorem, $T(z)$ is bounded, say by $C(z)$. Choose $r > 0$ such that $\overline{\B_r(z_0)} \subset U$; then for any fixed $\phi \in \cD$, $T(z)\phi$ is analytic and in particular continuous, and so 
\begin{equation}
    \sup_{z \in \overline{\B_r(z_0)}} \|T(z)\phi\|< \infty
\end{equation}
by compactness of $\overline{\B_r(z_0)}$. Thus, we can apply the uniform boundedness principle to the family of bounded operators $\{T(z)\}_{z \in \overline{\B_r(z_0)}}$ to conclude that 
\begin{equation}
    \sup_{z \in \overline{\B_r(z_0)}} \|T(z)\|_{\cL((\cD,\|\cdot\|_{T(z_0)}),Y)} = C < \infty.
\end{equation}

Again let $\phi \in \cD$ be fixed; we can then use Cauchy's integral formula to compute the following bound on $\|T(z_1)\phi-T(z_2)\phi\|$ for $z_1,z_2 $ such that $|z_1-z_0|,|z_2-z_0| \leq r/2$:
\begin{align}
    \|T(z_1)\phi-T(z_2)\phi\| & = \left\| \frac{1}{2\pi i}\oint_{\pd \B_r(z_0)}\frac{T(\zeta)\phi}{\zeta-z_1} - \frac{T(\zeta)\phi}{\zeta-z_2} d\zeta \right\| \\
    & = \left\| \frac{1}{2\pi i}\oint_{\pd \B_r(z_0)}\frac{z_2-z_1}{(\zeta-z_1)(\zeta-z_2)}T(\zeta)\phi\, d\zeta \right\| \\
    & \leq \frac{1}{2\pi}\frac{4}{r^2} C(2\pi r) |z_1-z_2|\,\|\phi\|_{T(z_0)}\\
    & = \frac{4C}{r}|z_1-z_2|\,\|\phi\|_{T(z_0)}.
\end{align}

Therefore, if we let 
\begin{equation}
    \delta = \min \left\{\frac{r}{2},\frac{\varepsilon r}{4C}\right\},
\end{equation}
then for all $\phi \in \cD$ and $z_1,z_2$ such that $|z_1-z_2| < \delta$,
\begin{equation}
    \|T(z_1)\phi-T(z_2)\phi\| \leq \varepsilon\|\phi\|_{T(z_0)} = \varepsilon\left (\|\phi\|+\|T(z_0)\phi\|\right).
\end{equation}

(2) Let $\mu \in \rho(T(z_0))$; then $(T(z_0) - \mu)^{-1}$ is a bounded linear operator from $Y$ to $\cD$, and consequently $(T(z_0) - \mu)^{-1}T(z)$ is a holomorphic family of type (A) for $z \in U$ with domain $\cD$. Thus, by repeating the arguments in part (1), we deduce that 
\begin{equation}
    \sup_{z \in \overline{\B_r(z_0)}} \|(T(z_0) - \mu)^{-1}T(z)\|_{\cL((\cD,\|\cdot\|_{T(z_0)}),Y)} = C\|(T(z_0) - \mu)^{-1}\|.
\end{equation}
Consequently, by shrinking $\delta$ if necessary so that 
\begin{equation}
    \delta \leq \frac{\varepsilon r}{4C\|(T(z_0) - \mu)^{-1}\|},
\end{equation}
we conclude that for $\phi \in \cD$ and $z_1,z_2$ such that $|z_1-z_2| < \delta$,
\begin{equation}\label{eq:relativebound}
    \|(T(z_0) - \mu)^{-1}(T(z_1)-T(z_2))\phi\| \leq  \varepsilon\left (\|\phi\|+\|(T(z_0) - \mu)^{-1}T(z_0)\phi\|\right).
\end{equation}
However, also observe that
\begin{align}
    \|(T(z_0) - \mu)^{-1}T(z_0)\phi\| & = \|(T(z_0) - \mu)^{-1}(T(z_0) - \mu) + \mu(T(z_0) - \mu)^{-1}\phi\| \\
    & = \|\phi+ \mu(T(z_0) - \mu)^{-1}\phi\| \\
    & \leq \left(1+ \mu\|(T(z_0) - \mu)^{-1}\|\right)\|\phi\|.
\end{align}
Plugging this back into \eqref{eq:relativebound}, we get that
\begin{equation}
    \|(T(z_0) - \mu)^{-1}(T(z_1)-T(z_2))\phi\| \leq \varepsilon \left(2+ \mu\|(T(z_0) - \mu)^{-1}\|\right)\|\phi\|.
\end{equation}
Therefore, for $z_1,z_2$ such that $|z_1-z_2| < \delta$, $(T(z_0) - \mu)^{-1}(T(z_1) - T(z_2))$ is bounded as an operator on $\cD$.
\end{proof}

We shall see in \S\ref{chp:3} that the family of operators $H_z$ defined in \eqref{eq:Hz} is a self-adjoint holomorphic family of type (A) on $L^2_k$. For our purposes, one of the most important results for such families is the following theorem due to \cite{rellich}.

\begin{theorem}\label{thm:typeA}
Let $\cH$ be a Hilbert space and let $T(z)$ be a self-adjoint holomorphic family of type (A) on $\cH$, defined on a neighborhood $U$ of an interval $I_0$ of the real axis. Furthermore, assume that $T(z)$ has a compact resolvent for $z \in U$. Then, there exists a  sequence of scalar-valued functions $(\mu_n(z))_{n \in \NN}$ and a sequence of vector-valued functions $(\phi_n(z))_{n \in \NN}$, all analytic on $I_0$, such that for $z \in I_0$, $(\phi_n(z))_{n \in \NN}$ form a complete orthonormal basis of eigenvectors of $T(z)$, with corresponding eigenvalues $(\mu_n(z))_{n \in \NN}$.
\end{theorem}

\subsection{Variation of eigenvalues}\label{sec:2.3}
In addition to the above tools, the proof of our main theorem requires some techniques from the theory of variation of eigenvalues. In this section, we will restrict our attention to families of operators $T(z)$ satisfying the hypotheses of Theorem \ref{thm:typeA}; namely that $T(z)$ is a self-adjoint holomorphic family of type (A) with compact resolvent, defined on a neighborhood $U$ of an interval $I_0$ of the real axis. We shall also let $\phi$ and $\mu$ be analytic functions on $U$ such that $\phi(z)$ and $\mu(z)$ are a unit length eigenvector and eigenvalue, respectively, of $T(z)$ for all $z \in U$, whose existence is guaranteed by Theorem \ref{thm:typeA}.

\begin{lemma}\label{lem:eigenproj}
Let $z_0 \in U$; there exist $\varepsilon, \delta > 0$ such that:
\begin{enumerate}[label=(\arabic*)]
    \item $\mu_0: = \mu(z_0)$ is the only eigenvalue of $T(z_0)$ in $\B_\varepsilon(\mu_0)$,
    \item for every $z \in \B_\delta(z_0)$, $T(z)$ has no eigenvalue on $\pd \B_\varepsilon(\mu_0)$,
    \item for every $z \in \B_\delta(z_0)$, the operator
    \begin{equation}\label{eq:eigenproj}
        \pi(z)  := -\frac{1}{2\pi i}\oint_{\pd \B_\varepsilon(\mu_0)} (T(z)-\lambda)^{-1}d\lambda
    \end{equation}
    is an analytic family of projectors, whose rank is independent of $z$.
\end{enumerate}
\end{lemma}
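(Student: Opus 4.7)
The three conclusions are established in sequence, with part (1) providing the spectral isolation needed for contour integrals, part (2) securing well-definedness along the contour, and part (3) extracting analyticity and rank-constancy of the projector.

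For part (1), I need to show $\mu_0$ is isolated in $\sigma(T(z_0))$. Integrating Assumption \ref{assump:1}(iii) yields the Lipschitz bound $\|T(z) - T(w)\| \leq C|z-w|$ for all $z, w \in \CC$, so $T(z_0)$ differs from the self-adjoint operator $T(\mathrm{Re}\, z_0)$ by a bounded operator. Since $T(\mathrm{Re}\, z_0)$ has purely discrete spectrum by Assumption \ref{assump:1}(ii), a Neumann-series argument applied to the resolvents of $T(\mathrm{Re}\, z_0)$ on bounded regions of its resolvent set shows that $\sigma(T(z_0))$ is itself discrete. Thus $\mu_0$ is isolated, and I may choose $\delta > 0$ so that $\overline{\B_\delta(\mu_0)}$ meets $\sigma(T(z_0))$ only at $\mu_0$.

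For part (2), stability of the resolvent set under small perturbations does the job. Setting
\begin{equation*}
m := \inf_{\lambda \in \pd \B_\delta(\mu_0)} \|(T(z_0) - \lambda)^{-1}\|^{-1},
\end{equation*}
compactness of $\pd \B_\delta(\mu_0)$ and continuity of the resolvent in $\lambda$ give $m > 0$. The factorization
\begin{equation*}
T(z) - \lambda = (T(z_0) - \lambda)\bigl[I + (T(z_0) - \lambda)^{-1}(T(z) - T(z_0))\bigr],
\end{equation*}
combined with the Lipschitz bound from part (1), shows that for $\varepsilon := m/(2C)$, the bracketed factor is invertible by Neumann series for every $z \in \B_\varepsilon(z_0)$ and every $\lambda \in \pd \B_\delta(\mu_0)$. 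Hence $T(z) - \lambda$ is invertible along the contour, and its inverse is analytic in $z$ uniformly in $\lambda$.

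For part (3), analyticity of $\pi(z)$ on $\B_\varepsilon(z_0)$ follows from uniform convergence of the Neumann series in part (2), which justifies differentiating under the contour integral in \eqref{eq:eigenproj}. The projector identity $\pi(z)^2 = \pi(z)$ is the classical Riesz-calculus computation: choosing two nested contours in $\B_\delta(\mu_0)$ and invoking the first resolvent identity collapses the double integral representing $\pi(z)^2$ to a single copy of \eqref{eq:eigenproj}. Finally, since $z \mapsto \pi(z)$ is continuous into the space of bounded projectors, $\rank \pi(z)$ is locally constant, hence constant on the connected ball $\B_\varepsilon(z_0)$. The main technical obstacle is part (1): verifying discreteness of $\sigma(T(z_0))$ at complex $z_0$ from hypotheses that only guarantee self-adjointness and discreteness at real parameters. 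Once spectral isolation is secured, parts (2) and (3) are familiar applications of the holomorphic functional calculus.
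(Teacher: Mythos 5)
Your proof is correct and follows essentially the same route as the paper's: the same factorization $T(z)-\lambda = (T(z_0)-\lambda)\bigl(I+K_\lambda(z)\bigr)$ with a Neumann series and the bound $\|T'\|\leq C$ for part (2), the same nested-contour first-resolvent-identity computation for the projector property, and the same local-constancy-of-rank argument for part (3). The only difference is that you take care to justify discreteness of $\sigma(T(z_0))$ for complex $z_0$ (via bounded perturbation of the self-adjoint $T(\Re z_0)$), a point the paper simply asserts.
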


The operator $\pi(z)$ defined in \eqref{eq:eigenproj} is typically called the {\it spectral (or Riesz) projector} corresponding to $\pd \B_\varepsilon(\mu_0)$. By evaluating this operator on an eigenvector corresponding to an eigenvalue $\widetilde{\mu}$ of $T(z)$ contained in $\B_\varepsilon(\mu_0)$ and using Cauchy's integral formula, one can check that this operator restricts to the identity on the corresponding eigenspace, and in particular the image of this operator contains all eigenspaces corresponding to eigenvalues $\widetilde{\mu}$ of $T(z)$ contained in $\B_\varepsilon(\mu_0)$.

\begin{proof}[Proof of Lemma \ref{lem:eigenproj}]
(1)+(2): Since the spectrum of $T(z_0)$ is discrete, there exists a $\varepsilon > 0$ such that $\mu_0 : = \mu(z_0)$ is the only eigenvalue of $T(z_0)$ contained in $\overline{\B_\varepsilon(\mu_0)}$ (so that (1) automatically holds). To prove (2), note that for any $\lambda \in \pd \B_\varepsilon(\mu_0)$, we can write
\begin{align}\label{eq:neuman}
    T(z) - \lambda & =  (T(z_0) - \lambda)\cdot(I + K_\lambda(z)), \quad \text{ where }\\
    K_\lambda(z) & = (T(z_0) - \lambda)^{-1}(T(z)-T(z_0)).
\end{align}
Since $\pd \B_\varepsilon(\mu_0)$ is compact and $(T(z_0) - \lambda)^{-1}$ is analytic, and thus continuous, in $\lambda$ for all $\lambda \in \pd \B_\varepsilon(\mu_0)$, there exists $C > 0$ such that 
\begin{equation}
    \lambda\|(T(z_0) - \lambda)^{-1}\| \leq C
\end{equation}
for all such $\lambda$. Therefore, if we let
\begin{equation}
    \varepsilon_0 = \frac{1}{2(2+C)},
\end{equation}
then by Lemma \ref{lem:relativebound} (and its proof), there exists $\delta > 0$ such that for $|z-z_0| < \delta$ and for all $\lambda \in \pd \B_\varepsilon(\mu_0)$,
\begin{equation}
    \|K_\lambda(z)\| \leq \varepsilon_0(2 + \lambda\|(T(z_0) - \lambda)^{-1}\|) < 1.
\end{equation}
Consequently, $I + K_\lambda(z)$ is invertible, and since $T(z_0) -\lambda$ is also invertible for all $\lambda \in \pd \B_\varepsilon(\mu_0)$, we deduce that $T(z) - \lambda$ is invertible as well for all such $\lambda$ and $z \in \B_\delta(Z_0)$, thus proving the claim.

(3) This justifies that the operator $\pi(z)$ in \eqref{eq:eigenproj} is well-defined. In addition, it is analytic since its integrand is analytic for all $z \in \B_\delta(z_0)$. To see that it is a projector for all such $z$, pick $0 < \varepsilon_0<\varepsilon$ such that $\B_{\varepsilon_0}(\mu_0)$ contains the same eigenvalues of $T(z)$ as $\B_\varepsilon(\mu_0)$ (which exists since the resolvent set of $T(z)$ is open); then by the residue theorem, $\pi(z)$ is also equal to the integral in \eqref{eq:eigenproj}, but with the $\pd \B_\varepsilon(\mu_0)$ replaced with $\pd \B_{\varepsilon_0}(\mu_0)$. Let $\cC = \pd \B_\varepsilon(\mu_0)$ and let $\cC_0 = \pd \B_{\varepsilon_0}(\mu_0)$; then, by the first resolvent identity,
\begin{align}
    \pi(z)^2 & = \frac{1}{(2\pi i)^2}\oint_{\cC_0} (T(z)-\mu)^{-1}d\mu \oint_{\cC} (T(z)-\lambda)^{-1}d\lambda \\
    & = \frac{1}{(2\pi i)^2}\oint_{\cC_0} \oint_{\cC} \frac{(T(z)-\mu)^{-1}-(T(z)-\lambda)^{-1}}{\mu-\lambda} d\lambda d\mu \\
    & = \frac{1}{(2\pi i)^2}\left(\oint_{\cC_0} (T(z)-\mu)^{-1}\oint_{\cC} \frac{1}{\mu-\lambda} d\lambda d\mu - \oint_{\cC}(T(z)-\lambda)^{-1} \oint_{\cC_0} \frac{1}{\mu-\lambda} d\mu d\lambda \right) \\
    & = \frac{1}{(2\pi i)^2}\left(\oint_{\cC_0} (T(z)-\mu)^{-1}(-2\pi i) d\mu - \oint_{\cC}(T(z)-\lambda)^{-1}(0) d\lambda \right) \\
    & = -\frac{1}{2\pi i}\oint_{\cC_0} (T(z)-\mu)^{-1} d\mu  = \pi(z).
\end{align}

Lastly, to show that the rank of $\pi(z)$ is independent of $z$, we use a lemma due to Kato \cite[Lemma I.4.10]{kato}: if $\pi_1$, $\pi_2$ are two projectors such that $\|\pi_1 - \pi_2\| < 1$, then $\pi_1$ and $\pi_2$ have the same (potentially infinite) rank.\footnote{In Kato’s book this property is shown for projectors on a finite-dimensional vector space. One can check that the proof applies to infinite-dimensional vector spaces as well.} It follows that the set $\{z \in \B_\delta(z_0)\;:\; \rank \pi(z) = \pi(z_0)\}$
is non-empty, open, and closed, and thus equal to $\B_\delta(z_0)$. 
\end{proof}

We will later use the following corollary:

\begin{corollary}\label{cor:simple} Let $z_0 \in I_0$ such that $\mu_0 := \mu(z_0)$ is an eigenvalue of $T(z_0)$ of multiplicity $1$ and let $\varepsilon, \delta$ be the quantities produced by Lemma \ref{lem:eigenproj}. For every $z \in  (z_0\pm\delta)$, $\mu(z)$ is the only eigenvalue of $T(z)$ in $[\mu_0\pm \varepsilon]$ and satisfies
\begin{equation}\label{eq:firstorder}
    \mu(z) = \mu_0 + (z-z_0) \cdot \langle \phi(z_0),T'(z_0) \phi(z_0) \rangle + O(z-z_0)^2.
\end{equation}
\end{corollary}

\begin{proof} By Theorem \ref{thm:typeA}, for $z \in I_0$ we can write $T(z)$ as:
\begin{equation}
    T(z) = \sum_{n=1}^\infty \mu_n(z) \langle \cdot ,\phi_n(z)\rangle \phi_n(z).
\end{equation}

In particular, by the residue theorem, for $z \in (z_0 \pm \delta)$ the projector $\pi(z)$ from \eqref{eq:eigenproj} is given by:
\begin{equation}\label{eq:1b}
    \pi(z) = \sum_{n: \, |\mu_n(z) - \mu_0| < \varepsilon} \langle \cdot ,\phi_n(z)\rangle \phi_n(z).
\end{equation}
We note that $\pi(z_0)$ has rank one, because $\mu_0$ is the only eigenvalue of $T(z_0)$ in $\B_\varepsilon(\mu_0)$. Therefore, $\pi(z)$ has rank one as well. We deduce from \eqref{eq:1b} that for every $z \in  (z_0\pm\delta)$, $\mu(z)$ is the only eigenvalue of $T(z)$ in $\B_\varepsilon(\mu_0)$, and $\pi(z)$ is its corresponding eigenprojector. Moreover, since $T(z)$ is a self-adjoint family, we deduce that, in fact, $\mu(z) \in (\mu_0\pm \varepsilon)$.

To see that $\mu(z)$ has the form \eqref{eq:firstorder}, first note that $T(z_0)$ is self-adjoint, and that \newline $\Re \langle \phi'(z),\phi(z)\rangle = 0$ since $\|\phi(z)\| = 1$ for all $z \in U$. As a result,
\begin{align}
    \mu'(z_0) & = \frac{d}{dz} \langle \phi(z),T(z)\phi(z) \rangle\big |_{z=z_0} \\
    & =  \langle \phi'(z_0),T(z_0)\phi(z_0) \rangle + \langle \phi(z_0),T'(z_0)\phi(z_0) + T(z_0)\phi'(z_0)\rangle \\
    & = \phi(z_0),T'(z_0)\phi(z_0) \rangle + 2 \mu(z_0) \Re\langle \phi'(z_0),\phi(z_0)\rangle \\
    & = \phi(z_0),T'(z_0)\phi(z_0) \rangle.
\end{align}
Using a Taylor expansion of $\mu$, we conclude that:
\begin{equation}
    \mu(z) = \mu_0 + (z-z_0) \langle \phi_0,T'(z_0)\phi_0 \rangle + O(z-z_0)^2.
\end{equation}
This completes the proof. \end{proof}

In addition to the analyticity of the eigenvalue $\mu(z)$ of $T(z)$, we shall also need to track the multiplicity of such eigenvalues to ensure that, for generic $z$, the number of dispersion surfaces involved in a given band spectrum singularities remains constant. The following proposition, which is a direct consequence of Theorem \ref{thm:typeA}, addresses this.

\begin{proposition}\label{prop:stronger}
There exists a discrete set $D \subset I_0$ such that, as an eigenvalue of $T(z)$, $\mu(z)$ has constant multiplicity for all $z \in I_0 \setminus D$. Furthermore, there exists an analytic function $\pi(z):I_0 \rightarrow \cL(\cH)$ such that $\pi(z)$ is an orthogonal projection of constant rank for all $z \in I_0$, and $\pi(z)$ is the eigenprojector associated to $\mu(z)$ for all $z \in I_0 \setminus D$.
\end{proposition}
\begin{proof}
First we prove Proposition \ref{prop:stronger} when the Hilbert space $\cH$ is finite-dimensional. By picking some fixed basis for $\cH$, $T(z)$ can be represented by a family of matrices, which we denote by $M(z)$, the entries of which will also be analytic by Definition \ref{def:typeA} (2). Let $(\phi_j(z))_{j=1}^n$, $(\mu_j(z))_{j=1}^n$ be the vector-valued and scalar-valued functions, respectively, whose existence is guaranteed by Theorem \ref{thm:typeA}. After reindexing these functions if necessary, we may assume that $\mu = \mu_1$. 

For $j=2,\ldots,n$, define
\begin{equation}\label{eq:Djs}
    D_j = \{z \in I_0\;:\; \mu_j(z) - \mu_1(z) = 0\}.
\end{equation}
Since $\mu_j-\mu$ is analytic, $D_j$ must either be discrete or equal to $I_0$ by the identity theorem. It follows that the set 
\begin{equation}\label{eq:Djs2}
    D = \bigcup_{j\;: D_j \not = I_0} D_j
\end{equation}
is also discrete as a finite union of discrete sets. We additionally define a function $\pi:I_0 \rightarrow \cL(\cH)$ by
\begin{equation}\label{eq:pi(z)}
    \pi(z)\phi = \sum_{j\;:D_j = I_0} \langle \phi,\phi_j(z)\rangle \phi_j(z).
\end{equation}
By construction, $\pi(z)$ is both analytic for all $z \in I_0$ and equal to the eigenprojector associated to $\mu(z)$ for $z \in I_0 \setminus D$. Furthermore, it is an orthogonal projector for all $z \in I_0$ by virtue of $(\phi_j(z))_{j=1}^n$ forming a complete orthonormal basis of $\cH$ for all $z\in I_0$. This also implies that $\pi(z)$ has constant rank on $I_0$, and therefore $\mu(z)$ must have constant multiplicity for all $z \in I_0 \setminus D$.

We now prove Proposition \ref{prop:stronger} when the Hilbert space $\cH$ is potentially infinite-dimensional by reducing to the finite-dimensional case. Just as before, let $(\phi_j(z))_{j \in \NN}$, $(\mu_j(z))_{j\in \NN}$ be the vector-valued and scalar-valued functions, respectively, whose existence is guaranteed by Theorem \ref{thm:typeA} (again potentially reindexing so that $\mu = \mu_1$), and let $D_j$, $D$, and $\pi$ be defined as in \eqref{eq:Djs}, \eqref{eq:Djs2}, and \eqref{eq:pi(z)}, respectively. 

To see that $\pi$ is well-defined, and in particular that the sum in its definition is finite, note that, although we can no longer assume $D$ is discrete, $D$ is still countable as a countable union of discrete sets. As a result, there exists some $z_0 \in I_0 \setminus D$; by construction, $\pi(z_0)$ is then the eigenprojector corresponding to $\mu(z_0)$, and since the spectrum of $T(z_0)$ is discrete, $\pi(z_0)$ must have finite rank. Let $m = \rank \pi(z_0)$; it follows that the sum in \eqref{eq:pi(z)} has $m$ terms, and so $\pi$ is well-defined as a function on $I_0$. Furthermore, $\pi$ is again an orthogonal projector of constant rank $m$, and for $j \in \NN$ such that $D_j \not = I_0$, the functions $\phi_j(z)$ are eigenvectors of $T(z)$ corresponding to the eigenvalue $\mu(z)$ for all $z \in I_0$. In particular, this tells us that:
\begin{itemize}
    \item $\pi(z)$ is analytic for all $z \in I_0$;
    \item $\pi(z)$ is the eigenprojector corresponding to $\mu(z)$ for all $z \in I_0 \setminus D$;
    \item $\mu(z)$ has multiplicity $m$ for all $z \in I_0\setminus D$ (and multiplicity $\geq m$ for all $z \in I_0$).
\end{itemize}
It thus remains to show that the set $D$ is discrete.

Henceforth, let $\mult_{T(z)}(\lambda)$ denote the multiplicity of $\lambda$ (possibly zero) as an eigenvalue of $T(z)$, let $z_0 \in D$, and let $\mu_0 : = \mu(z_0)$. Then $m_0 := \mult_{T(z_0)}(\mu(z_0)) \geq m$. In addition, if we apply Lemma \ref{lem:eigenproj} to $z_0, \mu_0$ and let $\widetilde{\pi}(z)$ be the operator defined in \eqref{eq:eigenproj}, then for some $\varepsilon,\delta >0$ and for $z \in \B_{\delta}(z_0))$, $\widetilde{\pi}(z)$ is the spectral projector corresponding to eigenvalues contained in $\B_\varepsilon(\mu_0)$. 

Let $\cE(z) =  \widetilde{\pi}(z)(\cH)$; then $\cE(z)$ is a finite-dimensional vector space of dimension independent of $z$ since $\rank \widetilde{\pi}(z) = \rank \widetilde{\pi}(z_0)$ by  Lemma \ref{lem:eigenproj} (3) and $\rank \widetilde{\pi}(z_0) = \mult_{T(z_0)}(\mu_0) = m_0 < \infty$ due to the spectrum of $T(z_0)$ being discrete. Since $T(z)$ is assumed to be acting on a Hilbert space $\cH$, we can decompose $T(z)$ with respect to $ \cE(z) \oplus \cE(z)^\perp$:
\begin{equation}\label{eq:decomp}
    T(z) = \begin{pmatrix} T_{11}(z) & T_{12}(z) \\
    T_{21}(z) & T_{22}(z)
    \end{pmatrix}.
\end{equation}
For $z \in (z_0 \pm \delta)$, $T_{12}(z) = T_{21}(z) = 0$ since $T(z)$ is self-adjoint for $z \in I_0$. Moreover, these operators are analytic (as they can be expressed as compositions of $T(z)$, $\widetilde{\pi}(z)$, and the orthogonal complement of $\widetilde{\pi}(z)$), and thus the identity theorem tells us they must be identically zero on $\B_\delta(z_0)$.

In addition, note that by construction, $T_{22}(z_0) = T(z_0)|_{\cE(z_0)^\perp}$ has no eigenvalues in $\B_\varepsilon(\mu_0)$, and since $T(z_0)$ is self-adjoint, this implies $T(z_0) - \mu_0$ is invertible and its norm is bounded by $1/\delta$. By writing
\begin{align}
    T_{22}(z) - \lambda & =  (T_{22}(z_0) - \mu_0)\cdot(I + K_\lambda(z)), \quad \text{ where }\\
    K_\lambda(z) & = (T_{22}(z_0) - \mu_0)^{-1}(T_{22}(z) -T_{22}(z_0)+\lambda - \mu_0),
\end{align}
we obtain, by the same argument as in the proof of Lemma \ref{lem:eigenproj},  that $T_{22}(z) - \lambda$ is invertible for $|z - z_0| < \delta$ (after shrinking $\delta$ if necessary) and $\lambda \in \B_\varepsilon(\mu_0)$.

Let $\psi_1,\ldots,\psi_{m_0}$ be a basis of $\cE(z_0)$. After shrinking $\delta$ again if necessary, the set
 \begin{equation}
    \{\psi_j(z) := \widetilde{\pi}(z)\psi_j\}_{j=1}^{m_0}
\end{equation}
forms a basis for $\cE(z)$. Let $M(z)$ be the matrix of $T_{11}(z)$ with respect to this basis. Then $M(z)$ is Hermitian for $z \in (z_0 \pm \delta)$, and its entries are given by 
\begin{equation}
    (M(z))_{ij} = \frac{\langle T(z)\psi_i(z),\psi_j(z)\rangle}{\|\psi_j(z)\|^2},
\end{equation}
from which it follows that $M(z)$, viewed as a linear operator on $\CC^{m_0}$, is a self-adjoint holomorphic family of type (A) for $z \in \B_\delta(z_0)$.

Since $T_{12}(z) = T_{21}(z) = 0$ and $T_{22}(z) - \lambda$ is invertible for $|z - z_0| < \delta$ and $\lambda \in \B_\varepsilon(\mu_0)$, the decomposition in \eqref{eq:decomp} implies the following sequence of equivalences for $z \in (z_0 \pm \delta)$:
\begin{align}
    \lambda \in \B_\varepsilon(\mu_0) \text{ is an eigenvalue of }T(z) &  \Leftrightarrow \lambda \in \B_\varepsilon(\mu_0) \text{ such that }T(z) - \lambda \text{ is singular} \\
    & \Leftrightarrow T_{11}(z) - \lambda \text{ is singular} \\
    & \Leftrightarrow M(z) - \lambda \text{ is singular} \\
    & \Leftrightarrow \lambda \text{ is an eigenvalue of }M(z).
\end{align}
Moreover, $\mu(z)$ is an eigenvalue of $M(z)$ for all $z \in (z_0 \pm \delta)$ since $\mu((z\pm z_0)) \subset  \B_\varepsilon(\mu_0)$ by Lemma \ref{lem:eigenproj}. As a result, by applying this proposition to the finite-dimensional family of operators $M(z)$, we deduce that $\mu(z)$ has constant multiplicity on a punctured interval of $z_0$. Since the multiplicity of $\mu(z)$ as an eigenvalue of $M(z)$ is equal to its multiplicity as an eigenvalue of $T(z)$, we conclude that $\mu(z)$ has multiplicity $m$ as an eigenvalue of $T(z)$ in a punctured neighborhood of $z_0$. Since $z_0 \in D$ was arbitrary, this shows that $D$ is in fact discrete.
\end{proof}

\section{Dispersion Surfaces of Schr{\"o}dinger Operators: General Theory}\label{chp:3} 

The rest of this paper focuses on the eigenvalue problem \eqref{eq:9b}. Specifically, using the theory of holomorphic families of type (A), we develop a framework for analyzing the dispersion surfaces of Schr{\"o}dinger operators $-\Delta + V$ for generic potentials $V$ invariant under a lattice $\Lambda$, i.e. periodic with respect to $\Lambda$ and symmetric with respect to the point group of $\Lambda$.

This section concentrates on this general set-up. We first review Floquet--Bloch theory and define lattice-invariant potentials. We then state and prove perturbative lemmas on Floquet--Bloch eigenvalues of $H_z = -\Delta+zV$, in the process proving our first main result, Theorem \ref{thm:1new}. Brought together, these results outline our strategy to describe the generic structure of the dispersion surfaces of invariant Schr{\"o}dinger operators. In the next section we apply this framework to analyze the dispersion surfaces of Schr{\"o}dinger operators with potentials invariant under cubic lattices.

\subsection{Floquet--Bloch Theory}\label{sec:3.1}
We begin with a review of lattices and Floquet--Bloch theory. Given a basis $v_1,\ldots,v_n$ of $\mathbb{R}^n$, the {\it lattice} $\Lambda$ generated by $v_1,\ldots,v_n$ is the set $\Lambda = \Z v_1\oplus \cdots \oplus \Z v_n$. Given $k \in \mathbb{R}^n$, the space of $k$-quasiperiodic functions with respect to $\Lambda$ is
\begin{equation}
    L^2_k = \{f \in L^2_{\text{loc}}\;:\; f(x+v) = e^{ik\cdot v}f(x) \; \forall v \in \Lambda\}.
\end{equation}
In this context, we refer to $k$ as the {\it quasi-momentum} of functions $f \in L^2_k$. In addition, observe that the space of $\Lambda$-periodic functions is simply $L^2_0$, and $f \in L^2_k$ if and only if $e^{-ik\cdot x}f \in L^2_0$. The correspondence between $L^2_0$ and $L^2_k$ then induces an inner product on $L^2_k$ given by:
\begin{equation}
    \langle f,g \rangle_{L^2_k} = \frac{1}{|\mathbb{R}^n/\Lambda|}\int_{\mathbb{R}^n/\Lambda} \overline{f(x)}g(x)dx,
\end{equation}
where $\mathbb{R}^n / \Lambda$ is a fundamental cell for $\Lambda$. We similarly define Sobolev spaces $H^s_k$, $s \in \N$ by
\begin{equation}
    H^s_k = \{f\in L^2_k\;:\; \pd^\alpha f\in L^2_k \; \forall |\alpha|\leq s\}.
\end{equation}

Lastly, we define the {\it dual lattice} $\Lambda^*$ (also often referred to as the {\it reciprocal lattice}) as $\Lambda^* = \Z k_1 \oplus \cdots \Z k_n$, where $k_1,\ldots, k_n$ satisfy the relation $k_i\cdot v_j = 2\pi \delta_{ij}$. We then refer to $k_1,\ldots, k_n$ as the {\it dual} (or {\it reciprocal})  {\it basis}.

We consider the Schr{\" o}dinger operator $H = - \Delta + V$, where $V$ is smooth and periodic with respect to $\Lambda$. The {\it Floquet--Bloch eigenvalue problem} at quasi-momentum $k \in \mathbb{R}^n$ is
\begin{equation}\label{eq:floquet}
\begin{gathered}
    H\phi(x;k) = \mu(k) \phi(x;k), \quad x \in \mathbb{R}^n,\\
    \phi(x+v;k) = e^{ik \cdot v}\phi(x;k), \quad v \in \Lambda.
    \end{gathered}
\end{equation}
A $L^2_k$-solution $\phi$ to the above problem is called a {\it Floquet--Bloch state}. The operator $H$ is a self-adjoint unbounded operator on $L^2_k$ (respectively $L^2$) with domain $H^2_k$ (respectively $H^2$). By elliptic regularity, the operator $H$ on $L^2_k$ has a compact resolvent, and so its spectrum is discrete; the collection of its eigenvalues, seen as functions of $k$, are called the {\it dispersion surfaces} of $H$. 

Since the problem \eqref{eq:floquet} is invariant under the change $k \mapsto k + k'$ for $k' \in \Lambda^*$, we can restrict our attention to $k$ varying over the {\it Brillouin zone} $\cB$: the set of points $k \in \mathbb{R}^n$ which are closer to the origin than to any other point of $\Lambda^*$. Moreover, we can recover the $L^2$-spectrum of $H$ from the $L^2_k$-spectra  for $k \in \cB$ \cite{reed}:
\begin{equation}
    \sigma_{L^2}(H) = \bigcup_{k \in \cB}\sigma_{L^2_k}(H).
\end{equation}

\subsection{Invariant Potentials}\label{sec:3.2}
In this section, we fix a lattice $\Lambda$ with basis $v_1,\ldots,v_n$ and reciprocal basis $k_1,\ldots,k_n$. Let $G$ denote the {\it point group} of the lattice $\Lambda$, namely the subgroup of its isometry group which keeps the origin fixed. Observe that $G$ is necessarily finite: every element $g \in G$ must necessarily send the basis $v_1,\ldots,v_n$ to another basis of $\mathbb{R}^n$ consisting of vectors in $\Lambda$, and since $g$ is an isometry, we must have that $\|gv_j\| = \|v_j\|$ for $j=1,\ldots,n$, which implies there are only finitely many lattice vectors to which $g$ can send each basis element. 

The group $G$ acts isometrically on scalar-valued functions:
\begin{equation}
    g_*f(x) := f(g^\top x).
\end{equation}
We will later need an induced action of a subgroup $G_0$ of $G$ on $L^2_k$ for some quasi-momentum $k$. However, in order for this action to be well-defined, we need $G_0$ to satisfy an additional criterion. 

\begin{definition}
We say $g \in G$ is $k$-{\it invariant} if
\begin{equation}
    gk \in k+\Lambda^*.
\end{equation}
Analogously, we say a subgroup $G_0$ of $G$ is $k$-invariant if $g$ is $k$-invariant for all $g \in G_0$.
\end{definition}

To see that $k$-invariant subgroups give well-defined actions, note that if $G_0$ is such a subgroup and $g \in G_0$, then by definition there exists $k' \in \Lambda^*$ such that $gk = k+k'$. Then for all $v \in \Lambda$, $g^\top v \in \Lambda$ as well by definition of $G$, and as a result
\begin{align}
    g_*\psi(x+v)&  = \psi(g^\top x+g^\top v) = e^{ik\cdot g^\top v}\psi(g^\top x)
    = e^{igk\cdot v}g_*\psi(x) \\
    & =e^{i(k+k')\cdot v}g_*\psi(x)  = e^{ik\cdot v}g_*\psi(x).
\end{align}
In particular, this shows that $k$-invariant group elements map $L^2_k$ to itself.

We now define potentials invariant with respect to $\Lambda$.

\begin{definition}\label{def:invariant}
Let $\Lambda$ be a lattice with point group $G$. We say that $V\in C^\infty(\mathbb{R}^n,\mathbb{R})$ is $\Lambda$-{\it invariant} if:
\begin{enumerate}[label = \arabic*)]
    \item $V$ is $\Lambda$-periodic, i.e. $V(x+v) = V(x)$ for all $x \in \mathbb{R}^2$ and $v \in \Lambda$,
    \item $V$ is $G$-invariant, i.e. $g_* V = V$ for all $g \in G$.
\end{enumerate}
When the lattice $\Lambda$ is clear from the context, we will omit it and simply refer to $V$ as an {\it invariant potential}.
\end{definition}

When $V$ is an invariant potential, the fact that $V$ is $\Lambda$-periodic enables us to expand $V$ as a Fourier series with coefficients $\{V_m\}_{m \in \Z^n}$:
\begin{align}\label{eq:Fouriercoef}
    V(x) & = \sum_{m \in \Z^n}V_me^{i(m_1k_1+\cdots +m_nk_n)\cdot x} \\
    V_m & = \left \langle e^{i(m_1k_1+\cdots +m_nk_n)\cdot x},V \right\rangle.
\end{align}
For simplicity of notation, if $k \in \Lambda^*$ so that $k = m_1k_1 + \cdots + m_nk_n$ for some $m \in \Z^n$, we shall also denote $V_m$ by $V_{k}$. If we then view these coefficients as a function on $\Lambda$, they are invariant under an induced action of $G$: 
\begin{equation}\label{eq:Fourierinvar}
    g_*V_{k}  = V_{g^\top k} = \left \langle e^{ig^\top k\cdot x},V \right\rangle = \left \langle e^{ik\cdot x},g_*V \right\rangle = \left \langle e^{ik\cdot x},V \right\rangle=  V_{k}.
\end{equation}

An example of invariant potentials that has been studied extensively is honeycomb lattice potentials: potentials invariant under $2\pi/3$-rotations and parity and periodic with respect to the equilateral lattice. For later reference, we now describe two properties of invariant potentials which naturally extend properties of honeycomb lattice potentials.

First, observe that if $V$ is a $\Lambda$-invariant potential and $O$ is an orthogonal transformation, then $V \circ O^*$ is an $O\Lambda$-invariant potential. An immediate consequence of this is that the spectral properties of $H_z$ on $L^2_k$ are the same as those of $H_z$ on $L^2_{gk}$ for all $g \in G$. Together with the $\Lambda^*$-periodicity of the Floquet-eigenvalue problem \eqref{eq:floquet}, this implies that the dispersion surfaces of $H$ near a quasi-momenta $k \in \mathbb{R}^n$ are determined locally by those near $gk$. Consequently, it suffices to consider quasi-momenta whose orbits under $G$ are distinct.

Second, every lattice $\Lambda$ is necessarily invariant under the negative of the identity, which implies that $-I \in G$. Therefore, by $G$-invariance, every invariant potential $V$ is necessarily even. Together with the assumption that $V$ is real, this implies that if $(\phi(x;k),\mu)$ is an eigenpair of the Floquet--Bloch problem \eqref{eq:floquet} with quasi-momentum $k$, then so too is $(\overline{\phi(-x;k)},\mu)$.

\subsection{Decomposing \texorpdfstring{$L^2_{K}$}{L2K} via a \texorpdfstring{$K$}{K}-Invariant Subgroup}\label{sec:3.3}

Fix some $K \in \mathbb{R}^n$; then $\mu_0 = \|K\|^2$ is an $L^2_K$-eigenvalue of $-\Delta$. We define a set $[K]$ as follows:  
\begin{equation}\label{eq:equivclass}
    [K]: = \{k \in K+\Lambda^*\;:\; \|k\|=\|K\|\}.
\end{equation}
Then by Corollary \ref{cor:brillouinvert},
\begin{equation}
    m_{-\Delta}(\mu_0) = \Big|[K]\Big|.
\end{equation}

For the rest of this section, we make the following assumption on $K$:\\

\noindent {\bf Assumption \setword{1}{assump:3}:} There exists an abelian subgroup $G_0$ of $G$ such that $G_0K = [K]$ and $|G_0| = |G_0K|$. \\

Although this might appear at first glance to be a restrictive assumption, we will see in \S\ref{chp:4} that in many applications, such a subgroup exists. The reason this assumption is helpful is that by construction of $[K]$, $G_0$ is necessarily $K$-invariant, and thus has a well-defined action on $L^2_K$. In addition, by our assumption that $V$ is $G$-invariant and the fact that $g_*$ is the pushforward by an orthogonal matrix for every $g \in G_0$, $H$ commutes with the action of $G_0$ on $L^2_K$. We can therefore reduce the spectral problem for $H$ on $L^2_K$ to spectral problems on the invariant subspaces of $G_0$. 

Before we perform this reduction, however, we introduce some notation. Let $g_1,\ldots,g_\ell$ denote a minimal system of generators of $G_0$, with respective orders $n_1,\ldots,n_\ell$. Since $G_0$ is assumed to be abelian, it follows that $G_0 \cong \bigoplus_{j=1}^\ell \Z_{n_j}$. In addition, if $g \in G_0$ is of order $N$, then $\sigma_{L^2_{K}}(g_*)$, the spectrum of $g_*$ viewed as an operator on $L^2_K$, is contained in the $N$-th roots of unity $U_N$ (and in fact, we will see in Lemma \ref{lem:evectors} that $\sigma_{L^2_{K}}(g_*)=U_N$). This follows first from the fact that $g_*$ has finite order, and consequently has pure point spectrum, and if $\omega$ is an eigenvalue of $g_*$, then $g^N = e$ implies $\omega^N = 1$, and so $\omega \in U_N$. With this in mind, we define:
\begin{equation}\label{eq:JU}
    \JJ := \prod_{j=1}^\ell \{0,\ldots,n_j-1\} \quad \text{ and }\quad \UU  := \prod_{j=1}^\ell U_{n_j},
\end{equation}
so that $G_0  = \{g^j\;:\;j \in \JJ\}$, where we are using the multi-index notation $g^j = g_1^{j_1}\cdots g_\ell^{j_\ell}$.

Again using the fact that $G_0$ is abelian, we can then simultaneously diagonalize the operators $(g_j)_*$, which leads us to the following decomposition of $L^2_K$:
\begin{equation}
    L^2_K = \bigoplus_{\omega \in \UU}L^2_{K,\omega}, \quad L^2_{K,\omega}: = \bigcap_{j=1}^\ell \ker_{L^2_K}((g_j)_* - \omega_j).
\end{equation}
It is worth noting that the spaces $L^2_{K,\omega}$ for $\omega \in \UU$ are pairwise orthogonal by virtue of the operators $(g_j)_*$ being unitary. 

Lastly, it will also simplify our later computations by introducing a convenient method of enumerating elements of $[K]$. Specifically, for each $j \in \JJ$ we define $m(j) \in \Z^n$ as the $n$-tuple satisfying
\begin{equation}\label{eq:mj}
    g^jK = K + m(j)\cdot (k_1, \ldots, k_n);
\end{equation}
then $m(j)$ exists and is unique by Assumption \ref{assump:3}.

\subsection{Strategy}\label{sec:3.4}
Our goal is to describe the structure of dispersion relations of $H_z$ near some quasi-momentum $K \in \mathbb{R}^n$ for generic values of $z$, where we continue to assume that $K$ together with a subgroup $G_0$ of $G$ satisfy Assumption \ref{assump:3}. The introduction of the parameter $z$ does not change the fact that, for $z \in \mathbb{R}$, $H_z$ is a self-adjoint unbounded operator on $L^2_k$ with compact resolvent (see section \ref{sec:3.1}). Since $\dom  H_z = H^2_k$ is independent of $z$ and $H_z \phi$ is linear in $z$ for any $\phi \in H^2_k$, it follows that $H_z$ is a self-adjoint holomorphic family of type (A), as per Definition \ref{def:typeA}, and thus we can apply Theorem \ref{thm:typeA} and Proposition \ref{prop:stronger}.

Building upon this, our strategy relies on the four key lemmas stated below, of which Theorem \ref{thm:1new} is an immediate consquence; their proofs are postponed to Section \ref{sec:4.5}. We will start with a result of eigenvalues of $-\Delta$ on $L^2_K$.

\begin{lemma}\label{lem:evectors}
Let $K \in \mathbb{R}^n$ and $G_0$ of $G$ satisfy Assumption \ref{assump:3}. For each $\omega \in \UU$, $\|K\|^2$ is an $L^2_{K,\omega}$-eigenvalue of $-\Delta$ of multiplicity 1, with corresponding normalized eigenvector given by
\begin{equation}\label{eq:evectors}
    \phi_\omega(x) = \frac{1}{\sqrt{|G_0|}}\sum_{j \in \JJ} \omega^j e^{i g^{-j}K\cdot x}.
\end{equation}
\end{lemma}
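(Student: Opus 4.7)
The plan is to diagonalize $-\Delta$ on $L^2_K$ using plane waves and then decompose its $\|K\|^2$-eigenspace under the commuting unitary action of $G_0$.

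First, since $\{e^{ik\cdot x}\}_{k \in K + \Lambda^*}$ is an orthonormal basis of $L^2_K$ and $-\Delta e^{ik\cdot x} = \|k\|^2 e^{ik\cdot x}$, the $\|K\|^2$-eigenspace $E \subset L^2_K$ of $-\Delta$ is exactly the linear span of the plane waves with frequency $k \in [K]$. By \ref{assump:3}, $[K] = G_0 K$ with $|G_0 K| = |G_0|$, so the family $\{e^{ig^{-j}K\cdot x}\}_{j \in \JJ}$ is an orthonormal basis of $E$ and in particular $\dim E = |G_0|$.

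Next, since $(g_1)_*, \ldots, (g_\ell)_*$ are pairwise commuting unitaries on $L^2_K$ and each commutes with $-\Delta$, the joint spectral projector onto $L^2_{K,\omega}$ is
\begin{equation}
P_\omega = \frac{1}{|G_0|}\sum_{j\in\JJ}\bar{\omega}^{j}(g^j)_*,
\end{equation}
and $P_\omega$ preserves $E$. Using $(g^j)_* e^{iK\cdot x} = e^{ig^jK\cdot x}$ together with the reindexing $j \mapsto -j$ (a bijection of $\JJ$ induced by the inverse map on $G_0$, under which $\bar{\omega}^{-j} = \omega^j$ because $|\omega_i|=1$) yields
\begin{equation}
P_\omega(e^{iK\cdot x}) = \frac{1}{|G_0|}\sum_{j \in \JJ}\omega^j e^{ig^{-j}K\cdot x} = \frac{1}{\sqrt{|G_0|}}\,\phi_\omega.
\end{equation}
This is non-zero, being a linear combination of pairwise distinct basis plane waves with non-zero coefficients, so $\phi_\omega \in E \cap L^2_{K,\omega}$ is a non-zero $\|K\|^2$-eigenvector of $-\Delta$.

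Finally, since $\sum_{\omega\in\UU} P_\omega = \mathrm{Id}_{L^2_K}$, we have $E = \bigoplus_{\omega\in\UU}(E\cap L^2_{K,\omega})$, so $|G_0| = \dim E = \sum_{\omega\in\UU}\dim(E\cap L^2_{K,\omega})$. Since $|\UU| = \prod_i n_i = |G_0|$ and each summand is at least one by the previous step, each summand is exactly one. This shows $\|K\|^2$ is a simple $L^2_{K,\omega}$-eigenvalue of $-\Delta$ with eigenvector $\phi_\omega$, and the normalization is verified by $\|\phi_\omega\|^2_{L^2_K} = |G_0|^{-1}\sum_{j\in\JJ}|\omega^j|^2 = 1$ using orthonormality of the $e^{ig^{-j}K\cdot x}$. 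The only technical nuisance is the index manipulation that converts the natural projection $\sum_j \bar{\omega}^j e^{ig^jK\cdot x}$ into the stated form $\sum_j \omega^j e^{ig^{-j}K\cdot x}$; everything else is standard orthogonality and character-based dimension counting.
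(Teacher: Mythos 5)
Your proof is correct and essentially the same as the paper's: both identify the $\|K\|^2$-eigenspace as the span of the $|G_0|$ orthonormal plane waves indexed by $[K]=G_0K$, exhibit $\phi_\omega$ as a nonzero vector in $E\cap L^2_{K,\omega}$, and conclude simplicity by counting $|\UU|=|G_0|$. The only cosmetic difference is that you produce $\phi_\omega$ by applying the isotypic projector $P_\omega$ to $e^{iK\cdot x}$, whereas the paper writes $\phi_\omega$ down and verifies $(g_i)_*\phi_\omega=\omega_i\phi_\omega$ directly.
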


By Theorem \ref{thm:typeA}, there exists a function $\mu(z)$, analytic on $\mathbb{R}$, such that $\mu(z)$ is an $L^2_{K,\omega}$-eigenvalue of $H_z$ for $z \in \mathbb{R}$ and $\mu(0) = \|K\|^2$. Lemma \ref{lem:evectors} together with Corollary \ref{cor:simple} then enables us to compute the first order term in a Taylor expansion of $\mu(z)$.

\begin{lemma}\label{lem:perturbz}
Let $K \in \mathbb{R}^n$ and $G_0$ of $G$ satisfy Assumption \ref{assump:3} and let $\omega \in \UU$. There exist $\varepsilon,\delta > 0$ such that for $z \in (-\varepsilon, \varepsilon)$, $H_z$ has a unique $L^2_{K,\omega}$-eigenvalue in $(\|K\|^2\pm \delta)$, given by
\begin{equation}\label{eq:zperturbsum}
    \mu(z) = \|K\|^2 + z\cdot\sum_{j\in \JJ} \omega^j V_{m(j)}+ \cO(|z|^2),
\end{equation}
where $m(j)$ is the multi-integer defined in \eqref{eq:mj}.
\end{lemma}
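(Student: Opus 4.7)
The plan is to apply Corollary \ref{cor:simple} to the restriction of $\{H_z\}$ to the joint eigenspace $L^2_{K,\omega}$, using $\phi_\omega$ as the unperturbed eigenvector at $z_0 = 0$, and then carry out a symmetry-assisted computation of the first-order coefficient $\langle \phi_\omega, V\phi_\omega\rangle$.

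First I would verify that the family $T(z) := H_z|_{L^2_{K,\omega}}$ satisfies \ref{assump:1}. Because $V$ is $G$-invariant and $-\Delta$ commutes with orthogonal changes of coordinates, each $(g_i)_*$ commutes with $H_z$, so the joint eigenspace $L^2_{K,\omega}$ is a closed $H_z$-invariant subspace. On this subspace $T(z) = -\Delta + zV$ is affine (hence analytic) in $z$, is self-adjoint with discrete spectrum for real $z$ (as the restriction of a self-adjoint operator with compact resolvent to an invariant closed subspace), and satisfies $\|T'(z)\| = \|V\|_{L^\infty} < \infty$ uniformly in $z$. By Lemma \ref{lem:evectors}, $\mu_0 = \|K\|^2$ is a simple $L^2_{K,\omega}$-eigenvalue of $T(0) = -\Delta$ with normalized eigenvector $\phi_\omega$. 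Corollary \ref{cor:simple} then produces $\varepsilon, \delta > 0$ and a unique eigenvalue $\mu(z)$ of $T(z)$ in $(\|K\|^2 \pm \delta)$ for $z \in (-\varepsilon, \varepsilon)$, satisfying
\begin{equation*}
\mu(z) = \|K\|^2 + z\,\langle \phi_\omega, V\phi_\omega\rangle + \cO(|z|^2).
\end{equation*}

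The remaining task, which I expect to be the technical core of the proof, is to identify $\langle \phi_\omega, V\phi_\omega\rangle$ with $\sum_{j \in \JJ}\omega^j V_{m(j)}$. Expanding $\phi_\omega$ via \eqref{eq:evectors} and using that $\langle e^{ip\cdot x}, V\,e^{iq\cdot x}\rangle_{L^2_K}$ picks out $V_{k_m}$ precisely when $q - p = k_m \in \Lambda^*$, I obtain
\begin{equation*}
\langle \phi_\omega, V\phi_\omega\rangle = \frac{1}{|G_0|} \sum_{j,j'\in\JJ} \omega^{j'-j}\, V_{m(j,j')}, \qquad k_{m(j,j')} := g^{-j'}K - g^{-j}K.
\end{equation*}
The factorization $g^{-j'}K - g^{-j}K = g^{-j}(g^{j-j'}K - K) = g^{-j} k_{m(j-j')}$ (using \eqref{eq:mj}) combined with the $G$-invariance of Fourier coefficients derived in \eqref{eq:Fourierinvar} gives $V_{m(j,j')} = V_{m(j-j')}$ (interpreting $j-j'$ in $\JJ \cong \bigoplus \Z_{n_i}$). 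Substituting $l = j' - j$ collapses the inner sum into a factor of $|G_0|$, leaving exactly $\sum_{l\in\JJ}\omega^l V_{m(l)}$, which completes the proof.

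The main obstacle is the bookkeeping in this last step: one must keep straight the distinction between the action of $G_0$ on $K$ and on Fourier indices, and rely in a careful way on $G$-invariance (both of $V$ and hence of its Fourier data) to collapse what is \emph{a priori} a double sum over $\JJ \times \JJ$ to a single sum indexed by $\JJ$. Once this identity is established, the result follows from Corollary \ref{cor:simple} with no further analysis.
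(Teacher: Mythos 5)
Your proposal is correct and takes essentially the same route as the paper: apply Corollary \ref{cor:simple} to $H_z$ restricted to the invariant subspace $L^2_{K,\omega}$ with the eigenvector $\phi_\omega$ from Lemma \ref{lem:evectors}, then identify $\langle\phi_\omega,V\phi_\omega\rangle$ by a symmetry computation. The only cosmetic difference is that the paper collapses one slot of the inner product first (using $Vg^j_*=g^j_*V$ and $g^j_*\phi_\omega=\omega^j\phi_\omega$) rather than expanding the full double sum over $\JJ\times\JJ$; your sign conventions for the Fourier index are off by a reflection in a couple of places, but this is harmless since $-I\in G$ forces $V$ to be even and the Fourier coefficients are $G$-invariant.
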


By Theorem \ref{thm:typeA} and Proposition \ref{prop:stronger}, we can then conclude that, for generic $z \in \mathbb{R}$, $\mu(z)$ is a simple $L^2_{K,\omega}$-eigenvalue of $H_z$, splitting from the $L^2_K$-eigenvalue $\|K\|^2$ of $H_0 = -\Delta$, and the corresponding rank one eigenprojector can be extended to an analytic map on $\mathbb{R}$. 

When $K$ is a vertex of the Brillouin zone, we will be able to compute the generic multiplicities of the $L^2_K$-eigenvalues of $H_z$ splitting from the eigenvalue $\|K\|^2$ of $-\Delta$ using symmetry arguments. We will then describe the structure of the corresponding dispersion surfaces near $K$ using the following three results.

\begin{lemma}\label{lem:perturbk}
Let $\mu(z)$ be an $L^2_K$-eigenvalue of $H_z$ for some $z\in \mathbb{R}$, let $\pi(z): L^2_K \rightarrow L^2_K$ be the corresponding eigenprojector, and let $\cE(z)$ be the corresponding eigenspace. 
\begin{enumerate}[label=(\arabic*)]
    \item There exist $\varepsilon,\delta > 0$ such that for $\|\kappa\|< \varepsilon$, the $L^2_{K+\kappa}$ eigenvalues of $H_z$ in $\B_\delta(\mu(z))$ satisfy
    \begin{equation}\label{eq:det2}
        \det\big((u(z)-\mu) + M(z,\kappa) + R(\mu,\kappa)\big)|_{\cE(z)}=0,
    \end{equation}
    where $M(z,\kappa)=-\pi(z) (2i\kappa\cdot\nabla)\pi(z)$ and $\|R(\mu,\kappa)\| \leq C_1\|\kappa\|^2$ for some $C_1 > 0$.
    \item If $\lambda(z,\kappa)$ is a simple eigenvalue of $M(z,\kappa)$, continuous in $\kappa$ on some open set $U \subset B_\varepsilon(0)$ such that $\sup_{\kappa \in U}|\lambda(z,\kappa)| < \delta$, then there exists a simple eigenvalue $\mu(z,\kappa)$ of $H_z$ on $L^2_{K+\kappa}$ satisfying
    \begin{equation}\label{eq:relation}
        \mu(z,\kappa) = \mu(z) + \lambda(z,\kappa)+\cO(\|\kappa\|^2).
    \end{equation}
    \item If $M(z,\kappa) = 0$, then every $L^2_{K+\kappa}$-eigenvalue $\mu(z,\kappa)$ of $H_z$ satisfies $\mu(z,\kappa) = \mu(z)+ \cO(\|\kappa\|^2)$.
\end{enumerate}
\end{lemma}

For the following lemma, we continue to let $ M(z,\kappa):=-\pi(z) (2i\kappa\cdot\nabla)\pi(z)$, 
although we now allow $z$ to vary and let $\pi(z)$ denote the analytic family of orthogonal projections whose existence is guaranteed by Proposition \ref{prop:stronger} (which for generic $z \in \mathbb{R}$, is equal to the eigenprojector corresponding to $\mu(z)$).

\begin{lemma}\label{lem:mkanalytic} Let $\mu(z)$ be an $L^2_K$-eigenvalue of $H_z$, depending analytically on $z \in \mathbb{R}$. The characteristic polynomial of $M(z,\kappa)$, acting on the finite-dimensional space $\cE(z):=  \pi(z)(L^2_K)$, depends analytically on $z \in \mathbb{R}$.
\end{lemma}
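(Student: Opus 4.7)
The plan is to compute the characteristic polynomial of $M(z,\kappa)|_{\cE(z)}$ in a local analytic frame of the moving finite-dimensional eigenspace $\cE(z) := \Image \pi(z)$, and then verify that the resulting matrix has analytic entries in $z$. The key technical ingredient is that $\pi(z)$, \emph{a priori} only analytic as an operator on $L^2_K$, actually maps analytically into $H^2_K$, so that composition with the unbounded operator $-2i\kappa\cdot\nabla$ preserves analyticity.

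First, I would construct a local analytic basis of $\cE(z)$. By the Kato-type lemma invoked in the proof of Lemma \ref{lem:eigenproj}(3), the rank of the analytic projector $\pi$ is constant on $U$; call this dimension $m$. Fixing $z_0 \in U$ and a basis $\phi_1, \ldots, \phi_m$ of $\cE(z_0)$, the vectors $\psi_j(z) := \pi(z)\phi_j$ are analytic in $z$ as $L^2_K$-valued functions and, by continuity and a dimension count, form a basis of $\cE(z)$ on a neighborhood of $z_0$. To upgrade the regularity, I would use the identity $(-\Delta + zV - \mu(z))\pi(z) = 0$, which holds on $(a,b)$ by hypothesis and extends to $U$ by analytic continuation. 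Applied to $\phi_j$ it gives
\begin{equation}
    -\Delta \psi_j(z) = (\mu(z) - zV)\psi_j(z),
\end{equation}
so standard elliptic estimates bound $\|\psi_j(z)\|_{H^2_K}$ locally uniformly by $\|\psi_j(z)\|_{L^2_K}$. A Cauchy integral argument on the power series of $\psi_j$ in $L^2_K$, combined with these uniform $H^2_K$ bounds on a small contour around $z_0$, then promotes the $L^2_K$-analyticity of $\psi_j$ to $H^2_K$-analyticity.

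With that in hand, $(-2i\kappa\cdot\nabla)\psi_j(z)$ is analytic in $z$ as an $L^2_K$-valued function, and so is $M(z,\kappa)\psi_j(z) = \pi(z)\bigl[(-2i\kappa\cdot\nabla)\psi_j(z)\bigr]$. Writing
\begin{equation}
    M(z,\kappa)\psi_j(z) = \sum_{k=1}^m A_{kj}(z,\kappa)\,\psi_k(z),
\end{equation}
the matrix $A(z,\kappa) = (A_{kj})$ is recovered by inverting the Gram matrix $G_{k\ell}(z) = \langle \psi_k(z),\psi_\ell(z)\rangle_{L^2_K}$, which is analytic and invertible near $z_0$. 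Hence $A$ is analytic in $z$, and so is its characteristic polynomial $\det(\lambda I - A(z,\kappa))$. Since this characteristic polynomial is basis-independent, patching over $z_0 \in U$ yields analyticity on all of $U$. The main obstacle is precisely the regularity upgrade of the middle step: $L^2_K$-analyticity of $\pi(z)$ does not by itself survive composition with the unbounded operator $-2i\kappa\cdot\nabla$, so exploiting the eigenvalue equation to bootstrap $\pi(z)$ to an analytic $L^2_K\to H^2_K$ operator is the pivotal maneuver.
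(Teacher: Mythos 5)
Your proposal is correct, but it takes a genuinely different route from the paper. The paper avoids choosing any basis: it notes that the coefficients of the characteristic polynomial are symmetric functions of the eigenvalues, hence (by Newton's identities) universal polynomials in the power sums, which are the traces
\begin{equation}
    \Tr_{\cE(z)}\big( (M(z,\kappa) - \lambda)^j \big) = \Tr_{L^2_K}\big( (M(z,\kappa) - \lambda \pi(z))^j \big), \qquad j = 1,\ldots,m,
\end{equation}
and these are traces of finite-rank operators depending analytically on $z$, hence analytic. Your argument instead builds a local analytic frame $\psi_j(z) = \pi(z)\phi_j$ of the moving eigenspace, represents $M(z,\kappa)$ by a matrix $A(z,\kappa)$ obtained by inverting the Gram matrix of the frame, and then invokes basis-independence of the characteristic polynomial to patch over $U$. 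Both routes are valid; the paper's is shorter and coordinate-free, while yours is more concrete. One thing your version genuinely adds: you spell out the regularity bootstrap needed to make sense of $M(z,\kappa)=-\pi(z)(2i\kappa\cdot\nabla)\pi(z)$ as an analytic family in the first place --- namely that $L^2_K$-analyticity of $\pi(z)$ upgrades, via the eigenvalue equation, elliptic regularity, and a vector-valued Cauchy integral, to analyticity of $\pi(z)$ as a map into $H^2_K$, so that composing with the unbounded $-2i\kappa\cdot\nabla$ is harmless. The paper's proof tacitly relies on exactly this fact (its ``$M(z,\kappa)-\lambda\pi(z)$ is finite-rank and analytic'' needs $\nabla\pi(z)$ to be analytic on $L^2_K$), so making it explicit is a genuine service. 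A slightly shorter way to get the same upgrade is to read it off the Riesz representation $\pi(z)=-\frac{1}{2\pi i}\oint_\cC (H_z-\lambda)^{-1}\,d\lambda$, since $(H_z-\lambda)^{-1}$ is already an analytic family of bounded operators $L^2_K\to H^2_K$.
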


Lastly, in order to compute the characteristic polynomial of $M(z,\kappa)$, we will express this matrix with respect to a basis consisting of one vector from each of the subspaces $L^2_{K,\omega}$ for $\omega$ in some subset of $\UU$. This final lemma will help us simplify these computations.

\begin{lemma}\label{lem:mkentries}
Let $\phi \in L^2_{K,\omega}$ and let $\psi \in L^2_{K,\widetilde{\omega}}$ for $\omega, \widetilde{\omega} \in \UU$. Then for all $j \in \JJ$, $\langle \phi,\nabla \psi \rangle$ is an eigenvector of $g^j$ with corresponding eigenvalue $\omega^{-j}\widetilde{\omega}^j$. Moreover, if $K$ is a vertex of the Brillouin zone $\cB$, then $\langle \phi,\nabla \phi\rangle = 0$.
\end{lemma}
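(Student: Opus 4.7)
The plan is to prove both claims via a change-of-variables argument in the integral defining $\langle \phi,\nabla\psi\rangle$, using the polytope description of the Brillouin zone at the very end for the ``moreover'' part.

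For the first claim, I would begin from the characterizations $(g^j)_*\phi = \omega^j\phi$ and $(g^j)_*\psi = \widetilde{\omega}^j\psi$, which unfold as $\phi((g^j)^\top x) = \omega^j\phi(x)$ and the analogous identity for $\psi$. Substituting $x\mapsto g^j x$ and using orthogonality of $g^j$ converts these into the more convenient form $\phi(g^j x) = \omega^{-j}\phi(x)$ and $\psi(g^j x) = \widetilde{\omega}^{-j}\psi(x)$. I would then write
\begin{equation*}
    \langle\phi,\nabla\psi\rangle = \frac{1}{|\R^n/\Lambda|}\int_{\R^n/\Lambda}\overline{\phi(x)}\,\nabla\psi(x)\,dx,
\end{equation*}
observe that the integrand is $\Lambda$-periodic since the quasi-periodic phases of $\overline{\phi}$ and $\nabla\psi$ cancel, and perform the substitution $x = g^j y$. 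The chain rule applied to $y\mapsto\psi(g^j y)$ yields $\nabla\psi(g^j y) = g^j\nabla(\psi\circ g^j)(y) = \widetilde{\omega}^{-j}\,g^j\,\nabla\psi(y)$, while complex conjugation yields $\overline{\phi(g^j y)} = \omega^j\,\overline{\phi(y)}$ (using that each $\omega_i$ is a root of unity, so $\overline{\omega^{-j}}=\omega^j$). Pulling the scalars and the matrix $g^j$ outside the integral gives $\langle\phi,\nabla\psi\rangle = \omega^j\widetilde{\omega}^{-j}\,g^j\langle\phi,\nabla\psi\rangle$, which rearranges to the desired identity $g^j\langle\phi,\nabla\psi\rangle = \omega^{-j}\widetilde{\omega}^j\langle\phi,\nabla\psi\rangle$.

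For the ``moreover'' part I would apply the first claim with $\psi = \phi$, which immediately yields that $v := \langle\phi,\nabla\phi\rangle$ is fixed by every $g^j \in G_0$. Since $G_0$ acts by real orthogonal matrices, the $G_0$-invariant subspace of $\CC^n$ is the complexification of its real counterpart, so it suffices to show that the only $G_0$-fixed vector in $\R^n$ is zero. This is where the vertex hypothesis enters: at a vertex $K$ of the Wigner--Seitz cell $\cB$ of $\Lambda^*$, at least $n$ Bragg bisecting hyperplanes meet, so there exist linearly independent $\ell_1,\ldots,\ell_n \in \Lambda^*$ with $K\cdot\ell_i = \tfrac{1}{2}\|\ell_i\|^2$. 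Expanding $\|K-\ell_i\|^2 = \|K\|^2 - 2K\cdot\ell_i + \|\ell_i\|^2$ shows $\|K-\ell_i\| = \|K\|$, so $K-\ell_i \in [K]$, and by \ref{assump:3} there exist $h_i \in G_0$ with $K - \ell_i = h_iK$. For a $G_0$-fixed $v$, orthogonality of $h_i$ gives $v\cdot(K-\ell_i) = v\cdot h_iK = (h_i^{-1}v)\cdot K = v\cdot K$, forcing $v\cdot\ell_i = 0$ for every $i$; since the $\ell_i$ span $\R^n$, $v = 0$.

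The main obstacle is not the first claim, which is just careful bookkeeping of sign and conjugation conventions in an orthogonal change of variables (the chief pitfall being the placement of $\omega^j$ versus $\omega^{-j}$ and ensuring the chain-rule factor $g^j$ sits on the correct side of the scalar phases), but rather a clean articulation of the geometric input in the second claim: namely, the polytope-theoretic fact that a vertex of $\cB$ is cut out by at least $n$ linearly independent bisecting hyperplanes, together with the identification of these lattice vectors with elements of $G_0K$ via \ref{assump:3}.
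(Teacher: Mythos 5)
Your proof is correct and takes essentially the same approach as the paper's. The only cosmetic difference is in the first claim: the paper derives the identity abstractly by first establishing $g_*\nabla f = g^\top \nabla(g_* f)$ and then combining it with unitarity of $(g^j)_*$, whereas you run the same computation "by hand" via the change of variables $x = g^j y$ in the defining integral — these are the same argument in two presentations. The second ("moreover") part is essentially identical: both reduce to showing the only $G_0$-fixed vector is zero, both use the fact that a vertex of $\cB$ lies on at least $n$ hyperfaces to produce linearly independent $\ell_i \in \Lambda^*$ with $K - \ell_i \in [K]$ (the paper cites Proposition~\ref{prop:brillouinmult} here), both invoke \ref{assump:3} to write $K - \ell_i = h_i K$, and both conclude via $v \cdot \ell_i = 0$. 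Your remark about the complexified invariant subspace is harmless but unnecessary: the identity $v \cdot \ell_i = 0$ is $\C$-linear in $v$, so the spanning argument applies directly to complex $v$ without passing through the real case.
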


\subsection{Proofs of Lemmas
\ref{lem:evectors} -- \ref{lem:mkentries} and Theorem \ref{thm:1new}}\label{sec:3.5}

\begin{proof}[Proof of Lemma \ref{lem:evectors}]
We first note that $\|K\|^2$, as an $L^2_K$-eigenvalue of $-\Delta$, by Assumption \ref{assump:3} has multiplicity $|G_0K| = |G_0|$. Consequently, it suffices to prove that for each $\omega \in \UU$, the function $\phi_\omega$ in \eqref{eq:evectors} is a normalized $L^2_{K,\omega}$-eigenvector for the eigenvalue $\|K\|^2$, for then this eigenvalue on $L^2_{K,\omega}$ would necessarily be simple since $|\UU| = |G_0|$.

To see that $\phi_\omega$ is normalized, observe that the $|G_0|$ functions $e^{ig^jK\cdot x}$ form an orthonormal system because of Assumption \ref{assump:3}. Indeed, each of these exponentials is distinct, for if $g^jK = g^{j'}K$, then $g^{j-j'}K =K$, which implies $g^{j-j'} = I$ because $|G_0K|=|G_0|$. Therefore $g^j = g^{j'}$ and $j = j'$.

To show that $\phi_\omega \in L^2_{K,\omega}$, we first note that $\phi_\omega \in L^2_K$ since $g^jK \in K + \Lambda^*$ for every $j \in \JJ$. In addition, we compute that
\begin{equation}
    (g_1)_*\phi_\omega(x)  = \frac{1}{\sqrt{|G_0|}}\sum_{j \in \JJ} \omega^j  e^{ig^{-j}K\cdot g^\top x} = \frac{\omega_1}{\sqrt{|G_0|}}\sum_{j \in \JJ} \omega^{j-e_1}  e^{ig^{-j+e_1}K\cdot x} = \omega_1\phi_\omega(x),
\end{equation}
with similar identities when testing the pushforward operators by $g_2,\ldots,g_\ell$. We conclude by noting that $\phi_\omega$ is an eigenvector of $-\Delta$ with eigenvalue $\|K\|^2$ since $\|g^jK\| = \|K\|$ by virtue of $g^j$ being an orthogonal matrix for every $j \in \JJ$.
\end{proof}

\begin{proof}[Proof of Lemma \ref{lem:perturbz}]
By Lemma \ref{lem:evectors} and Corollary \ref{cor:simple}, there exist $\varepsilon,\delta > 0$ such that $H_z$ has a single eigenvalue $\mu(z)$ in $(\|K\|^2\pm \delta)$ for all $z \in (-\varepsilon,\varepsilon)$, given by:
\begin{align}\label{eq:zperturbprod}
    \mu(z) & =  \|K\|^2 + z\langle \phi_\omega, H_z'\phi_\omega\rangle + \cO(|z|^2)\\
    & =  \|K\|^2 + z\langle \phi_\omega, V\phi_\omega\rangle + \cO(|z|^2).
\end{align}
It therefore suffices to prove that 
\begin{equation}
     \langle \phi_\omega, V\phi_\omega \rangle = \sum_{j\in\JJ} \omega^j V_{m(j)}.
\end{equation}

Towards that end, first recall that $V$, viewed as a multiplication operator, commutes with $g_*$ for all $g \in G$ by virtue of $V$ being $\Lambda$-invariant. Consequently, by expanding $\phi_\omega$ via \eqref{eq:evectors}, we compute that
\begin{align}
    \langle \phi_\omega, V\phi_\omega \rangle & = \frac{1}{\sqrt{|G_0|}}\sum_{j \in \JJ} \omega^{-j}\left\langle e^{ig^{-j}K\cdot x},V\phi_\omega\right\rangle \\
    & = \frac{1}{\sqrt{|G_0|}}\sum_{j \in \JJ} \omega^{-j}\left\langle e^{iK\cdot x},V(g_*^j\phi_\omega)\right\rangle 
    = \frac{1}{\sqrt{|G_0|}}\sum_{j \in \JJ} \left\langle e^{iK\cdot x},V\phi_\omega\right\rangle 
    = \sqrt{|G_0|}\left\langle e^{iK\cdot x},V\phi_\omega\right\rangle\\ %
    & = \sum_{j \in \JJ}\omega^j\left\langle e^{iK\cdot x},Ve^{ig^{-j}K\cdot x}\right\rangle
    = \sum_{j \in \JJ}\omega^j\left\langle e^{i(K-g^{-j}K)\cdot x},V\right\rangle
    =\sum_{j\in\JJ} \omega^j V_{-m(j)}. \label{eq:fourierindex}
\end{align}
Note that in \eqref{eq:fourierindex}, we have used the fact that $K-g^{-j}K = - m(j)\cdot (k_1,\ldots,k_n)$. Lastly, since $V$ is necessarily even, $V_{-m(j)} = V_{m(j)}$ for all $j \in \JJ$, thus completing the proof.
\end{proof}

\begin{proof}[Proof of Lemma \ref{lem:perturbk}]
We first prove, using the Schur complement formula, that there exist $\varepsilon,\delta > 0$ such that for $\|\kappa\|< \varepsilon$ and $\mu \in \B_\delta(\mu(z))$, 
\begin{align}
    & H_z - \mu \text{ is invertible on }L^2_{K+\kappa}\\
    \Leftrightarrow\;  & (u(z)-\mu) + M(z,\kappa) + R(\mu,\kappa) \text{ is invertible on }\cE(z), \label{eq:complement}
\end{align}
where $M(z,\kappa)=-\pi(z) (2i\kappa\cdot\nabla)\pi(z)$ and $R(\mu,\kappa) \leq C_1\|\kappa\|^2$ for some $C_1 > 0$. Since $z \in \mathbb{R}$ is assumed to be fixed, for simplicity of notation we suppress the dependence of $\cE(z)$ on $z$, and denote this eigenspace simply by $\cE$. We also note that the operators $H_z$ on $L^2_{K+\kappa}$ and $H_{z,\kappa} := e^{-i\kappa\cdot x}H_ze^{i\kappa\cdot x}$ on $L^2_K$ have the same spectrum. Indeed, if $\phi(x) \in L^2_K$, then $\psi(x): = e^{i\kappa\cdot x}\phi(x) \in L^2_{K+\kappa}$. Furthermore, $\psi$ is an $L^2_{K+\kappa}$ eigenvector of $H_z$ with eigenvalue $\mu$ if and only if 
\begin{equation}
    H_{z,\kappa}\phi(x) = e^{-i\kappa\cdot x}H_z\psi(x) = \mu\phi(x),
\end{equation}
or, in other words, $\phi$ is an $L^2_K$ eigenvector of $H_{z,\kappa}$ with eigenvalue $\mu$. Therefore, it is equivalent to prove that $H_{z,\kappa} - \mu$ is invertible on $L^2_K$ if and only if $(u(z)-\mu) + M(z,\kappa) + R(\mu,\kappa) $ is invertible on $\cE$.

Write $H_{z,\kappa}$ as a $2\times 2$ block operator with respect to the decomposition $L^2_K = \cE\oplus \cE^\perp$:
\begin{equation}
    H_{z,\kappa} = \begin{pmatrix}[1.5]
        H_{z,\kappa}^{(11)} & H_{z,\kappa}^{(12)} \\
        H_{z,\kappa}^{(21)} & H_{z,\kappa}^{(22)}
    \end{pmatrix}.
\end{equation}
Letting $\pi(z)^\perp = I - \pi(z)$, we compute that
\begin{align}
    H_{z,\kappa}^{(11)} & = \pi(z)H_{z,\kappa}\pi(z) = \pi(z) e^{-i\kappa\cdot x}(-\Delta +zV(x))e^{i\kappa\cdot x}\pi(z)  \\
    & = \pi(z)(H_z - 2i\kappa \cdot \nabla+\|\kappa\|^2)\pi(z) \\
    & = \mu(z) + M(z,\kappa)+\|\kappa\|^2,\\
    H_{z,\kappa}^{(12)} & = \pi(z)(H_z - 2i\kappa \cdot \nabla+\|\kappa\|^2)\pi(z)^\perp \\
    & = -\pi(z)(2i\kappa \cdot \nabla)\pi(z)^\perp = \cO(\|\kappa\|),\\
    H_{z,\kappa}^{(21)} & = \left(H_{z,\kappa}^{(12)}\right)^* = \cO(\|\kappa\|).
\end{align}

Next, we claim that there exist $\varepsilon, \delta > 0$ such that for $\|\kappa\|< \varepsilon$ and $\mu \in \B_\delta(\mu(z))$, $H^{(22)}_{z,\kappa} - \mu$ is invertible and its inverse is uniformly bounded in $\mu$. To prove this, observe that since $H_z$ has a compact resolvent and is self-adjoint due to our assumption that $z \in \mathbb{R}$, we can order the distinct eigenvalues of $H_z$ so that there exist eigenvalues $\mu_- < \mu(z) < \mu_+$ and the remaining eigenvalues of $H_z$ are all strictly farther away from $\mu(z)$. Consequently, if we let
\begin{equation}
    \delta = \frac{1}{2}\min\big\{|\mu(z)-\mu_-|,|\mu(z)-\mu_+|\big\},
\end{equation}
it then follows that $H_z|_{\cE^\perp}$ has no eigenvalues in  $\B_\delta(\mu(z))$ since by construction $\mu(z)$ is not an eigenvalue of $H_z|_{\cE^\perp}$. Therefore $H_z|_{\cE^\perp} - \mu$ is invertible and its inverse satisfies $\|(H_z-\mu)|_{\cE^\perp}^{-1}\| \leq 1/\delta$. In addition, the fact that $H_z$ is self-adjoint implies that $H_z|_{\cE^\perp}(\cE^\perp)\subset \cE^\perp$, and as a result $(H_z - \mu)|_{\cE^\perp}^{-1}$ is a well-defined operator from $\cE^\perp$ to itself (and in fact is a bijection from $\cE^\perp$ to $H^2_K \cap \cE^\perp$ by elliptic regularity). 

Thus, for all $\mu \in \B_\delta(\mu(z))$, we have
\begin{align}
    H^{(22)}_{z,\kappa} - \mu & = \pi(z)^\perp(H_z -\mu - 2i\kappa \cdot \nabla+\|\kappa\|^2)\pi(z)^\perp  \\
    & = (H_z - \mu)|_{\cE^\perp} - \pi(z)^\perp (2i\kappa \cdot \nabla)\pi(z)^\perp+\|\kappa\|^2 \\
    & = (H_z - \mu)|_{\cE^\perp}\left(I - T(\mu,\kappa)\right),
\end{align}
where
\begin{equation}
    T(\mu,\kappa)  = (H_z - \mu)|_{\cE^\perp}^{-1}\left(\pi(z)^\perp (2i\kappa \cdot \nabla)\pi(z)^\perp+\|\kappa\|^2\right).
\end{equation}
Again using elliptic regularity, for each $\mu \in \B_\delta(\mu(z))$ there exists some $C_\mu>0$ such that $\|T(\mu,\kappa)\|\leq C_\mu\|\kappa\|$. In addition, since $\B_\delta(\mu(z))$ is contained in the resolvent set of $H_z$, $T(z,\kappa)$ is continuous in $\mu$ on this set, which together with the fact that $\B_\delta(\mu(z))$ is precompact, means that there exists some $C >0$, uniform in $\mu$, such that $\|T(\mu,\kappa)\|\leq C\|\kappa\|$. Therefore, if we set $\varepsilon = 1/C$, it follows from a Neumann series argument such as the one following \eqref{eq:neuman} that $H^{(22)}_{z,\kappa} - \mu$ is invertible and its inverse satisfies
\begin{equation}
    (H^{(22)}_{z,\kappa} - \mu)^{-1} = (H_z - \mu)|_{\cE^\perp}^{-1}+\cO(\|\kappa\|),
\end{equation}
uniformly in $\mu$, for $\mu \in \B_\delta(\mu(z))$ and $\|\kappa\| < \varepsilon$. 

Since $H^{(22)}_{z,\kappa} - \mu$ is invertible for all $\mu \in \cB_\delta(\mu(z)$ and $\|\kappa\| < \varepsilon$, the Schur complement of the block $H^{(22)}_{z,\kappa} - \mu$ is well-defined for all such $\mu$ and $\kappa$ and is given by:
\begin{align}
    (H^{(11)}_{z,\kappa} - \mu) + H^{(12)}_{z,\kappa}(H^{(22)}_{z,\kappa} - \mu)^{-1}H^{(21)}_{z,\kappa} & = (\mu(z)-\mu) + M(z,\kappa) + R(\mu,\kappa), \quad \text{where}\\
    R(\mu,\kappa) & = \|\kappa\|^2 + H^{(12)}_{z,\kappa}(H^{(22)}_{z,\kappa} - \mu)^{-1}H^{(21)}_{z,\kappa}. \label{eq:schurcomplement}
\end{align}
However, since $\|(H^{(22)}_{z,\kappa} - \mu)^{-1}\| \leq 1/\delta + C\|\kappa\|$ uniformly in $\mu$, it follows that $\|R(\mu,\kappa )\| \leq C_1\|\kappa\|^2$ for some $C_1 > 0$, thus proving \eqref{eq:complement}. Using this, we now prove (1), (2), and (3) of Lemma \ref{lem:perturbk}.

(1) This is immediate from \eqref{eq:complement}, since $\mu$ is an $L^2_{K+\kappa}$-eigenvalue of $H_z$ if and only if $H_z - \mu$ is not invertible, and $(\mu(z)-\mu) + M(z,\kappa) + R(\mu,\kappa)$ is not invertible on $\cE$ if and only if its determinant is zero.

(2) Suppose $\lambda(z,\kappa)$ is a simple eigenvalue of $M(z,\kappa)$, continuous in $\kappa$ on some open set $U \subset B_\varepsilon(0)$ such that $\sup_{\kappa \in U}|\lambda(z,\kappa)| < \delta$. Then for any $\kappa_0 \in U$, by continuity of $\lambda$ there exists a neighborhood $U_0 \subset U$ of $\kappa_0$ and a $ \delta_0 < \delta$ such that $\sup_{\kappa \in U_0}|\lambda(z,\kappa)| < \delta_0$. Thus, by simplicity, there exists a simple, closed, positively-oriented contour $\cC$ contained in $\B_\delta(\mu(z))$, such that $\cC$ strictly encloses $\mu(z) + \lambda(z,\kappa)$ and no other eigenvalue of $\mu(z) + M(z,\kappa)$ for all $\kappa \in U_0$. In addition, since $\cC$ and $\overline{U_0}$ are compact (since $U_0 \subset B_\varepsilon(0)$) and $((\mu(z) - \mu) + M(z,\kappa))^{-1}$ is continuous in $\mu,\kappa$ for all $\mu\in \cC$ and $\kappa \in U_0$, there exists $C_2 > 0$ such that 
\begin{equation}\label{eq:Mbound}
    \|((\mu(z) - \mu) + M(z,\kappa))^{-1}\| \leq C_2
\end{equation}
for all $\mu \in \cC$ and $\kappa \in U_0$.

We now want to use Cauchy's integral formula to relate the eigenvalues of $H_{z,\kappa}$ to those of $M(z,\kappa)$. To do so, we now prove that the operator $(\mu(z)-\mu) + M(z,\kappa) + R(\mu,\kappa)$ is invertible and its inverse is uniformly bounded in $\mu$ for all $\mu \in \cC$ and $\kappa \in U_0$. First, observe that
\begin{align}
    (\mu(z)-\mu) + M(z,\kappa) + R(\mu,\kappa) & = ((\mu(z)-\mu) + M(z,\kappa))\left(I+ S(z,\kappa)\right), \quad \text{ where } \\
    S(z,\kappa) & = ((\mu(z)-\mu) + M(z,\kappa))^{-1}\cdot \cO(\|\kappa\|^2).
\end{align}
By \eqref{eq:Mbound}, after increasing $C_2$ if necessary, $\|S(z,\kappa)\| \leq C_2\|\kappa\|^2$ for all $\mu \in \cC$. Therefore, by replacing $\varepsilon$ with the minimum of itself and $1/\sqrt{C_2}$, it follows from another Neumann series argument that $(\mu(z)-\mu) + M(z,\kappa) + R(\mu,\kappa)$ is invertible and its inverse satisfies
\begin{equation}
    \left(((\mu(z)-\mu) + M(z,\kappa) + R(\mu,\kappa))\right)^{-1} = ((\mu(z)-\mu) + M(z,\kappa))^{-1} + \cO(\|\kappa\|^2)
\end{equation}
for all $\mu \in \cC$ and $\kappa\in U_0$, where again the bound is uniform in $\mu$.

Thus, for all such $\mu$ and $\kappa$, we can write $(H_{z,\kappa} -\mu)^{-1}$ with respect to the decomposition $L^2_K = \cE\oplus \cE^\perp$ as
\begin{equation}
    (H_{z,\kappa} -\mu)^{-1} = \begin{pmatrix}[1.5]
         ((\mu(z)-\mu) + M(z,\kappa))^{-1} + \cO(\|\kappa\|^2) & \cO(\|\kappa\|) \\
        \cO(\|\kappa\|) & (H^{(22)}_{z,\kappa} - \mu)^{-1}+\cO(\|\kappa\|^2)
    \end{pmatrix},
\end{equation}
where all bounds are uniform in $\mu$. Consequently, by applying Cauchy's integral formula and taking the trace of both sides, we get
\begin{align}
    \Tr \bigg(\frac{1}{2\pi i}\oint_\cC \mu(H_{z,\kappa} -&  \mu)^{-1}d\mu \bigg) \\
    & = \Tr \left(\frac{1}{2\pi i}\oint_\cC \mu((\mu(z)-\mu) + M(z,\kappa))^{-1}d\mu \right )\\
    & \quad + \Tr \left(\frac{1}{2\pi i}\oint_\cC \mu(H^{(22)}_{z,\kappa} - \mu)^{-1}d\mu \right) + \Tr \left(\frac{1}{2\pi i}\oint_\cC \mu\cdot \cO(\|\kappa\|^2)d\mu \right)\\
    & = \mu(z) + \lambda(z,\kappa)+\cO(\|\kappa\|^2). \label{eq:laststep}
\end{align}
To compute  \eqref{eq:laststep}, we have used three facts. First, we used that $\lambda(z,\kappa)$ is a simple eigenvalue of $M(z,\kappa)$ and the only eigenvalue contained in $\cC$. Second, we used that $\mu(H^{(22)}_{z,\kappa} - \mu)^{-1}$ is analytic in $\mu$ on $\B_\delta(\mu(z))$, and so its integral on $\cC$ equals zero. Lastly, we used that the integral of $\mu\cdot\cO(\|\kappa\|^2)$ is $\cO(\|\kappa\|^2)$, due to the ML inequality and since the bound is uniform in $\mu$. An identical argument also tells us that 
\begin{equation}
    \Tr \left(\frac{1}{2\pi i}\oint_\cC (H_{z,\kappa} -  \mu)^{-1}d\mu \right )  = \Tr \left(\frac{1}{2\pi i}\oint_\cC ((\mu(z)-\mu) + M(z,\kappa))^{-1}d\mu \right) +\cO(\|\kappa\|^2). \label{eq:laststep2}
\end{equation}

From here, note that $H_{z,\kappa}$ is self-adjoint since it is unitarily equivalent to $H_z$, and it is an analytic family of type (A) in each component of $\kappa = (\kappa_1,\ldots,\kappa_n)$, as per Definition \ref{def:typeA}. Consequently, by the residue theorem, it follows that if we let 
\begin{equation}
    \pi(\kappa) = \frac{1}{2\pi i}\oint_\cC (H_{z,\kappa} -  \mu)^{-1}d\mu ,
\end{equation}
then $\pi(\kappa)$ is the projection onto the eigenspaces corresponding to eigenvalues of $H_{z,\kappa}$ contained in $\cC$, and it is analytic in each component of $\kappa$ since its integrand is. Therefore \eqref{eq:laststep2} becomes
\begin{equation}
    \rank \pi(\kappa) = 1 + \cO(\|\kappa\|^2).
\end{equation}
Since $\pi(\kappa)$ is analytic in each component of $\kappa$, its rank must be constant, and we therefore deduce that $H_{z,\kappa}$ has a single, simple eigenvalue $\mu(z,\kappa)$ in $\cC$. It then follows from \eqref{eq:laststep} that 
\begin{equation}
    \mu(z,\kappa) = \mu(z) + \lambda(z,\kappa)) +\cO(\|\kappa\|^2)
\end{equation}
for all $\kappa \in U_0$. Since this equation holds on a neighborhood of $\kappa_0$ for every $\kappa_0 \in U$, we conclude that \eqref{eq:relation} holds on all of $U$.

(3) Assume that $M(z,\kappa) = 0$; then by (1), for $\|\kappa \| < \varepsilon$, the $L^2_{K+\kappa}$-eigenvalues of $H_z$ in $B_\delta(\mu(z))$ are equal to the eigenvalues of $\mu(z) + R(\mu,\kappa)|_\cE$, which in turn are equal to $\mu(z)$ plus the eigenvalues of $R(\mu,\kappa)$. However, if $\lambda(\kappa)$ is an eigenvalue of $R(\mu,\kappa)|_\cE$, then 
\begin{equation}
    |\lambda(\kappa)| \leq \|R(\mu,\kappa)\| \leq C_1\|\kappa\|^2.
\end{equation}
Therefore, for $\|\kappa \| < \varepsilon$, the $L^2_{K+\kappa}$-eigenvalues of $H_z$ in $B_\delta(\mu(z))$ satisfy $\mu(z,\kappa) = \mu(z) + \cO(\|\kappa\|^2)$. 
\end{proof}

\begin{proof}[Proof of Lemma \ref{lem:mkanalytic}]
Let $m = \dim \cE(z)$; by Proposition \ref{prop:stronger}, $m$ is independent of $z$. As a symmetric function of eigenvalues, the determinant of $M(z,\kappa) -\lambda$ can be expressed as a (universal) polynomial in the traces of
its $m$ first powers. This means that $\det_{\cE(z)}(M(z,\kappa) -\lambda)$ is polynomial (with coefficients independent of $z$, $\kappa$, and $\lambda$) in
\begin{equation}
    \Tr_{\cE(z)}\big( (M(z,\kappa) - \lambda)^j \big) = \Tr_{L^2_K}\big( (M(z,\kappa) - \lambda \pi(z) )^j \big), \quad \text{ for }j=1,\ldots, m.
\end{equation}
The operator $M(z,\kappa) - \lambda \pi(z)$ is finite-rank and analytic in $z$, and hence its trace is analytic in $z$. Thus $\det_{\cE(z)} \big( M(z,\kappa) - \lambda \big)$ is analytic in $z$. 
\end{proof}

\begin{proof}[Proof of Theorem \ref{thm:1new}]
For the sake of thoroughness, we remark that Theorem \ref{thm:1new} is then an immediate consequence of Lemma \ref{lem:perturbk}(1) and Lemma \ref{lem:mkanalytic}. 
\end{proof}

\begin{proof}[Proof of Lemma \ref{lem:mkentries}]
We start by looking at how the group action of $G_0$ interacts with the gradient of a function $f\in L^2_{K}$. Let $g \in G_0$; then
\begin{equation}
    \nabla( g_*f)(x) = \nabla (f(g^\top x)) = g(\nabla f)(g^\top x) = g(g_*\nabla f)(x). 
\end{equation}
Multiplying the first and last of these expressions on the right by $g^\top$, we get 
\begin{equation}\label{eq:gradaction}
    g_*\nabla f = g^\top\nabla(g_*f).
\end{equation}

Now let $\phi \in L^2_{K,\omega}$ and let $\psi \in L^2_{K,\widetilde{\omega}}$ for $\omega, \widetilde{\omega} \in \UU$. To show that $\langle \phi,\nabla \psi \rangle$ is an eigenvector of $g^j$ with corresponding eigenvalue $\omega^{-j}\widetilde{\omega}^j$ for all $j \in \JJ$, we compute the following, using \eqref{eq:gradaction} and the fact that $(g^j)^\top = g^{-j}$ since $G_0$ consists of orthogonal matrices:
\begin{equation}\label{eq:offdiagonal}
    g^j\langle \phi,\nabla \psi \rangle = g^j\langle g^j_*\phi,g^j_*\nabla \psi \rangle = \langle g^j_*\phi,\nabla(g^j_* \psi) \rangle =  \overline{\omega^{j}}\widetilde{\omega}^j\langle \phi,\nabla \psi \rangle = \omega^{-j}\widetilde{\omega}^j\langle \phi,\nabla \psi \rangle.  
\end{equation}

Lastly, to show that $\langle \phi,\nabla \phi \rangle = 0$ when $K$ is a vertex of $\cB$, note that \eqref{eq:offdiagonal} implies that $\langle \phi,\nabla \phi \rangle$ is an eigenvector of $g^j$ with eigenvalue 1 for all $j \in \JJ$.  However, we claim that the only such vector is the zero vector.  Suppose $k \in \mathbb{R}^n$ such that $gk$ for all $g \in G_0$. Then, since $K$ is assumed to be a vertex of $\cB$, it necessarily must lie on at least $n$ hyperfaces of $\cB$, and therefore, by Proposition \ref{prop:brillouinmult}, there exist linearly independent lattice vectors $K_1,\ldots,K_n \in \Lambda^*$ such that $K - K_j$ is also a vertex of $\cB$ satisfying $\|K-K_j\| = \|K\|$, and therefore contained in the equivalence class $[K]$. By Assumption \ref{assump:3}, for $j=1,\ldots, n$, there exists $h_j \in G_0$ (where the notation here is chosen so as to differentiate $h_j$ from the generator $g_j$) such that $h_jK = K-K_j$. Thus we get that, for all $j$, 
\begin{equation}
    k\cdot K_j = k\cdot (K- h_jK) = k\cdot K - h_j^\top k \cdot K = k\cdot K - k \cdot K = 0. 
\end{equation}
Since the set $\{K_j\}_{j=1}^n$ is linearly independent, it is a basis for $\mathbb{R}^n$, and it therefore follows that $k = 0$. As a result, we conclude that $\langle \phi,\nabla \phi\rangle  = 0$.
\end{proof}

\section{Schr{\"o}dinger Operators Invariant Under Cubic Lattices}\label{chp:4}

In this section, we focus on Schr{\"o}dinger operators invariant under {\it cubic lattices}, which are lattices whose point groups are isomorphic to the {\it octahedral group}. Every such lattice is isometric, up to a dilation, to one of the three lattices generated by the bases listed in row 1 of Table \ref{tab:geometry}; these lattices are called the {\it simple cubic}, {\it body-centered cubic}, and {\it face-centered cubic}, respectively. Using the general theory developed in \S\ref{chp:3}, we prove that the dispersion surfaces of such Schr{\"o}dinger operators generically have unusual dispersion surfaces near vertices of the Brillouin zone: Theorem \ref{thm:2}.

\subsection{Geometry of Cubic Lattices.}\label{sec:4.1}

\begin{figure}[!ht]
    \centering
    \begin{tabular}{|c|c|c|c|c|}
        \hline
        &  & \multicolumn{2}{|c|}{} & \\[-1em]
        & $\Lambda^{S}$ & \multicolumn{2}{|c|}{$\Lambda^{BC}$} & $\Lambda^{FC}$ \\
        \hline
        & & \multicolumn{2}{|c|}{} & \\
        \text{Basis} & $\begin{pmatrix}1 \\ 0 \\ 0\end{pmatrix}, \begin{pmatrix}0 \\ 1 \\ 0\end{pmatrix},\begin{pmatrix}0 \\ 0 \\ 1\end{pmatrix}$ & 
        \multicolumn{2}{|c|}{$\begin{pmatrix}1 \\ 0 \\ 0\end{pmatrix}, \begin{pmatrix}0 \\ 1 \\ 0\end{pmatrix},\begin{pmatrix}1/2 \\ 1/2 \\ 1/2\end{pmatrix}$} & 
        $\begin{pmatrix}1/2 \\ 1/2 \\ 0\end{pmatrix}, \begin{pmatrix}-1/2 \\ 1/2 \\ 0\end{pmatrix},\begin{pmatrix}0 \\ -1/2 \\ 1/2\end{pmatrix}$ \\
        & & \multicolumn{2}{|c|}{} & \\
        \hline
        & & \multicolumn{2}{|c|}{} & \\
            \text{Dual Basis} & $\begin{pmatrix}2\pi \\ 0 \\ 0\end{pmatrix}, \begin{pmatrix}0 \\ 2\pi \\ 0\end{pmatrix},\begin{pmatrix}0 \\ 0 \\ 2\pi\end{pmatrix}$ & 
            \multicolumn{2}{|c|}{$\begin{pmatrix}2\pi \\ 0 \\ -2\pi\end{pmatrix}, \begin{pmatrix}0 \\ 2\pi \\ -2\pi\end{pmatrix},\begin{pmatrix}0 \\ 0 \\ 4\pi\end{pmatrix}$} & 
        $\begin{pmatrix}2\pi\\ 2\pi \\ 2\pi\end{pmatrix}, \begin{pmatrix}-2\pi \\ 2\pi \\ 2\pi\end{pmatrix},\begin{pmatrix}0 \\ 0 \\ 4\pi\end{pmatrix}$ \\
        & & \multicolumn{2}{|c|}{} & \\
        \hline
        & & \multicolumn{2}{|c|}{} & \\
        $\cB$ & \includegraphics[width=0.17\textwidth]{Images/SCtypes.png} &\multicolumn{2}{|c|}{\includegraphics[width=0.17\textwidth]{Images/BCCtypes.png}} & \includegraphics[width=0.17\textwidth]{Images/FCCtypes.png} \\
        & & \multicolumn{2}{|c|}{} & \\
        \hline
        & & & & \\
        $K$ & $\begin{pmatrix} \pi \\ \pi \\ \pi \end{pmatrix}$ & $\begin{pmatrix}  \pi \\ \pi \\ \pi  \end{pmatrix}$ &  $\begin{pmatrix} 0 \\0 \\ 2\pi \end{pmatrix}$ &  $\begin{pmatrix} 0 \\ 2\pi \\ \pi  \end{pmatrix}$  \\
        & & & & \\
        \hline
        & & & &  \\
       $m$ & 8 & 4 & 6 & 4 \\
        & & & & \\
       \hline
        & & & &  \\
        $G_0$ & $\langle f_1,f_2,f_3\rangle$ 
        & $\langle f_{13},f_{23}\rangle$ 
        & $\langle r,f\rangle$ 
        & $\langle s_0 \rangle$ 
        \\
        & & & &  \\
        \hline
        & & & &  \\
        $\UU$ & $U_2^3$ & $U_2^2$ & $ U_3 \times U_2$ &  $U_4$ \\
        & & & &  \\
        \hline
    \end{tabular}
    \caption{Geometry of the cubic lattices.}
    \label{tab:geometry}
\end{figure}

Let $\Lambda^{S}$, $\Lambda^{BC}$, and $\Lambda^{FC}$ denote the simple cubic, body-centered cubic, and face-centered cubic lattice, respectively, which are generated by the bases given in row 1 of Table \ref{tab:geometry}. We then give spectral results for $-\Delta$ seen as a $\Lambda^S$, $\Lambda^{BC}$, and $\Lambda^{FC}$ invariant operator on $L^2_K$, where $K$ is a vertex of the corresponding Brillouin zone $\cB$, as these points exhibit a high degree of symmetry (see Proposition \ref{prop:brillouinmult}). In addition, as noted in \S \ref{sec:3.2}, it suffices to consider vertices $K$ which have distinct orbits under the action of the point group $G$.

As previously discussed, the point group of the lattices $\Lambda^S, \Lambda^{BC}$ and $\Lambda^{FC}$ is the octahedral group, which we denote by $G$, and which is generated by the three matrices
\begin{equation}\label{eq:generators}
    r := \begin{pmatrix}0 & 0 & 1 \\ 1 & 0 & 0 \\ 0 & 1 & 0  \end{pmatrix}, \quad
    s := \begin{pmatrix}0 & 1 & 0 \\ 1 & 0 & 0 \\ 0 & 0 & -1  \end{pmatrix}, \quad \text{and}\quad 
    f := \begin{pmatrix}-1 & 0 & 0 \\ 0 & -1 & 0 \\ 0 & 0 & -1  \end{pmatrix}.
\end{equation}
We shall also later need the following elements of $G$:
\begin{equation}\label{eq:elements1}
    f_1:= \begin{pmatrix}-1 & 0 & 0 \\ 0 & 1 & 0 \\ 0 & 0 & 1  \end{pmatrix}, \quad f_2:=\begin{pmatrix}1 & 0 & 0 \\ 0 & -1 & 0 \\ 0 & 0 & 1  \end{pmatrix}, \quad f_3:=\begin{pmatrix}1 & 0 & 0 \\ 0 & 1 & 0 \\ 0 & 0 & -1  \end{pmatrix}, 
\end{equation}
\begin{equation}\label{eq:elements2}
    f_{12}:= \begin{pmatrix}-1 & 0 & 0 \\ 0 & -1 & 0 \\ 0 & 0 & 1  \end{pmatrix}, \quad f_{13}:=\begin{pmatrix}-1 & 0 & 0 \\ 0 & 1 & 0 \\ 0 & 0 & -1  \end{pmatrix}, \quad f_{23}:=\begin{pmatrix}1 & 0 & 0 \\ 0 & -1 & 0 \\ 0 & 0 & -1  \end{pmatrix}, 
\end{equation}
\begin{equation}\label{eq:elements3}
     \text{ and } \quad s_0 := \begin{pmatrix}0 & 1 & 0 \\ -1 & 0 & 0 \\ 0 & 0 & -1  \end{pmatrix}.
\end{equation}

In Table \ref{tab:geometry}, in addition to a basis $v_1, v_2, v_3$, we list for each of the lattices $\Lambda^S$, $\Lambda^{BC}$, and $\Lambda^{FC}$:
\begin{itemize}
    \item The corresponding dual basis $k_1, k_2, k_3$ (which by definition satisfies $v_j \cdot k_\ell = 2\pi \delta_{j\ell}$ for $j,\ell=1,2,3$);
    \item A picture of the Brillouin zone $\cB$, where the vertices are colored and sized in reference to the vertices listed in the following row. Specifically, given a vertex $K$, the set $[K]$ (defined in \eqref{eq:equivclass}) consists of vertices of $\cB$ by Proposition \ref{cor:brillouinvert}, which are colored the same and have larger dots. Those vertices which lie in the same orbit under $G$ but are not in $[K]$ are colored the same but have smaller dots;
    \item Vertices $K$ of the Brillouin zone, corresponding to distinct orbits under the action of $G$.
    \item The multiplicity $m$ of the $L^2_K$-eigenvalue $\| K\|^2$ of $-\Delta$, equal to the cardinality of the set $[K]$;
    \item An abelian subgroup $G_0$ of $G$, expressed in terms of its generators, which together with the vertex $K$ satisfy Assumption \ref{assump:3};
    \item The corresponding group $\UU$ consisting of tuples of roots of unity, as defined in \eqref{eq:JU}.
\end{itemize}

\subsection{Proof Outline for Theorem \ref{thm:2}}\label{sec:4.2}

In each of the following four sections, we shall prove Theorem \ref{thm:2} for one of the three cubic lattices together with one of the vertices $K$ listed in row 4 of Table \ref{tab:geometry} by using the lemmas stated in \S \ref{sec:3.4}. Each of these proofs will require the same three steps, which we now outline.

{\it (1) Upper bound on multiplicity:} Let $\Lambda$ be one of the three cubic lattices listed in Table \ref{tab:geometry}, let $K$ be one of the listed vertices for this lattice, and let $(k_1,k_2,k_3)$, $\cB$, $m$, $G_0$, and $\UU$ be the objects listed in the column corresponding to this vertex. We also let $V$ be a $\Lambda$-invariant potential and let $H_z = -\Delta+z V$.  Then for each $\omega \in \UU$, Lemma \ref{lem:perturbz} describes how the multiplicity $m$, $L^2_K$-eigenvalue $\|K\|^2$ of $H_0 =-\Delta$ splits as $z$ increases into simple $L^2_{K,\omega}$-eigenvalues given by:
\begin{equation}
    \mu_\omega(z) = \|K\|^2 + z \cdot \sum_{j \in \JJ}\omega^jV_{m(j)} + \cO(|z|^2),
\end{equation}
In particular, if $\omega, \widetilde{\omega} \in \UU$ are such that $\mu'_\omega(0), \mu'_{\widetilde{\omega}}(0)$ are distinct, then the eigenvalues $\mu_\omega(z), \mu_{\widetilde{\omega}}(z)$ clearly split. This test provides an upper bound on possible multiplicities of $\mu_\omega(z)$, viewed as an $L^2_K$-eigenvalue.

{\it (2) Lower bound on multiplicity:} Our argument in step (1) is inconclusive when $\mu'_\omega(0) = \mu'_{\widetilde{\omega}}(0)$, and so in this case we provide a lower bound on the splitting multiplicities using a symmetry argument. Note that the multiplicity of $\mu_\omega(z)$ as an $L^2_K$-eigenvalue is at least one, so it suffices to prove a lower bound on the multiplicity of $\mu_\omega(z)$ for those $\omega \in \UU$ such that the upper bound computed in (1) is strictly greater than 1. This argument will typically rely on the existence of symmetries $S$ of $H_z$ such that 
\begin{equation}
    S \big( L^2_{K,\omega} \big) = L^2_{K,\widetilde{\omega}}.
\end{equation}
This implies that $H_z$ on $L^2_{K,\lambda}$ and $L^2_{K,\lambda}$ are conjugated, hence isospectral: $\mu_\omega(z) = \mu_{\widetilde{\omega}}(z)$. In each case, this will provide a lower bound on the multiplicity of $\mu_\omega(z)$ as an $L^2_K$-eigenvalue equaling the upper bound computed in step (1), and thus we deduce that $\mu_\omega(z)$ has constant multiplicity, which we denote by $m_\omega$, for sufficiently small $z$. 

For such $z$, it then follows that $\mu_\omega(z)$ is equal to one of the eigenvalues of $H_z$ whose existence is guaranteed by Theorem \ref{thm:typeA}, and can thus be extended to an analytic function on $\mathbb{R}$, such that $\mu_\omega(z)$ is an eigenvalue of $H_z$ for all $z \in \mathbb{R}$. We can then apply Proposition \ref{prop:stronger} to conclude that $H_z$ has an $L^2_K$-eigenvalue $\mu_\omega(z)$ which has multiplicity $m_\omega$ for $z \in \mathbb{R}$ away from a discrete set $D_1$, and whose eigenprojector $\pi_\omega(z)$ is analytic on $\mathbb{R}$.

{\it (3) Computation of the characteristic polynomial:} Let 
\begin{equation}
    \UU_\omega = \{\widetilde{\omega} \in \UU \;:\;\mu'_{\widetilde{\omega}}(0) = \mu'_\omega(0) \},
\end{equation}
so that $m_\omega = |\UU_\omega|$ and $\mu_\omega = \mu_{\widetilde{\omega}}$ for all $\widetilde{\omega} \in \UU_\omega$ by steps (1) and (2). Then for all such $\widetilde{\omega}$ and sufficiently small $z$,
\begin{equation}
    \big(H_z - \mu_\omega(z)\pi_\omega(z)\big)|_{L^2_{K,\widetilde{\omega}}} = 0,
\end{equation}
and by analyticity this must hold for all $z \in U$. Therefore $\mu_\omega(z)$ is an $L^2_{K,\widetilde{\omega}}$-eigenvalue of multiplicity at least one for all $z \in U$, and thus is a simple $L^2_{K,\widetilde{\omega}}$-eigenvalue for all $z\in \mathbb{R}\setminus D_1$. As a result, for all such $z$ there exists a basis $(\phi_1,\ldots,\phi_{m_\omega})$, normalized to have $L^2_K$-norm 1, of the eigenspace $\cE$ corresponding to $\mu_\omega(z)$ consisting of precisely one vector from $L^2_{K,\widetilde{\omega}}$ for each $\widetilde{\omega} \in \UU_\omega$.

For $z \in \mathbb{R}\setminus D_1$, Lemma \ref{lem:perturbk} then describes the structure of the dispersion surfaces corresponding to $\mu_\omega(z)$ near the vertex $K$. For each of the lattices we examine, one of two things happens: either all of the eigenvalues of $M(z,\kappa) =-\pi_\omega(z)(2i\kappa\cdot \nabla)\pi_\omega(z)|_\cE$ are simple on an open set (not necessarily connected) near $\kappa = 0$, or $M(z,\kappa)$ is identically 0. In the first case, if $\lambda(z,\kappa)$ is a simple eigenvalue of $M(z,\kappa)$, then Lemma \ref{lem:perturbk}(2) tells us there exists a simple eigenvalue $\mu_\omega(z,\kappa)$ of $H_z$ on $L^2_{K+\kappa}$ such that
\begin{equation}\label{eq:holds}
    \mu_\omega(z,\kappa) = \mu_\omega(z) + \lambda(z,\kappa) + \cO(\|\kappa\|^2).
\end{equation}
Note that \eqref{eq:holds} also always holds at $\kappa = 0$, since $\lambda(z,0) = 0$ by virtue of $M(z,0) = 0$, although $\mu_\omega(z,\kappa)$ will typically no longer be simple at this point. As a result, in the specific case where the eigenvalues of $M(k)$ are simple on a punctured neighborhood of $\kappa = 0$, the \eqref{eq:holds} in fact holds on a neighborhood of $\kappa = 0$.  On the other hand, if $M(z,\kappa)$ is identically zero, then Lemma \ref{lem:perturbk}(3) tells us that every dispersion surface corresponding to $\mu_\omega(z)$ near the vertex $K$ satisfies 
\begin{equation}
    \mu_\omega(z,\kappa) = \mu_\omega(z)  + \cO(\|\kappa\|^2),
\end{equation}
which immediately implies that $(K, \mu(z))$ is a quadratic point (as per Definition \ref{def:1}).

Using the basis $(\phi_1,\ldots,\phi_{m_\omega})$, we can then compute the entries of $M(z,\kappa)$ with respect to this basis using Lemma \ref{lem:mkentries}. In particular, this lemma tells us that the diagonal entries of $M(z,\kappa)$ are all zero, and we only need to compute the entries above the diagonal since $M(z,\kappa)$ is Hermitian. Once we have an explicit expression for $M(z,\kappa)$, we can then compute its characteristic polynomial. 

We then finish by checking which of the coefficients of this polynomial are nonzero for $z$ sufficiently small, which by Lemma \ref{lem:mkanalytic} will then imply that these coefficients remain nonzero for all $z \in \mathbb{R}$ away from a discrete set $D_2$ by analyticity. To perform this computation, we will typically use the fact that a normalized eigenvector $\phi_{\omega}(x;z)$ corresponding to the $L^2_{K,\omega}$-eigenvalue $\mu_\omega(z)$ satisfies
\begin{equation}\label{eq:description}
    \phi_{\omega}(x;z) = \phi_{\omega}(x) + \cO(|z|),
\end{equation}
where $\phi_{\omega}(x)$ is the normalized eigenvector corresponding to $\mu_\omega(0)$ given by \eqref{eq:evectors}. This follows from the observation that $\pi_\omega(z)\phi_\omega$ is an eigenvector corresponding to $\mu_\omega(z)$ for $z$ sufficiently small, and the fact that $ \pi_\omega(z) = \pi_\omega(0)+\cO(|z|)$ by a Neumann series argument. Letting $D = D_1 \cup D_2$, we then conclude that  \eqref{eq:description} will hold for all $z \in \mathbb{R}$ away from the discrete set $D$.

\subsection{Proof of Theorem \ref{thm:2} for the Simple Cubic}\label{sec:4.3}

Let $\Lambda = \Lambda^S$, and let $(k_1,k_2,k_3)$, $\cB$, $K$, $m$, $G_0$, and $\UU$ be the objects listed in corresponding column (i.e. the first column) of Table \ref{tab:geometry}. We also let $V$ be a $\Lambda$-invariant potential and let $H_z = -\Delta+z V$. Lastly we will need the group elements $r,s,f$ defined in \eqref{eq:generators} and $f_1,f_2,f_3$ defined in \eqref{eq:elements1}.

{\it (1) Upper bound on multiplicity:} For each $\omega \in \UU$, $\mu'_\omega(0)$ is given by:
\begin{equation}
    \mu'_\omega = \sum_{j \in \JJ}\omega^jV_{m(j)}.
\end{equation}
A quick computation shows that, for $j=1,2,3$, 
\begin{equation}
    f_j^{-1}K = K - k_j.
\end{equation}
It follows from the definition of $m(j)$ (given by \eqref{eq:mj}) that $m(j)=-j$. Moreover, since $V$ is even (as noted in Section \ref{sec:3.2}), it follows that $V_{-j} =V_j$. Thus we have the formula 
\begin{equation}
    \mu'_\omega = \sum_{j \in \JJ}\omega^jV_{j}.
\end{equation}

In addition, observe that $V$ is invariant under $r$, which permutes the coordinate axes. Consequently, we also have the identities
\begin{equation}
    V_{1,0,0} = V_{0,1,0} = V_{0,0,1}\quad \text{ and } \quad V_{1,1,0} = V_{1,0,1} = V_{0,1,1}.
\end{equation}
It follows that we can rewrite $\mu'_\omega(0)$ as  
\begin{equation}
    \mu'_\omega(0) = V_{0,0,0} +  (\omega_1+\omega_2+\omega_3)V_{1,0,0}+ (\omega_2\omega_3+\omega_1\omega_3+\omega_1\omega_2)V_{1,1,0}  + \omega_1\omega_2\omega_3V_{1,1,1}.
\end{equation}

We then plug  $\omega$ into this formula for each $\omega \in \UU$, which gives us the following:
\begin{align}
    \mu'_{1,1,1}(0) & = V_{0,0,0}+3V_{1,0,0}+3V_{1,1,0}+V_{1,1,1},\\
    \mu'_{-1,1,1}(0) = \mu'_{1,-1,1}(0) = \mu'_{1,1,-1}(0) & = V_{0,0,0}+V_{1,0,0}-V_{1,1,0}-V_{1,1,1}, \\
    \mu'_{1,-1,-1}(0) = \mu'_{-1,1,-1}(0) = \mu'_{-1,-1,1}(0) & = V_{0,0,0}-V_{1,0,0}-V_{1,1,0}+V_{1,1,1},\\
    \mu'_{-1,-1,-1}(0) & = V_{0,0,0}-3V_{1,0,0}+3V_{1,1,0}-V_{1,1,1}.
\end{align}
Note that the set where the right-hand sides of any pair of the above 4 equations are equal describes a hyperplane. Consequently, the set where the right-hand sides of the above four equations fail to be distinct is a union of six hyperplanes. It follows that for $V$ away from a set of codimension 1, the eigenvalue $\|K\|^2$ of $-\Delta$ splits into at least two simple eigenvalues and two eigenvalues of multiplicity at most three.

{\it (2) Lower bound on multiplicity:} Observe that $f_1r = rf_3$, $f_2r = rf_1$, and $f_3r=rf_2$. As a result, if $\phi \in L^2_{K,(-1,1,1)}$ is an eigenvector of $H_z$, then $r_*\phi$ is also an eigenvector of $H_z$ with the same eigenvalue and 
\begin{align}
    (f_1)_*(r_*\phi) &= r_*(f_3)_*\phi = r_*\phi \\
    (f_2)_*(r_*\phi) &= r_*(f_1)_*\phi = -r_*\phi \\
    (f_3)_*(r_*\phi) &= r_*(f_2)_*\phi = r_*\phi.
\end{align}
Hence, $r_*\phi \in  L^2_{K,(1,-1,1)}$, and an identical computation shows that $r^2_*\phi$ is an eigenvector of $H_z$ as well, but in $L^2_{K,(1,1,-1)}$. Therefore $\mu_{(-1,1,1)}(z)$ is an $L^2_K$-eigenvalue with multiplicity at least 3, which together with step (1) implies that its multiplicity is exactly 3. The same argument applied to an eigenvector $\phi$ of $H_z$ in $L^2_{K,(1,-1,-1)}$ shows that $\mu_{(-1,1,1)}(z)$ is an $L^2_K$-eigenvalue with multiplicity 3 as well. Therefore, $H_z$ has two triple $L^2_K$-eigenvalues for all $z \in \mathbb{R}$ away from a discrete set $D_1$, and the corresponding eigenprojectors are analytic on $\mathbb{R}$.

{\it (3) Computation of the characteristic polynomial:} Fix some $z \in \mathbb{R}\setminus D_1$, and let $\phi_1 \in L^2_{K,(-1,1,1)}, \phi_2 \in L^2_{K,(1,-1,1)}$ and $\phi_3 \in L^2_{K,(1,1,-1)}$ be normalized eigenvectors for the eigenvalue $\mu_{(-1,1,1)}(z)$ of $H_z$. The entries of $M(z,\kappa)$ with respect to this basis are given by $-2i\kappa\cdot \langle \phi_j,\nabla \phi_\ell\rangle$. For $j \not = \ell$, $\langle \phi_j,\nabla \phi_\ell\rangle$ is an eigenvector of both $f_j$ and $f_\ell$ with eigenvalue -1 by Lemma \ref{lem:mkentries}, and thus it lies in $\CC e_j \cap \CC e_\ell = \{0\}$. It follows that $M(z,\kappa) = 0$ for all $z$ and $\kappa$, and thus we conclude that 
\begin{equation}
    \mu_{(-1,1,1)}(z,\kappa) = \mu_{(-1,1,1)}(z)+ \cO(|\kappa|^2).
\end{equation}
By Definition \ref{def:1} this means that $(K,\mu_{(-1,1,1)}(z))$ is a 3-fold quadratic point for all $z \in \mathbb{R}\setminus D_1$. The exact same argument shows that $(K,\mu_{(-1,1,1)}(z))$ is a 3-fold quadratic point as well. This completes the proof of Theorem \ref{thm:2} when $\Lambda$ is a simple cubic lattice.

\subsection{Proof of Theorem \ref{thm:2} for the Body-Centered Cubic at \texorpdfstring{$K = (\pi,\pi,\pi)$}{K=(pi,pi,pi)}}\label{sec:4.4}

Let $\Lambda = \Lambda^{BC}$, let $K = (\pi,\pi,\pi)$ and let $(k_1,k_2,k_3)$, $\cB$, $m$, $G_0$, and $\UU$ be the objects listed in the corresponding column of Table \ref{tab:geometry} (i.e. the second of the three columns for the first three rows and the second of the four columns for the remaining rows). We also let $V$ be a $\Lambda$-invariant potential and let $H_z = -\Delta+z V$. Lastly we will again need the generators $r,s,f$, and also the group elements $f_{12},f_{13},f_{23}$ defined in \eqref{eq:elements2}.

{\it (1) Upper bound on multiplicity:} Just as we did in \S \ref{sec:4.3}, we start by computing relations among the Fourier coefficients $V_{m(j)}$ for $j \in \JJ$. To begin, we compute that
\begin{align}
    f_{13}^{-1}K & = K -k_1-k_3,\\
    f_{23}^{-1}K & = K-k_2-k_3,\\
    f_{12}^{-1}K & = K-k_1-k_2-k_3.
\end{align}
Again using the fact that $V$ is even, it follows that for $\omega \in \UU$, 
\begin{equation}
    \mu'_\omega(0) = V_{0,0,0}+\omega_1V_{1,0,1} + \omega_2V_{0,1,1} + \omega_1\omega_2V_{1,1,1}.
\end{equation}

Also note that $V$ being invariant under $r$ implies $V_{1,0,1} = V_{0,1,1}$. In addition, $V_{1,1,1} =V_{1,0,0}$ since
\begin{equation}
    V_{k_1 + k_3} = r_*V_{k_1+ k_3} =V_{r^\top(k_1+ k_3)} =V_{k_2+ k_3}.
\end{equation}
Therefore 
\begin{equation}
    \mu'_\omega(0) = V_{0,0,0}+(\omega_1+ \omega_2 + \omega_1\omega_2)V_{1,1,1}.
\end{equation}

We then plug $\omega$ into this formula for each $\omega \in \UU$ to obtain:
\begin{align}
    \mu'_{1,1}(0) & = V_{0,0,0}+3V_{1,1,1}, \\
    \mu'_{-1,1}(0) = \mu'_{1,-1}(0) =\mu'_{-1,-1}(0) & = V_{0,0,0}-V_{1,1,1}.
\end{align}
The set where the right-hand sides of the above two equations fail to be distinct is a single hyperplane. Therefore we again conclude that for $V$ away from a set of codimension 1, the eigenvalue $\|K\|^2$ of $-\Delta$ splits into at least a simple eigenvalue and an eigenvalue of multiplicity at most three.

{\it (2) Lower bound on multiplicity:} Just as in step (2) of Section \ref{sec:4.3}, observe that $f_{13}r = rf_{23}$ and $f_{23}r = rf_{12}$. As a result, if $\phi$ is an eigenvector of $H_z$ in $L^2_{K,(-1,1)}$, then $r_*\phi$ and $r^2_*\phi$ are again eigenvectors of $H_z$ with the same eigenvalue on $L^2_{K,(1,-1)}$ and $L^2_{K,(-1,-1)}$, respectively. Therefore $H_z$ has a triple $L^2_K$-eigenvalue for all $z\in \mathbb{R}$ away from a discrete set $D_1$, and the corresponding eigenprojector is analytic on $\mathbb{R}$.

{\it (3) Computation of the characteristic polynomial:} Fix some $z \in \mathbb{R}\setminus D_1$, and let $\phi_1 \in  L^2_{K,(-1,1)}$ be a normalized eigenvector of $H_z$ for the eigenvalue $\mu_{(-1,1)}(z)$. Then, as we saw in step (2), $\phi_2 := r_*\phi \in L^2_{K,(1,-1)}$ and $\phi_3 := r_*^2\phi \in L^2_{K,(-1,-1)}$ are eigenvectors of $H_z$ with eigenvalue $\mu_{(-1,1)}(z)$ as well, and thus form a basis for the corresponding eigenspace. The entries of $M(z,\kappa)$ with respect to this basis are given by $-2i\kappa\cdot \langle \phi_j,\nabla \phi_\ell\rangle$, and we also note that $\langle \phi_2,\nabla \phi_3\rangle = \langle r_*\phi_1,\nabla (r_*\phi_1)\rangle = r\langle \phi_1,\nabla \phi_2\rangle$. Therefore, by Lemma \ref{lem:mkentries}, the entries of $M(z,\kappa)$ are entirely determined by $\langle \phi_1,\nabla \phi_2\rangle$ and $\langle \phi_1,\nabla \phi_3\rangle$.

To compute these, note that $f_{12}=f_{13}f_{23}$, and so again by Lemma \ref{lem:mkentries},
\begin{equation}
    f_{12}\langle \phi_1,\nabla \phi_2\rangle = (-1)^2\langle \phi_1,\nabla \phi_2\rangle = \langle \phi_1,\nabla \phi_2\rangle.
\end{equation}
Hence, $\langle \phi_1,\nabla \phi_2\rangle$ is an eigenvector of $f_{12}$ with eigenvalue 1, and is therefore of the form $\alpha e_3$ for some $\alpha \in \CC$. An identical argument applied to $\langle \phi_1,\nabla \phi_3\rangle$ and the element $f_{13} \in G$ implies that $\langle \phi_1,\nabla \phi_3\rangle = \beta e_2$ for some $\beta \in \CC$.

Note that $\alpha = -\overline{\beta}$:
\begin{align}
    \alpha = e_3\cdot \langle \phi_1,\nabla \phi_2\rangle &  = re_2\cdot \langle \phi_1,\nabla \phi_2\rangle = e_2\cdot r^2_*\langle r^2_* \phi_1,r^2_*\nabla \phi_2\rangle \\
    & = e_2 \cdot \langle \phi_3,\nabla \phi_1\rangle = -e_2 \cdot \overline{\langle \phi_1,\nabla \phi_3\rangle} = -\overline{\beta}.
\end{align}

Thus, with respect to the basis $\phi_1,\phi_2,\phi_3$,  $M(z,\kappa)$ is given by: 
\begin{equation}
    M(z,\kappa) = -2i \begin{pmatrix}[1.5] 0 & \alpha\kappa_3 & -\overline{\alpha}\kappa_2 \\
    -\overline{\alpha}\kappa_3 & 0 & \alpha\kappa_1 \\
    \alpha\kappa_2 & -\overline{\alpha}\kappa_1 & 0 \\
    \end{pmatrix}.
\end{equation}
A quick computation then gives the characteristic polynomial of $M(z,\kappa)$ (as a polynomial in $\mu$):
\begin{equation}
    \mu^3 - 4|\alpha|^2\|\kappa\|^2\mu+16\Im(\alpha^3)\kappa_1\kappa_2\kappa_3.
\end{equation}
It follows that the eigenvalues of $M(z,\kappa)$ will be simple away from $\kappa = 0$ as long as the coefficients $|\alpha|^2$ and $\Im (\alpha^3)$ are nonzero. 

By Lemma $\ref{lem:mkanalytic}$, the coefficients $|\alpha|^2$ and $\Im (\alpha^3)$ are analytic in $z$, and therefore will be nonzero away from a discrete set if they are nonzero for $z$ sufficiently small. However, by Lemma \ref{lem:evectors} we can assume that, for $z$ sufficiently small, $\phi_1,\phi_2$ are given by:
\begin{align}
    \phi_1(x;z) & = \frac{1}{2}\left(e^{iK\cdot x} - e^{if_{13}K\cdot x}+ e^{if_{23}K\cdot x}  - e^{if_{12}K\cdot x}\right)+\cO(|z|),\\
    \phi_2(x;z) & =\frac{1}{2}\left(
    e^{iK\cdot x} + e^{if_{13}K\cdot x}- e^{if_{23}K\cdot x}  - e^{if_{12}K\cdot x}\right)+\cO(|z|).
\end{align}
It follows that, for $z$ small, 
\begin{align}
\alpha = 
    e_3\cdot \langle \phi_1, \nabla \phi_2 \rangle & = \frac{i}{4}e_3\cdot \left(K - f_{13}K - f_{13}K +f_{12}K \right) +\cO(|z|^2) = \pi i +\cO(|z|^2).
\end{align}
Therefore both $|\alpha|^2$ and $\Im (\alpha^3)$ are nonzero for $z$ sufficiently small, and thus remain nonzero for all $z \in U$ away from another discrete set $D_2$. It follows that they are nonzero on $\mathbb{R} \setminus D_2$, and so by Definition \ref{def:1}, we conclude that $(K,\mu_{(-1,1)}(z))$ is a 3-fold Weyl point for all  $z \in \mathbb{R}\setminus (D_1 \cup D_2)$.

\subsection{Proof of Theorem \ref{thm:2} for the Body-Centered Cubic at \texorpdfstring{$K = (0,0,2\pi)$}{K=(0,0,2pi)}}\label{sec:4.5}

Let $\Lambda = \Lambda^{BC}$, let $K = (0,0,2\pi)$ and let $(k_1,k_2,k_3)$, $\cB$, $m$, $G_0$, and $\UU$ be the objects listed in the corresponding column of Table \ref{tab:geometry} (i.e. the second of the three columns for the first three rows and the third of the four columns for the remaining rows). We also let $V$ be a $\Lambda$-invariant potential and let $H_z = -\Delta+z V$. Lastly we will need the group element $s_0$ defined in \eqref{eq:elements3}.

{\it (1) Upper bound on multiplicity:} We start by computing relations among the Fourier coefficients $V_{m(j)}$ for $j \in \JJ$. In particular, we compute that
\begin{align}
    r^{-1}K  & = K + k_2, &  r^{-1}fK & = K - k_2 - k_3, \\
    r^{-2}K  & = K + k_1, & r^{-2}fK & = K - k_1- k_3 \\
    f^{-1}K  & = K - k_3. & & 
\end{align}
Again using the fact that $V$ is even, it follows that for $\omega \in \UU$, 
\begin{equation}
    \mu'_\omega(0) = V_{0,0,0}+\omega_1V_{0,1,0} +\omega_1^2V_{1,0,0} + \omega_2V_{0,0,1} + \omega_1\omega_2V_{0,1,1} + \omega_1^2\omega_2V_{1,0,1}.
\end{equation}
We also have that
\begin{equation}
    V_{k_1} = (f_1)_* V_{k_1} =V_{f_1^\top k_1} =V_{k_1 + k_3},
\end{equation}
which tells us that $V_{1,0,0} = V_{1,0,1}$. Furthermore, since $V$ is invariant under $r$, we obtain $V_{1,0,0} = V_{0,1,0}= V_{1,0,1} =  V_{0,1,1}$. Thus we can rewrite $\mu'_\omega(0)$ as
\begin{equation}
    \mu'_\omega(0) = V_{0,0,0}+\big(\omega_1 +\omega_1^2  + \omega_1\omega_2 + \omega_1^2\omega_2\big)V_{1,0,0}+ \omega_2V_{0,0,1}.
\end{equation}

We then plug $\omega$ into this formula for each $\omega \in \UU$ to obtain:
\begin{align}
    \mu'_{1,1}(0) & = V_{0,0,0}+4V_{1,0,0}+V_{0,0,1},\\
    \mu'_{\zeta_3,1}(0) = \mu'_{\overline{\zeta_3},1}(0)  & = V_{0,0,0}-2V_{1,0,0}+V_{0,0,1},\\
    \mu'_{1,-1}(0) =\mu'_{\zeta_3,-1}(0) = \mu'_{\overline{\zeta_3},-1}(0)  & = V_{0,0,0}-V_{0,0,1}.
\end{align}
The set where the right-hand sides of the above three equations fail to be distinct is a union of three hyperplanes. Therefore we conclude that for $V$ away from a set of codimension 1, the eigenvalue $\|K\|^2$ of $-\Delta$ splits into at least a simple eigenvalue, an eigenvalue of multiplicity at most two, and an eigenvalue of multiplicity at most three.

{\it (2) Lower bound on multiplicity:} Let $T$ be the conjugate-parity operator: $Tf(x) = \overline{f(-x)}$, and let $\phi_1$ be a normalized eigenvector of $H_z$ in $L^2_{K,(\zeta_3,1)}$ for $\mu_{(\zeta_3,1)}(z)$. For $z \in \mathbb{R}$, $\phi_2 = T\phi_1 \in L^2_K$ is also an eigenvector of $H_z$ for the same eigenvalue since $V$ is even and real. In addition, observe that
\begin{align}
    r_* \phi_2(x) & = \overline{\phi_1(-r^\top x)} =   \overline{\zeta_3 \phi_1(-x)} = \overline{\zeta_3} \phi_2(x) \\
    f_* \phi_2(x) & = \overline{\phi_1(-f^\top x)} = \overline{\phi_1(-x)} = \phi_2(x).
\end{align}
Therefore $\phi_2 \in L^2_{K,(\overline{\zeta_3},1)}$.

To give a lower bound on the multiplicity of $\mu_{(1,-1)}(z)$, let $L^2_{K,-1} = \ker_{L^2_K}(f_*+1)$, i.e. the space of odd functions in $L^2_K$. By construction of the subspaces $L^2_{K,\omega}$, it follows that 
\begin{equation}
    L^2_{K,-1} = L^2_{K,(1,-1)}\oplus L^2_{K,(\zeta_3,-1)} \oplus L^2_{K,(\overline{\zeta_3},-1)}.
\end{equation}
Also note that $s_0K = K-k_3$, and so $s_0$ is $K$-invariant, which together with the fact that $s_0$ commutes with $f$, implies that $(s_0)_*$ is a well-defined operator on $L^2_{K,-1}$.

Now let 
\begin{align}
    \psi_1(x) & =  \sin(2\pi x_1) + i\sin (2\pi x_2), \\
    \psi_2(x) & =  \sin(2\pi x_1) - i\sin (2\pi x_2), \\
    \psi_3(x) & = \sqrt{2}\sin(2\pi x_3).
\end{align}
A quick computation confirms that $\psi_j \in L^2_{K,-1}$, $\|\psi_j\| = 1$, and $-\Delta \psi_j = (2\pi)^2\psi_j = \|K\|^2\psi_j$ for $j=1,2,3$. In addition, note that $\sigma((s_0)_*) = U_4$, the fourth roots of unity. If we let $\cE_\omega = L^2_{K,-1} \cap\ker_{L^2_{K,-1}}((s_0)_* - \omega)$ for $\omega \in U_4$, then $\psi_1 \in \cE_{-i}$, $\psi_2 \in \cE_i$ and $\phi_3 \in \cE_{-1}$. Therefore $\|K\|^2$ is a simple eigenvalue of $-\Delta$ on $\cE_\omega$ for $\omega \in \{ -i,i,-1\}$, and so by Corollary \ref{cor:simple} it follows that, for sufficiently small $z \in \mathbb{R}$, there is a unique eigenvalue $\lambda(z)$ of $H_z$ on $\cE_{-i}$ satisfying $\lambda(z) = \|K\|^2 + \cO(|z|)$. Let $\Psi_z \in \cE_{-i}$ denote the normalized eigenvector corresponding to $\lambda_z$, and let $\Phi_z \in L^2_{K,(1,-1)}$ denote a normalized eigenvector corresponding to $\mu_{(1,-1)}(z)$, so that 
\begin{equation}\label{eq:evectorperturb}
    \Psi_z = \psi_1 + \cO(|z|),\quad \text{ and } \quad \Phi_z = \phi+\cO(|z|),
\end{equation}
where $\phi$ is defined by \eqref{eq:evectors} with $\omega = (\zeta_3,-1)$.

Now assume for contradiction that $\mu_{(1,-1)}(z)$ has multiplicity strictly less than 3 for $z \not = 0$. Then $T\Psi_z$ and $T\Phi_z$ are also eigenvectors corresponding to $\lambda(z)$ and $\mu_{(1,-1)}$, respectively, and so both of these eigenvalues must have multiplicity at least 2. Since we are assuming that the multiplicity of $\mu_{(1,-1)}$ is strictly less than 3, we deduce that these eigenvalues must in fact be equal, and their multiplicity is exactly 2. 

As a result, for all $z\in \mathbb{R}$, nonzero and sufficiently small,
$\spa(\Psi_z,T\Psi_z) = \spa(\Phi_z,T\Phi_z)$. Therefore we can express $\Phi_z$ with respect to $\Psi_z$ and $T\Psi_z$ as 
\begin{equation}\label{eq:zlimit}
    \Phi_z = \langle \Phi_z,\Psi_z\rangle \Psi_z + \langle \Phi_z,T\Psi_z\rangle T\Psi_z,
\end{equation}
where we have used the fact that $T\Psi_z \in \cE_i$, and is therefore orthogonal to $\Psi_z$. Taking the limit of both sides of \eqref{eq:zlimit} as $z \rightarrow 0$, we obtain
\begin{equation}
    \phi = \langle \phi,\psi_1\rangle \psi_1 + \langle \phi,T\psi_1\rangle T\psi_1.
\end{equation}

This is not possible: by \eqref{eq:evectors}, the left-hand side depends on $x_3$, while the right-hand side depends only on $x_1,x_2$. We conclude that $H_z$ has a double and a triple $L^2_K$-eigenvalue for all $z\in \mathbb{R}$ away from a discrete set $D_1$, and the corresponding eigenprojector is analytic on $\mathbb{R}$.

{\it (3) Computation of the characteristic polynomial:} Fix some $z \in U\cap \mathbb{R}$, and let $\phi_1 \in  L^2_{K,(\zeta_3,1)}$, $ \phi_2 \in L^2_{K,(\overline{\zeta_3},1)}$ be normalized eigenvectors for the eigenvalue $\mu_{(\zeta_3,1)}(z)$ of $H_z$. The entries of $M(z,\kappa)$ with respect to this basis are given by $-2i\kappa\cdot \langle \phi_j,\nabla \phi_\ell\rangle$, and by Lemma \ref{lem:mkentries}, the entries of $M(z,\kappa)$ are entirely determined by $\langle \phi_1,\nabla \phi_2\rangle$. However, this same lemma also tells us that $\langle \phi_1,\nabla \phi_2\rangle$ is an eigenvector of $f$ with eigenvalue 1, and therefore must be the zero vector. It follows that $M(z,\kappa)=0$ for all $z$ and $\kappa$, and thus we conclude that $(K,\mu_{(\zeta,1)}(z)$ is a 2-fold quadratic point for all $z \in \mathbb{R}\setminus D_1$. 

Now, let $\phi_1 \in  L^2_{K,(1,-1)}$, $\phi_2 \in  L^2_{K,(\zeta_3,-1)}$, and $ \phi_3 \in L^2_{K,(\overline{\zeta_3},-1)}$ be normalized eigenvectors for the eigenvalue $\mu_{(1,-1)}(z)$ of $H_z$. Then the same argument implies that, for $j,\ell \in \{1,2,3\}$, $j \not \ell$, $\langle \phi_1,\nabla \phi_2\rangle$ is again an eigenvector of $f$ with eigenvalue 1 and therefore must be the zero vector. We thus conclude that $(K,\mu_{(1,-1)}(z)$ is a 3-fold quadratic point for all $z \in \mathbb{R}\setminus D_1$.

\subsection{Proof of Theorem \ref{thm:2} for the Face-Centered Cubic}\label{sec:4.6}

Let $\Lambda = \Lambda^{FC}$, and let $(k_1,k_2,k_3)$, $\cB$, $K$, $m$, $G_0$, and $\UU$ be the corresponding objects listed in the final column of Table \ref{tab:geometry}. We also let $V$ be a $\Lambda$-invariant potential and let $H_z = -\Delta+z V$. Lastly we will again need the group element $s_0$ defined in \eqref{eq:elements3}.

{\it (1) Upper bound on multiplicity:} We start by computing relations among the Fourier coefficients $V_{m(j)}$ for $j \in \JJ$. In particular, we compute that
\begin{align}
    s_0^{-1}K & = K - k_1,\\
    s_0^{-2}K & = K - k_1 -k_2 +k_3,\\
    s_0^{-3}K & = K - k_2.
\end{align}
Again using the fact that $V$ is even and invariant under $r$, it follows that for $\omega \in \UU$, 
\begin{equation}
    \mu'_\omega(0) = V_{0,0,0}+\omega V_{1,0,0} + \omega^2 V_{1,1,-1} + \omega^3V_{0,1,0} = V_{0,0,0}+(\omega+\omega^3) V_{1,0,0} + \omega^2 V_{1,1,-1}.
\end{equation}
We then plug $\omega$ into this formula for each $\omega \in \UU$ to obtain:
\begin{align}
    \mu'_1(0) & = V_{0,0,0}+2 V_{1,0,0} +  V_{1,1,-1},\\
    \mu'_i(0) = \mu'_{-i}(0)  & = V_{0,0,0}-  V_{1,1,-1},\\
    \mu'_{-1}(0)  & =V_{0,0,0}-2 V_{1,0,0} +  V_{1,1,-1}.
\end{align}
The set where the right-hand sides of the above three equations fail to be distinct is a union of three hyperplanes. Therefore we conclude that for $V$ away from a set of codimension 1, the eigenvalue $\|K\|^2$ of $-\Delta$ splits into at least two simple eigenvalues and an eigenvalue of multiplicity at most two.

{\it (2) Lower bound on multiplicity:} Again let $T$ be the conjugate-parity operator and let $\phi_1$ be a normalized eigenvector of $H_z$ in $L^2_{K,i}$. For $z \in \mathbb{R}$, $\phi_2 = T\phi_1 \in L^2_K$ is also an eigenvector of $H_z$ for the same eigenvalue, and
\begin{equation}
    (s_0)_* \phi_2(x) = \overline{\phi_1(-s_0^\top x)}  = \overline{i\phi_1(- x)}  = -i \phi_1(x),
\end{equation}
which implies $\phi_2 \in L^2_{K,-i}$. Therefore $H_z$ has a double $L^2_K$-eigenvalue for all $z\in \mathbb{R}$ away from a discrete set $D_1$, and the corresponding eigenprojector is analytic on $\mathbb{R}$.

{\it (3) Computation of the characteristic polynomial:} Fix some $z \in \mathbb{R} \setminus D_1$, and let $\phi_1 \in  L^2_{K,i}$, $\phi_2 \in  L^2_{K,-i}$ be normalized eigenvectors for the eigenvalue $\mu_i(z)$ of $H_z$. The entries of $M(z,\kappa)$ with respect to this basis are given by $-2i\kappa\cdot \langle \phi_j,\nabla \phi_\ell\rangle$, and by Lemma \ref{lem:mkentries}, the entries of $M(z,\kappa)$ are entirely determined by $\langle \phi_1,\nabla \phi_2\rangle$. This same lemma also tells us that $\langle \phi_1,\nabla \phi_2\rangle$ is an eigenvector of $s_0$ with eigenvalue -1, and is therefore of the form $\alpha e_3$ for some $\alpha \in \CC$.

Thus, with respect to the basis $\phi_1,\phi_2$, $M(z,\kappa)$ is given by 
\begin{equation}
    M(\kappa) = -2i \begin{pmatrix} 0 & \alpha\kappa_3  \\
    -\overline{\alpha}\kappa_3 & 0  \\
    \end{pmatrix},
\end{equation}
and its characteristic polynomial is  $\mu^2 - 4|\alpha|^2\kappa_3^2$. It follows that the eigenvalues of $M(z,\kappa)$ can be written as $\lambda(z,\kappa) = \pm 2|\alpha \kappa_3|$, and therefore these eigenvalues will be simple for $\kappa_3 \not =0$ as long as $\alpha \not = 0$. 

By Lemma $\ref{lem:mkanalytic}$, the coefficient $|\alpha|^2$ is analytic in $z$, and therefore will be nonzero away from a discrete set if it is nonzero for $z$ sufficiently small. However, by Lemma \ref{lem:evectors} we can assume that, for $z$ sufficiently small, $\phi_1,\phi_2$ are given by:
\begin{align}
    \phi_1(x) & = \frac{1}{2}\left(e^{iK\cdot x} +i e^{iS_0^3K\cdot x}- e^{is_0^2K\cdot x}  -i e^{is_0K\cdot x}\right) +\cO(|z|), \\
    \phi_2(x) & = \frac{1}{2}\left(e^{iK\cdot x} -i e^{is_0^3K\cdot x}- e^{is_0^2K\cdot x}  +i e^{is_0K\cdot x}\right)+\cO(|z|).
\end{align}
It follows that, for $z$ small, 
\begin{align}
\alpha = 
    e_3\cdot \langle \phi_1, \nabla \phi_2 \rangle & = \frac{i}{4}e_3\cdot\left(K - s_0^3K +s_0^2K-s_0K \right)+\cO(|z|^2) = \pi i+\cO(|z|^2).
\end{align}
Therefore $|\alpha|^2$ is nonzero for $z$ sufficiently small, and thus remains nonzero on $U$ away from another discrete set $D_2$. It follows that $|\alpha|^2$ is nonzero on $\mathbb{R}\setminus D_2$, and so by Definition \ref{def:1} we conclude that $(K,\mu_i(z))$ is a basin point for all $z \in \mathbb{R}\setminus (D_1 \cup D_2)$.

\begin{appendix}

\section{Appendix: Spectral Theory of the Laplacian on \texorpdfstring{$L^2_K$}{L2K}}\label{app:A}
In this appendix, we compute the spectrum of of $-\Delta$ on $L^2_K$ and show that the cardinality of the set $[K]$, defined by \eqref{eq:equivclass}, is equal to the multiplicity of $\|K\|^2$ as an eigenvalue. We then use this to compute some bounds on the multiplicity of $\|K\|^2$ when $K \in \cB$, and lastly show that if $K$ is a vertex of $\cB$, then $[K]$ is a subset of the vertices of $\cB$.

Fix some lattice $\Lambda$ with basis $v_1,\ldots, v_n$ and reciprocal basis $k_1,\ldots, k_n$, and fix some $K \in \mathbb{R}^n$. Just as we did following \eqref{eq:Fouriercoef}, we also let $mk = m_1k_1 + \cdots m_nk_n$ for $m \in \Z^n$. We then claim that $\phi_m(x) = e^{i(K+mk)\cdot x}$ for $m \in \Z^n$ is an orthonormal basis of eigenvectors for $-\Delta$ on $L^2_K$. Indeed, note that 
\begin{equation}
    -\Delta \phi_m(x) = \|K+mk\|^2\phi_m(x),
\end{equation}
and $(\phi_m)_{m \in \Z^n}$ is the image of the orthonormal basis $(\otimes_{j=1}^ne^{2\pi i m_jx_j})_{m \in \Z^n}$ of $L^2[0,1]^{\otimes n}$ under the unitary map which first sends $\otimes_{j=1}^ne^{2\pi i m_jx_j}$ to $e^{imk\cdot x}\in L^2_0$, and then $e^{imk\cdot x}$ to $\phi_m$ via multiplication by $e^{iK\cdot x}$. Consequently, 
\begin{equation}
    \sigma(-\Delta) = \{\|K+mk\|^2\;:\; m \in \Z^n\}.
\end{equation}
and the multiplicity of an eigenvalue $\mu_m := \|K+mk\|^2$ is given by 
\begin{equation}\label{eq:laplacianmult}
    m_{-\Delta}(\mu_m) = \left|\left\{k' \in K+\Lambda^*\;:\; \|k'\|^2 = \|K+mk\|^2\right\}\right|.
\end{equation}
In particular, this proves \eqref{eq:equivclass}; namely that the the cardinality of the set $[K]$ is equal to the multiplicity of $\mu_0 = \|K\|^2$.

From here, recall that the Floquet--Bloch problem \eqref{eq:floquet} is periodic with respect to the dual lattice $\Lambda^*$, and so we focus our analysis on $K \in \cB$. For such $K$, the minimal eigenvalue of $-\Delta$ on $L^2_{K}$ is then given by $\mu_0$, since by definition of the Brillouin zone $\cB$,
\begin{equation}\label{eq:ineq}
    \|K\|^2 \leq \|K-k'\|^2, \quad \forall k' \in \Lambda^*.
\end{equation}
This also implies that if $K$ is in the interior of $\cB$, then the inequality \eqref{eq:ineq} is in fact strict, and so the eigenvalue $\mu_0$ is simple. Conversely, we expect $K \in \pd \cB$, and in particular the vertices of $\cB$, to correspond to eigenvalues of high multiplicity, as the following proposition demonstrates.

\begin{proposition}\label{prop:brillouinmult}
Let $K\in \cB$, let $\mu_0 = \|K\|^2$, and let $m$ be the number of (hyper)faces of $\cB$ which contain $K$, where $m$ is possibly zero. Then there exist vectors $K_1,\ldots,K_m \in \Lambda^*$ such that $K- K_j$ also lies on $m$ (hyper)faces of $\cB$ and $\|K - K_j\|^2 = \|K\|^2$ for $j=1,\ldots,m$, so that $m_{-\Delta}(\mu_0) \geq m+1$. Furthermore, $m_{-\Delta}(\mu_0) =1$ if and only if $m=0$ and $m_{-\Delta}(\mu_0) = 2$ if and only if $m=1$.
\end{proposition}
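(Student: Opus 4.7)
The plan is to exploit the identification between hyperfaces of $\cB$ containing a point and nonzero lattice vectors equidistant to that point and to the origin. Specifically, for $K \in \cB$ and $\ell \in \Lambda^* \setminus \{0\}$, $K$ lies on the hyperface of $\cB$ supported by the perpendicular bisector of $[0,\ell]$ precisely when $\|K\| = \|K - \ell\|$, and distinct $\ell$ yield distinct supporting hyperplanes. So if $K$ lies on $m$ hyperfaces, there exist $m$ pairwise distinct nonzero vectors $K_1, \ldots, K_m \in \Lambda^*$ with $\|K - K_j\| = \|K\|$, and the points $K, K - K_1, \ldots, K - K_m$ are $m+1$ distinct elements of $[K]$. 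Combined with \eqref{eq:laplacianmult}, which gives $m_{-\Delta}(\mu_0) = |[K]|$, this yields $m_{-\Delta}(\mu_0) \geq m + 1$ at once.

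The core of the argument is to verify that each $K - K_j$ itself lies on $m$ hyperfaces of $\cB$, since this is the symmetry statement in the proposition. First, $K - K_j \in \cB$, because for any $\ell \in \Lambda^*$,
\begin{equation}
    \|(K - K_j) - \ell\| \;=\; \|K - (K_j + \ell)\| \;\geq\; \|K\| \;=\; \|K - K_j\|,
\end{equation}
using $K_j + \ell \in \Lambda^*$ and $K \in \cB$. Second, I would exhibit the $m$ distinct nonzero candidate vectors $\ell \in \{-K_j\} \cup \{K_i - K_j : i \neq j\}$ and check for each that $\|(K - K_j) - \ell\| = \|K - K_j\|$: the case $\ell = -K_j$ reduces immediately to $\|K - K_j\| = \|K\|$, and the case $\ell = K_i - K_j$ reduces to $\|K - K_i\| = \|K - K_j\|$, both equal to $\|K\|$ by hypothesis. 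These $m$ equidistant vectors correspond to $m$ distinct hyperfaces of $\cB$ through $K - K_j$.

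For the equality cases, $m = 0$ means $K$ is interior to $\cB$, so $\|K\| < \|K - \ell\|$ for every nonzero $\ell \in \Lambda^*$, forcing $[K] = \{K\}$ and hence $m_{-\Delta}(\mu_0) = 1$; conversely, $m \geq 1$ forces $m_{-\Delta}(\mu_0) \geq 2$ by the bound above. For $m = 1$ the bound yields $m_{-\Delta}(\mu_0) \geq 2$; conversely $m \geq 2$ produces two distinct nonzero $K_1, K_2$ and hence three distinct members $K, K - K_1, K - K_2$ of $[K]$, contradicting $m_{-\Delta}(\mu_0) = 2$.

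The main obstacle will be to make rigorous the correspondence used in the very first step: that ``$K$ lies on a hyperface of $\cB$ corresponding to $\ell$'' is equivalent to $\|K\| = \|K - \ell\|$ with $\ell \in \Lambda^* \setminus \{0\}$, and that distinct $\ell$ yield distinct genuine $(n-1)$-dimensional hyperfaces. This is ultimately the defining property of the Voronoi cell of $0$ in $\Lambda^*$, but care must be taken to ensure the counts match; in particular, one should argue that a perpendicular bisector passing through $K \in \cB$ always supports a bona fide $(n-1)$-dimensional hyperface locally near $K$, not a lower-dimensional stratum of $\partial \cB$, which follows from the defining inequalities of $\cB$ being saturated on that hyperplane in a neighborhood of $K$.
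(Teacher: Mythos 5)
Your norm computations are exactly the paper's \eqref{eq:brillouin1}--\eqref{eq:brillouin4}, and the direction of the correspondence you actually need for the inequality --- a hyperface of $\cB$ through $K$ is cut out by the bisector of some $[0,\ell]$, $\ell\in\Lambda^*\setminus\{0\}$, whence $\|K-\ell\|=\|K\|$ --- is sound; together with \eqref{eq:laplacianmult} it gives $m_{-\Delta}(\mu_0)\geq m+1$ and settles the $m=0$ case. The genuine gap is the converse direction, which you invoke at the step ``these $m$ equidistant vectors correspond to $m$ distinct hyperfaces of $\cB$ through $K-K_j$'' and which your closing paragraph proposes to prove. That proposed lemma is \emph{false}: a perpendicular bisector passing through a point of $\cB$ need not support an $(n-1)$-dimensional face near that point. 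The simple cubic lattice of Table \ref{tab:geometry} already refutes it: $\cB=[-\pi,\pi]^3$, $K=(\pi,\pi,\pi)$ lies on $m=3$ faces, yet $|[K]|=8$, so seven nonzero $\ell\in\Lambda^*$ satisfy $\|K-\ell\|=\|K\|$; for $\ell=(2\pi,2\pi,2\pi)$ the bisector meets $\cB$ only in the single vertex $K$. (The same example shows the exact equality $m_{-\Delta}(\mu_0)=m+1$ that your correspondence would force is wrong.) The failure hits precisely the vectors $K_i-K_j$ with $i\neq j$: in $\cB=[-\pi,\pi]^2$ with $K=(\pi,\pi)$, $K_1=(2\pi,0)$, $K_2=(0,2\pi)$, the bisector of $K_2-K_1$ meets $\cB$ only at the corner $K-K_1=(-\pi,\pi)$, and the two edges through $K-K_1$ are instead supported by $-K_1$ and by $K_2$. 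So exhibiting equidistant vectors for $K-K_j$ does not, by itself, count hyperfaces. The paper's device is to \emph{transport a known facet}: it argues that translation by $-K_j$ carries the hyperface $P_{0\ell}\cap\cB$ (an $(n-1)$-dimensional set by hypothesis on $K$) onto $P_{j\ell}\cap\cB$, forcing the latter to be a hyperface; some argument of this type, rather than a pointwise bisector criterion, is needed to close your second step.

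A smaller but real omission: your treatment of the $m=1$ equivalence only proves $m_{-\Delta}(\mu_0)=2\Rightarrow m=1$. The converse $m=1\Rightarrow m_{-\Delta}(\mu_0)=2$ requires ruling out $m_{-\Delta}(\mu_0)\geq 3$ when $m=1$, i.e., showing that two distinct nonzero equidistant vectors force $K$ onto at least two hyperfaces. This is true --- two distinct supporting hyperplanes through $K$ place $K$ on a face of $\cB$ of codimension at least $2$, which is contained in at least two facets of the polytope --- but it is again the delicate direction of the correspondence and must be argued, not read off from the bisector criterion.
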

\begin{proof}
Let $K \in \cB$ such that $K$ lies on $m$ (hyper)faces of $\cB$ for some non-negative integer $m$. Then if $m > 0$, there exist vectors $K_1,\ldots,K_m \in \Lambda^*$ such that $K$ lies on the (hyper)planes defined by $x \cdot K_j  =  \tfrac{1}{2}\|K_j\|^2$ for $j=1,\ldots,m$, where each of these (hyper)planes intersected with $\cB$ is precisely one of the $m$ (hyper)faces containing $K$. Then for $j=1,\ldots,m$
\begin{align}\label{eq:brillouin1}
    \|K - K_j\|^2 & = \|K\|^2 - 2K\cdot  K_j + \|K_j\|^2 \\
    & = \|K\|^2 - \|K_j\|^2 + \|K_j\|^2\\
    & = \|K\|^2.
\end{align}
Therefore it follows from \eqref{eq:laplacianmult} that $\mu_0$ has multiplicity of at least $m+1$. 

To prove that $K - K_j$ lies on $m$ (hyper)faces of $\cB$, observe that since $K \in \cB$, it follows that,
\begin{equation}\label{eq:brillouin2}
    \|K -K_j\|^2 = \|K\|^2 \leq \|(K - K_j)-K\|^2, \quad \forall K \in \Lambda^*,
\end{equation}
which implies $K - K_j \in \cB$. Furthermore, we have that
\begin{equation}\label{eq:brillouin3}
    (K-K_j)\cdot (-K_j) = -K\cdot K_j+ \|K_j\|^2 = -\frac{1}{2}\|K_j\|^2 - \|K_j\|^2 = \frac{1}{2}\|K_j\|^2, 
\end{equation}
and for all $\ell \not = j$,
\begin{align}\label{eq:brillouin4}
    (K-K_j)\cdot (K_\ell-K_j) & = K\cdot K_\ell - K\cdot K_j - K_j\cdot K_\ell + \|K_j\|^2\\
    & = \frac{1}{2}\|K_\ell\|^2 - \frac{1}{2}\|K_j\|^2 - K_j\cdot K_\ell + \|K_j\|^2 \\
    & = \frac{1}{2}\left(\|K_\ell\|^2 - 2K_j\cdot K_\ell + \|K_j\|^2\right) \\
    & = \frac{1}{2}\|K_\ell-K_j\|^2.
\end{align}
Therefore $K-K_j$ lies on the $m$ (hyper)planes defined by $x \cdot(- K_j)  =  \tfrac{1}{2}\|K_j\|^2$ and $x \cdot (K_j-K_\ell)  =  \tfrac{1}{2}\|K_j-K_\ell\|^2$ for $\ell \not = j$.

We now seek to show that each of these (hyper)planes defines a (hyper)face of $\cB$. To start, for $j = 0,\ldots, m$ and $\ell = 1,\ldots,m$, let 
\begin{equation}
    P_{j\ell} = \begin{cases} \{x \in \mathbb{R}^n \;:\; x \cdot K_\ell  =  \tfrac{1}{2}\|K_\ell\|^2\} & j = 0 \\
    \{x \in \mathbb{R}^n \;:\; -x \cdot K_j  =  \tfrac{1}{2}\|K_j\|^2\} & j \not = 0, \ell = j \\
    \{x \in \mathbb{R}^n \;:\; x \cdot (K_\ell - K_j)  =  \tfrac{1}{2}\|K_j- K_\ell\|^2\} & j \not = 0, \ell \not = j, 
    \end{cases}
\end{equation}
and suppose $k' \in P_{0\ell} \cap \cB$ for some $\ell$. Then the same computations as in \eqref{eq:brillouin1}-\eqref{eq:brillouin2}, but with $K$ replaced with $k'$, imply that $k'-K_j \in \cB$ for $j=1,\ldots,m$. Similarly, \eqref{eq:brillouin3} with $K$ replaced with $k'$ implies $k' - K_j \in P_{jj}$, and \eqref{eq:brillouin4} with $K$ replaced with $k$ implies $k' - K_j \in P_{j\ell}$ for $j\not = \ell,0$. As a result, for $j=1,\ldots, m$,
\begin{equation}
     (P_{0\ell} \cap \cB) -K_j = P_{j\ell} \cap \cB.
\end{equation}
By construction though, $P_{0\ell} \cap \cB$ is a (hyper)face of $\cB$, and since $P_{j\ell} \cap \cB$ is an isometric set and must be contained in the boundary of $\cB$, it follows that $P_{j\ell} \cap \cB$ is in fact a (hyper)face of $\cB$ as well.

For the second part of the proposition statement, observe that it suffices to prove that $m_{-\Delta}(\mu_0) \leq m+1$ when $m=0,1$. However, if $m=0$ then this means that $K$ lies on zero (hyper)faces, and therefore must be in the interior of $\cB$. We have already seen that in this case the eigenvalue $\mu_0 = \|K_0\|^2$ is simple, and thus $m_{-\Delta}(\mu_0) =1$, as desired. Now assume that $m=1$, so that $K$ lies on a single (hyper)face of $\cB$, and let $m' =m_{-\Delta}(\mu_0)$. Then by again using \eqref{eq:laplacianmult}, we deduce that there exist vectors $K_1,\ldots,K_{m'} \in \Lambda^*$ such that $\|K - K_j\|^2 = \|K\|^2$ for $j=1,\ldots,m'$. As a result, \eqref{eq:brillouin4} implies that $K\cdot K_j  =  \tfrac{1}{2}\|K_j\|^2$ for $j=1,\ldots,m'$, and so $K$ lies on the $m'$ distinct (hyper)planes defined by $x \cdot K_j  =  \tfrac{1}{2}\|K_j\|^2$ for $j=1,\ldots,m'$. However, since $K$ lies on a single (hyper)face of $\cB$, this implies $K$ lies on exactly one of these (hyper)planes. Therefore $m' = m_{-\Delta}(\mu_0)=2$, as claimed.
\end{proof}

\begin{proposition}\label{cor:brillouinvert}
Let $V(\cB)$ denote the vertices of $\cB$ and let $K \in V(\cB)$. Then
\begin{equation}
   [K] =  V(\cB) \cap (K_0 + \Lambda^*).
\end{equation}
\end{proposition}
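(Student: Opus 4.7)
The plan is to prove the two inclusions separately; the reverse one is essentially immediate from the defining property of the Brillouin zone, while the forward inclusion $[K] \subseteq V(\cB)$ contains the real content and relies on Proposition \ref{prop:brillouinmult}.

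For $V(\cB) \cap (K + \Lambda^*) \subseteq [K]$, I would take any $k \in V(\cB)$ with $k - K \in \Lambda^*$ and apply the Brillouin-zone inequality $||x||^2 \leq ||x - L||^2$ (valid for $x \in \cB$ and $L \in \Lambda^*$) twice: at $x = K$ with $L = K - k$, and at $x = k$ with $L = k - K$. These give $||K|| \leq ||k||$ and $||k|| \leq ||K||$, so $||k|| = ||K||$ and $k \in [K]$.

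For the forward inclusion, I would start with $k \in [K]$ and write $k = K - K'$ for some $K' \in \Lambda^*$. The case $K' = 0$ is trivial since then $k = K \in V(\cB)$ by assumption, so assume $K' \neq 0$. Squaring $||K|| = ||K - K'||$ yields $2K \cdot K' = ||K'||^2$, placing $K$ on the perpendicular bisector of $0$ and $K'$. Because $K$ itself belongs to $\cB$, this bisector meets $\cB$ in a non-empty face, and $K'$ must therefore be one of the vectors $K_1, \ldots, K_m \in \Lambda^*$ labeling the hyperfaces of $\cB$ through $K$ in the notation of Proposition \ref{prop:brillouinmult}; say $K' = K_j$.

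The proof of Proposition \ref{prop:brillouinmult} already produces $m$ hyperfaces of $\cB$ containing $K - K_j$, with outward normals $-K_j$ and $K_\ell - K_j$ for $\ell \neq j$. Since $K$ is a vertex, the normals $K_1, \ldots, K_m$ span $\RR^n$, and the one-line identity $\spa(\{-K_j\} \cup \{K_\ell - K_j : \ell \neq j\}) = \spa(K_1, \ldots, K_m)$---which follows from $-(-K_j) = K_j$ and $(K_\ell - K_j) + K_j = K_\ell$---shows that $K - K_j = k$ lies on at least $n$ hyperfaces of $\cB$ with linearly independent normals. Hence $k$ is a vertex of $\cB$. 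I do not anticipate a serious obstacle; the only subtlety is identifying $K'$ with one of the labeled $K_j$'s, and this is automatic because $K$ itself lies on the relevant bisector hyperplane.
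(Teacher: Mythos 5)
Your reverse inclusion is correct: two applications of the defining inequality of $\cB$ give $||K||\le ||k||\le ||K||$, and this is indeed immediate. The forward inclusion, however, has a genuine gap at the step where you assert that $K'$ ``must be one of the vectors $K_1,\ldots,K_m$ labeling the hyperfaces of $\cB$ through $K$.'' Knowing that $K$ lies on the bisector hyperplane $H_{K'}=\{x: x\cdot K'=\tfrac12||K'||^2\}$ only tells you that $H_{K'}\cap\cB$ is a nonempty face of $\cB$; it need not be a \emph{hyper}face, so $K'$ need not occur among the facet normals $K_j$. This already fails for the simple cubic lattice of \S\ref{sec:4.3}: with $\Lambda^*=2\pi\Z^3$, $\cB=[-\pi,\pi]^3$ and $K=(\pi,\pi,\pi)$, the vector $K'=(2\pi,2\pi,0)\in\Lambda^*$ satisfies $||K-K'||=||K||$, but its bisector $\{x_1+x_2=2\pi\}$ meets $\cB$ only in an edge, while the facet normals at $K$ are $(2\pi,0,0)$, $(0,2\pi,0)$, $(0,0,2\pi)$. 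Since your whole argument that $k=K-K'$ lies on $n$ hyperfaces with independent normals is predicated on $K'=K_j$, it does not cover such $K'$, even though the conclusion (here, that $(-\pi,-\pi,\pi)$ is a vertex) is still true. (The rest of your forward argument --- that the normals $-K_j$ and $K_\ell-K_j$ span the same space as $K_1,\ldots,K_m$, which is all of $\RR^n$ because $K$ is a vertex --- is fine once one knows these are genuinely facet normals at $k$.)

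The gap can be closed without realizing $K'$ as a facet normal. Since $\cB\subset\{x: x\cdot K'\le\tfrac12||K'||^2\}$, the hyperplane $H_{K'}$ supports $\cB$, so $F:=H_{K'}\cap\cB$ is a face of $\cB$ containing $K$. The computation \eqref{eq:brillouin2} (with $K_j$ replaced by $K'$) shows that translation by $-K'$ maps $F$ into $\cB$, and by symmetry it maps $F$ bijectively onto the face $H_{-K'}\cap\cB$. A vertex of $\cB$ lying in $F$ is a vertex of the polytope $F$, an affine isomorphism carries vertices of $F$ to vertices of $F-K'$, and vertices of a face of $\cB$ are vertices of $\cB$; hence $K-K'\in V(\cB)$. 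This is closer to what the paper does: rather than identifying $k'$ with a facet normal, it reruns the translation argument of Proposition~\ref{prop:brillouinmult} with $k'$ in place of $K_j$ to conclude that $K-k'$ lies on as many hyperfaces as $K$ does.
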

\begin{proof}
Let $K \in V(\cB)$ and let $k' \in \Lambda^*$; it then suffices to prove that $\|K - k'\|^2 = \|K\|^2$ if and only if $K - k' \in V(\cB)$. However, \eqref{eq:brillouin1} implies that $\|K - k'\|^2 = \|K\|^2$ if and only if $K \cdot k' = \tfrac{1}{2}\|k'\|^2$, and so by the proof of Proposition \ref{prop:brillouinmult}, $K-k'$ lies on the same number of (hyper)faces of $\cB$ as $K$ does. Together with the fact that $K \in V(\cB)$, this implies $K - k' \in V(\cB)$ as well.
\end{proof}

\end{appendix}

\printbibliography[heading=bibintoc]

\end{document}